\numberwithin{equation}{subsection}
\newcommand{\ra}{\rightarrow}
\newcommand{\LRa}{\Longrightarrow}
\newcommand{\lra}{\longrightarrow}
\newcommand{\p}{\prime}
\newcommand{\pt}{\partial}
\newcommand{\eset }{\emptyset}
\newcommand{\al}{\alpha}
\newcommand{\Om}{\Omega}
\newcommand{\om}{\omega}
\newcommand{\Gam}{\Gamma}
\newcommand{\s}{\sigma}
\newcommand{\vp}{\varphi}
\newcommand{\lam}{\lambda}
\newcommand{\q}{\theta}
\newcommand{\dt}{\delta}
\newcommand{\Zbb}{\mathbb{Z}}
\newcommand{\Pbb}{\mathbb{P}}
\newcommand{\Cbb}{\mathbb{C}}
\theoremstyle{plain} 
\newtheorem{THM}{Theorem}[section]
\newtheorem{DEF}[THM]{Definition}
\newtheorem{EX}[THM]{Example}
\newtheorem{CON}[THM]{Construction}
\newtheorem{CONJ}[THM]{Conjecture}
\newtheorem{QUE}[THM]{Question}
\newtheorem{ILL}[THM]{Illustration}
\newtheorem{PROP}[THM]{Proposition}
\newtheorem{LEM}[THM]{Lemma}
\newtheorem{COR}[THM]{Corollary}
\newtheorem{REM}[THM]{Remark}
\newtheorem*{THMSD}{Serre Duality}
\newtheorem*{THMSA}{Serre's Theorem A}
\newcommand{\bt}{\bullet}
\newcommand{\ev}{\mathrm{ev}}
\newcommand{\odd}{\mathrm{odd}}
\newcommand{\img}{\mathrm{im}}
\newcommand{\Vc}{\mathcal{V}}
\newcommand{\Uc}{\mathcal{U}}
\newcommand{\Wc}{\mathcal{W}}
\newcommand{\Oc}{\mathcal{O}}
\newcommand{\red}{\mathrm{red}}
\newcommand{\Der}{\mathrm{Der}}
\newcommand{\Jc}{\mathcal{J}}
\newcommand{\Cc}{\mathcal{C}}
\newcommand{\Gc}{\mathcal{G}}
\newcommand{\Fc}{\mathcal{F}}
\newcommand{\Xfr}{\mathfrak{X}}
\newcommand{\Ec}{\mathcal{E}}
\newcommand{\Ufr}{\mathfrak{U}}
\newcommand{\Tfr}{\mathfrak{T}}
\newcommand{\Mfr}{\mathfrak{M}}
\newcommand{\Mcl}{\mathcal{M}}
\definecolor{airforceblue}{rgb}{0.36, 0.54, 0.66}
\definecolor{burgundy}{rgb}{0.5, 0.0, 0.13}
\definecolor{majorelleblue}{rgb}{0.38, 0.31, 0.86}
\definecolor{darkblue}{rgb}{0.0, 0.0, 0.55}
\title{
Obstructed Thickenings and Supermanifolds
\\}
\author{\small Kowshik Bettadapura}
\date{}
\begin{document}
\maketitle

\onehalfspacing

\begin{abstract}
\noindent 
Associated to any supermanifold is a filtration by spaces, referred to as \emph{thickenings}. It is the objective of this article to study them up to a certain equivalence and then up to isomorphism in the complex-analytic setting. We study them from two points of view: (1) as structures embedded in supermanifolds and (2) abstractly. Throughout, we will be guided by the goal to clarify and address the question: \emph{when does a given thickening embed in a supermanifold?} Such a question was, in essence, first studied by Eastwood and LeBrun. In this article we begin with a pedagogical account of their study, after which we further study thickenings in supergeometry and present a classification of thickenings of a given order.  As a complement to our study, we comment on the moduli problem for complex supermanifolds and consider the analogous problem for thickenings. Finally, to illustrate the ideas in this article, we conclude by describing some obstructed thickenings of the complex projective plane. 
\\\\
\emph{Mathematics Subject Classification}. 14D20, 32C11, 58D27.
\\
\emph{Keywords}. Complex supermanifolds, obstruction theory, moduli problems.
\end{abstract}

\setcounter{tocdepth}{1}
\tableofcontents

\section{Introduction}

\noindent
Supermanifolds are objects which have arisen from considerations of supersymmetry in physics. In the context of superstring theory, super Riemann surfaces and their moduli are of particular interest, for it is on the moduli space of super Riemann surfaces where one calculates scattering amplitudes. The methods for undertaking these calculations are sensitive to the `obstruction theory' of the moduli space, which was the subject of study by Donagi and Witten in \cite{DW1, DW2}. Hence, for at least this reason, one can find motivation for studying the obstruction theory of supermanifolds more generally. 
\\

\noindent
There are a number articles in the supergeometry literature which address obstruction theory specifically. We reference a few here. In \cite{ONISHNS, ONISHNSCOT} one finds explicit constructions of complex supermanifolds. Obstruction theory here is studied by reference to obstruction classes. The notion of `higher obstruction classes', is introduced in \cite{BER, DW1} and studied further in \cite{BETTHIGHOBS}. Obstruction theory related to embeddings of supermanifolds are studied in \cite{LEBRUN, BETTEMB, NOJAEMB}; and for varieties in projective superspace in \cite{BETTVAR}. For other recent and exciting developments in complex supergeometry related to obstruction theory, see \cite{NOJACY, NOJAONESCY, NOJAPI}.   
\\

\noindent
In this article we study obstruction theory with the view to further understand the classification of supermanifolds. This problem, of classifying supermanifolds, was itself initiated by Batchelor in \cite{BAT} where a classification is obtained in the smooth setting. In the complex-analytic setting, the analogous problem becomes more subtle and inroads into the classification problem are made in \cite{GREEN, YMAN, ONISHCLASS}. The moduli variety of complex supermanifolds was constructed and studied by Onishchik in a particular instance in \cite{ONISHMOD} and then in more generality in \cite{ONISHCLASS}. We review these studies here and, in line with the overall theme of this article, propose that this variety will admit a filtration. 
\\

\noindent The primary focus of this article however is on studying the obstruction theory of a more fundamental construct than that of a supermanifold, being that of a \emph{thickening}. These thickenings are more fundamental in the sense that: \emph{associated to any supermanifold will be a thickening, but not necessarily conversely}. In taking inspiration from the study of thickenings in complex geometry by Griffiths in \cite{GRIFF},  analogous notions are developed by Eastwood and LeBrun in \cite{EASTBRU}. There, the rudiments of an obstruction theory incorporating thickenings is laid out and it is this obstruction theory that we investigate further in this article. We begin with a pedagogical review of the work in \cite{EASTBRU} by reference to methods similar to those of Kodaira-Spencer deformation theory, expounded in \cite{KS}. To justify this method, we re-derive one of the main results in \cite{EASTBRU} pertaining to the identification of the space of obstructions to finding thickenings,  leading then to a classification.
\\\\
Regarding the classification problem for supermanifolds, the idea here behind studying thickenings is the following: in order to classify supermanifolds, one might instead try and classify these thickenings and throw out those which will not `give rise' to a supermanifold. Along this vein, our first step is in observing that a thickening will come in two basic flavours: \emph{unobstructed} and \emph{obstructed}. The obstructed thickenings will never give rise to a supermanifold whereas the unobstructed thickenings \emph{may} or \emph{may not}. In this way we can relate the obstruction theory for supermanifolds with that for thickenings. These observations are formulated at the level of cohomology, leading to meaningful decompositions of certain cohomology groups which we describe here. 
\\\\
The main new result in this article is in the construction of obstructed thickenings of the complex projective plane. We infer the existence of obstructed thickenings of $\Cbb\Pbb^2$ equipped with a rank 3, holomorphic vector bundle $E$ which is: (1) split; and (2) non-split but decomposable.

\subsection{Outline and Summary}
This article is divided into eight sections, including this introduction and some concluding remarks. The contents of the other sections are briefly summarised below.
\\\\
In Section \ref{2} the notation is set and some relevant, background theory is provided so as to be called upon as needed throughout this article. In Section \ref{3} we discuss thickenings as developed by Eastwood and LeBrun in \cite{EASTBRU} and set the motivating theme, guiding considerations in the sections to come. Importantly, the notion of an \emph{obstructed thickening} is defined and one of the central results in \cite{EASTBRU} is stated here in Theorem \ref{jnvjknknvkjnvk}, pertaining to the existence of obstructed thickenings. 

Section \ref{4} is devoted to a proof of this result (Theorem \ref{jnvjknknvkjnvk}) by methods similar to those found in Kodaira-Spencer deformation theory, detailed in \cite{KS}. This involves describing a construction of thickenings by means of gluing and working directly with this gluing data. As a prelude, to justify our argument, we provide a `direct' proof of Theorem \ref{jnvjknknvkjnvk} in a particular instance. To then give some further context in which this theorem can sit, we look at the problem of classifying thickenings of a given order, culminating in Theorem \ref{o4993uf0oijwoijfwojpw}.

In Section \ref{5}, we recall the second object in title of this article: \emph{supermanifolds}. After a brief interlude on obstruction theory for supermanifolds, we look to integrate this obstruction theory with the more general theory for thickenings. This is expressed in the form of a decomoposition of certain 1-cohomology groups valued in appropriate sheaves of abelian groups. A precise formulation is given in \eqref{49d93j8fj3fj3p}.

We have so far been studying thickenings up to a certain equivalence that is stronger than isomorphism. In Section \ref 6 we explore moduli problems, which considers these objects---supermanifolds and thickenings, up to isomorphism. The moduli problem for complex supermanifolds was studied by Onishchik in \cite{ONISHCLASS} from an analytic point of view and by Vaintrob in \cite{VAIN} from an algebraic point of view. We propose here a relation between these viewpoints in Conjecture \ref{fkrpokvpokvp4k}. We then consider the moduli variety constructed by Onishchik in \cite{ONISHCLASS} and propose that it admit a filtration in analogy with supermanifolds in Conjecture \ref{dnohf3fh839hf3h}. We conclude by formulating the moduli problem for thickenings and argue that the previous-mentioned decomposition in \eqref{49d93j8fj3fj3p} will hold here, i.e., up to isomorphism. 

In the penultimate section, Section \ref{7}, we present the main new result in this article, being the construction of obstructed thickenings of the complex projective plane $\Cbb\Pbb^2$. We consider the case where $\Cbb\Pbb^2$ is equipped with a rank 3, holomorphic vector bundle which is either: (1) split; and (2) non-split but decomposable.

\subsection*{Acknowledgements}
I would like to acknowledge the many helpful discussions and support I have had with Peter Bouwknegt and Bryan Wang; the Australian Postgraduate Award for providing financial support during the time when much of the work in this article was undertaken; and the useful comments of the anonymous referee.

\section{Preliminaries}\label{2}

\subsection{Supermanifolds}

The definition of a supermanifold may be quite succinctly given in the framework of algebraic geometry. We refer to \cite{QFAS} where this point of view is emphasised. We begin firstly with the following:

\begin{DEF}\label{mironufuir}
\emph{A complex manifold $M$ of dimension $n$ is a locally ringed space $(|M|, \Cc_M)$, for $|M|$ a topological space, which is locally isomorphic to $(\Cbb^n, \Cc_{\Cbb^n})$, where $\Cc_{\Cbb^n}$ denotes the sheaf of (germs of) holomorphic functions on $\Cbb^n$. 
}
\end{DEF}

The isomorphism mentioned in the definition of a complex manifold in Definition \ref{mironufuir} above is in the category of \emph{locally ringed spaces}. Note that a (real) smooth manifold may be defined analogously. The definition of \emph{supermanifold} is now somewhat natural:

\begin{DEF}\label{ncnrnv322rnk}
\emph{A $(p|q)$-dimensional real (resp. complex) supermanifold $\Xfr$ is defined as a locally ringed space\footnote{\label{poirrjnkeoeo}A technical point, which we finesse here, is the distinction between commutative rings and their $\Zbb_2$-graded counterparts, the so-called super-commutative rings. A locally ringed space typically requires the structure sheaf to be a sheaf of commutative rings. We need to consider here sheaves of super-commutative rings however. That the notion of a locally ringed space in this context also makes sense is discussed in \cite{LEI,BER, YMAN, KAPSUSY}. In light of this, we will continue on with the terminology `locally ringed space'.
} $(M, \Oc_M)$, where $M$ is a $p$-dimensional, real (resp. complex) manifold and the structure sheaf $\Oc_M$ is:
\begin{enumerate}[(i)]
	\item  $\Zbb_2$-graded;
	\item equipped with an epimorphism $\iota^\sharp : \Oc_M \ra \Cc_M$ with kernel $\Jc = \ker\iota^\sharp$ and;
	\item locally isomorphic to the sheaf of algebras $\Cc_M\otimes \wedge^\bt V$, where $V$ is a fixed, real (resp. complex) vector space of dimension $q$.
\end{enumerate}
}
\end{DEF}

In what follows we make sense of describing a supermanifold as being `modelled' on a manifold $M$ and vector bundle $E\ra M$.

\subsection{The Split Model}\label{fjf9j490fj40fj40}
A supermanifold $\Xfr$ of dimension $(p|q)$ is said to have \emph{even} dimension $p$ and \emph{odd} dimension $q$. For the purposes of this article it will be convenient to think about a supermanifold $\Xfr$ as being modelled on two bits of data: a manifold $M$, called the \emph{reduced space}, and a vector bundle $E\ra M$ called the \emph{modelling bundle}. Let $\Ec$ denote the sheaf of sections of $E$. To see how to make sense of a supermanifold $\Xfr$ as being `modelled' on $(M, E)$, recall from Definition \ref{ncnrnv322rnk}(ii) the epimorphism $\iota^\sharp : \Oc_M \ra \Cc_M$ and set $\Jc = \ker\iota^\sharp$. It is called the \emph{nilpotent ideal}. By Definition \ref{ncnrnv322rnk}(iii) we see that the quotient $\Jc/\Jc^2$ will be a sheaf of locally free $\Cc_M$-modules and hence correspond to the sheaf of sections of a vector bundle, say $E\ra M$. The structure sheaf $\Oc_M$ of $\Xfr = (M, \Oc_M)$ will then be locally isomorphic to the sheaf of sections $\wedge^\bt \Ec$ of the bundle of exterior algebras $\wedge^\bt E$. If there exists an isomorphism $\Jc/\Jc^2\stackrel{\cong}{\ra}\Ec$, then we say $\Xfr$ will be \emph{modelled} on a pair $(M, E)$ and use the notation $\Xfr_{(M, E)}$.

\begin{REM}
\emph{The distinction between $\Xfr$ and $\Xfr_{(M,E)}$ is superficial. We make it only to emphasise that, in this article, we adopt a `bottom-up' point of view on supermanifolds. That is, rather than starting with some supermanifold, we start with a manifold $M$, a vector bundle $E\ra M$ and study supermanifold structures associated to $(M, E)$.}
\end{REM}

\begin{DEF}\label{ifh4fgiuho48h04}
\emph{For a supermanifold $\Xfr_{(M, E)}$, a choice of isomorphism $\vp : \Jc/\Jc^2\stackrel{\cong}{\ra}\Ec$ is referred to as a \emph{coframe} for $\Xfr_{(M, E)}$. 
}
\end{DEF}

We will justify the terminology `coframe' in the above definition later on in this article. 

\begin{DEF}
\emph{The supermanifold $\Xfr_{(M, E)}$ is said to be \emph{smooth} (resp. \emph{holomorphic}) if the pair $(M, E)$ consists of a smooth (resp. complex-analytic) manifold and smooth (resp. holomorphic) vector bundle.
}
\end{DEF}

In this article we will be concerned with holomorphic supermanifolds.
\\\\
Now from any given pair $(M, E)$ we may construct a supermanifold by simply taking the structure sheaf $\Oc_M$ to be $\wedge^\bt \Ec$. This gives what is termed the \emph{split model} and is denoted here by $\Pi E$. As a locally ringed space $\Pi E =  (M, \wedge^\bt\Ec)$. A supermanifold $\Xfr$ modelled on a pair $(M, E)$ is locally isomorphic to $\Pi E$. To see why, note by Definition \ref{ncnrnv322rnk}(iii) that $\Xfr$ will be locally isomorphic to $\wedge^\bt(\Jc/\Jc^2)$. If $\Xfr$ is modelled on $(M, E)$, then $\Jc/\Jc^2\cong \Ec$. Therefore $\wedge^\bt(\Jc/\Jc^2)\cong \wedge^\bt\Ec$ and so $\Xfr$ and $\Pi E$ are locally isomorphic.

\begin{REM}
\emph{As $\Oc_M$ is $\Zbb_2$-graded then, if $\Oc^{\ev/\odd}_M$ denote the even and odd components respectively, we have: $\Oc^{\ev/\odd}_M \cong_{\mathrm{loc.}}\wedge^{\ev/\odd}\Ec$ as sheaves of $\Cc_M$-modules.
}
\end{REM}

\subsection{Embedded Thickenings}\label{kd4k9fk30fk30kf3p}
It is our intent in this article to study thickenings independently of supermanifolds, in a suitable sense. Before giving their definition however, we will present the following construction associated to any supermanifold. It can be found in \cite{BER, YMAN}.

\begin{CON}\label{nruiciurhf94h9h}
Let $\Xfr = (M, \Oc_M)$ be a supermanifold and denote by $\Jc$ the nilpotent ideal. Recall that it is identified with the kernel of the morphism $\Oc_M \twoheadrightarrow \Cc_M$. That is, we have an exact sequence of sheaves of $\Cc_M$-modules:
\[
0 \ra \Jc \hookrightarrow \Oc_M \twoheadrightarrow \Cc_M\ra0.
\]
Consider now the $\Jc$-adic filtration on $\Oc_M$, i.e., $\Oc_M\supset \Jc\supset \Jc^2\supset\cdots$. If $\Xfr$ has odd-dimension $q$, then the $\Jc$-adic filtration will have length $q$, i.e., $\Jc^{q+1} = 0$. Now set $\Oc_M^{(k)} := \Oc_M/\Jc^{k+1}$. Then we will obtain, in this way, a locally ringed space $\Xfr^{(k)} = (M, \Oc^{(k)}_M)$, with: $\Xfr^{(k)}\subset \Xfr$. 
\end{CON}

We now have the following remarks: if $k =0$, then we recover the underlying, complex manifold since $\Oc_M^{(0)} = \Oc_M/\Jc = \Cc_M$. If we fix a pair $(M, E)$, then $\Xfr^{(1)}_{(M, E)}$ and $\Pi E^{(1)}$ coincide. Finally, if $k \geq q$ then clearly $\Oc_M^{(k)} = \Oc_M$ and so $\Xfr^{(k)} = \Xfr$. Note that the sheaf $\Oc_M^{(k)}$, for $0 < k < q$, will not be locally isomorphic to a sheaf of exterior algebras, and so the thickenings $\Xfr^{(k)}$ will not be supermanifolds in the sense of Definition \ref{ncnrnv322rnk}. Regarding $\Xfr_{(M,E)}$, we have by construction a filtration:
\begin{align}
M \subset \Pi E^{(1)} \subset \Xfr_{(M, E)}^{(2)}\subset \cdots\subset \Xfr^{(q-1)}_{(M, E)} \subset \Xfr_{(M, E)}. 
\label{bcbfjhbvrbjh}
\end{align}
In light of \eqref{bcbfjhbvrbjh} we have the following definition, motivated by \cite{EASTBRU}:

\begin{DEF}\label{nkcdkjkfoieoe}
\emph{The locally ringed space $\Xfr_{(M, E)}^{(k)}$ in \eqref{bcbfjhbvrbjh} is referred to as the \emph{$k$-th order thickening of $M$ in $\Xfr_{(M, E)}$}.
}
\end{DEF}

Another way to think of a $k$-th order thickening $\Xfr^{(k)}_{(M, E)}$ in $\Xfr_{(M, E)}$ is as a thickening of $M$ equipped with an embedding $\iota: \Xfr^{(k)}_{(M, E)} \subset\Xfr_{(M, E)}$.\footnote{\label{rfb8y4f84hf9h8f3}As described in \cite{LEBRUN} and motivated by notions in algebraic geometry, an embedding of supermanifolds (and thickenings more generally) is defined by: (1) an embedding of underlying spaces and; (2) a surjection of structure sheaves. For another description of embeddings, see \cite{BETTEMB}. In the present case, $\iota: \Xfr^{(k)}_{(M, E)} \ra\Xfr_{(M, E)}$ is the identity on reduced spaces and surjective on structure sheaves by construction. Hence it is an embedding.} Hence we may refer to such a thickening as an \emph{embedded} thickening. As mentioned at the start of this section, we are interested in thickenings `independently' of supermanifolds and by this we mean thickenings that need not be embedded in some higher structure such as a supermanifold, i.e., an \emph{abstract} thickening. The focus of this article is then: \emph{given an abstract thickening, will there exist an embedding of it into some higher structure, e.g., such as a thickening of higher order, or a supermanifold?}
\\\\
 In the next section we will look at the tangent sheaf of a supermanifold and the split model.

\subsection{The Tangent Sheaf}
Let $\Oc_U$ be a sheaf of (super-)commutative rings on a topological space $U$. Then the sheaf of derivations $\Der~\Oc_U$ is locally free, as shown in \cite{LEI, QFAS}. The \emph{tangent sheaf} of a supermanifold $\Xfr = (M, \Oc_M)$ is then defined to be $\Tfr_\Xfr :=\Der~\Oc_M$. It is locally free and identified with the sheaf of sections of the \emph{tangent bundle} of $\Xfr$. We will work directly with the sheaf $\Tfr_\Xfr$ rather than the bundle in this article. Our first observation here is: \emph{since the split model $\Pi E$ is $\Zbb$-graded then so is the tangent sheaf $\Tfr_{\Pi E}$}. As a supermanifold $\Xfr$ is more generally only $\Zbb_2$-graded, it follows that the tangent sheaf $\Tfr_\Xfr$ is $\Zbb_2$-graded. It will admit more structure than simply a $\Zbb_2$-grading however. As described in \cite{ONISHCLASS}, it is in fact a filtered $\Oc_M$-module. 

\begin{CON}
Let $\Xfr$ be a $(p|q)$-dimensional supermanifold and set:
\[
\Tfr_\Xfr[k] := \left\{ \nu\in \Der~\Oc_M \mid \nu(\Oc_M)\subset \Jc^k~~\mbox{and}~~\nu(\Jc^l)\subset \Jc^{k+l}~\mbox{for all $l>0$}\right\}.
\]
Then since $\Jc\supset \Jc^2\supset\cdots$, it follows that $\Tfr_\Xfr[k]\supset \Tfr_\Xfr[k+1]$ for all $k$. Set $\Tfr_\Xfr[-1] := \Tfr_\Xfr$. We now have:
\[
\Tfr_\Xfr = \Tfr_\Xfr[-1] \supset \Tfr_\Xfr[0] \supset \Tfr_\Xfr[1]\supset\cdots\supset \Tfr_\Xfr[q]\supset \Tfr_\Xfr[q+1]=0.
\]
In this way $\Tfr_\Xfr$ is a filtered $\Oc_M$-module. 
\end{CON}

Now let $\Xfr$ be a supermanifold modelled on $(M, E)$. Then it is locally isomorphic to its split model $\Pi E$. Denote by $\Tfr_{(M, E)}$ its tangent sheaf. We have the following result, an argument for which we refer to \cite[p. 311]{ONISHCLASS}:

\begin{LEM}\label{jd90d098jdoijp3}
There exists a short-exact sequence of sheaves of $\Cc_M$-modules: 
\[
0 \ra \Tfr_{(M, E)}[k+1] \hookrightarrow \Tfr_{(M, E)}[k] \twoheadrightarrow \Tfr_{\Pi E}[k]\ra0
\]
for all $k\geq-1$.
\qed
\end{LEM}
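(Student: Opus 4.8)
The plan is to work locally and reduce everything to the associated-graded description of the tangent sheaf. First I would fix a coframe for $\Xfr$ (in the sense of Definition \ref{ifh4fgiuho48h04}), so that over a sufficiently small open $U\subset M$ the structure sheaf $\Oc_M|_U$ is identified with $(\wedge^\bt\Ec)|_U$ and the nilpotent ideal $\Jc$ becomes the ideal generated by $\Ec$, with $\Jc^k/\Jc^{k+1}\cong\wedge^k\Ec$. Relative to such a trivialization one has local odd coordinates $\q^1,\dots,\q^q$ and even coordinates $x^1,\dots,x^p$, and a derivation $\nu\in\Der~\Oc_M$ can be written $\nu=\sum f_i\,\pt/\pt x^i+\sum g_a\,\pt/\pt\q^a$ with $f_i,g_a\in\wedge^\bt\Ec$. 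The filtration degree is then read off from the $\Jc$-adic order of the coefficients: $\nu\in\Tfr_\Xfr[k]$ iff each $\pt/\pt x^i$-coefficient lies in $\Jc^{k+1}$ (it must carry $\Oc_M$ into $\Jc^k$ and hence kill no $\q$) while each $\pt/\pt\q^a$-coefficient lies in $\Jc^k$. This is the standard computation (cf. \cite[p.~311]{ONISHCLASS}); I would record it as the one explicit local lemma.

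Next I would define the three maps in the sequence. The inclusion $\Tfr_{(M,E)}[k+1]\hookrightarrow\Tfr_{(M,E)}[k]$ is just the filtration inclusion from the preceding Construction, which is a morphism of sheaves of $\Cc_M$-modules since the $\Cc_M$-action preserves $\Jc$-adic order. For the surjection $\Tfr_{(M,E)}[k]\twoheadrightarrow\Tfr_{\Pi E}[k]$: locally choose the splitting isomorphism $\Oc_M|_U\cong(\wedge^\bt\Ec)|_U$; a derivation of filtration degree $k$ has, by the local description above, a leading term given by its coefficients taken modulo $\Jc^{k+2}$ (for the $\pt/\pt x^i$ part) and modulo $\Jc^{k+1}$ (for the $\pt/\pt\q^a$ part), and these leading data are precisely a degree-$k$ derivation of the graded sheaf $\wedge^\bt\Ec$, i.e.\ a local section of $\Tfr_{\Pi E}[k]$. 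I would then check that this ``take the symbol'' map, a priori defined using the chosen splitting, is in fact independent of the splitting — the point being that any two splittings differ by an automorphism of $\Oc_M|_U$ of the form $\id+(\text{order}\geq 1)$, whose effect on the leading term of a degree-$k$ derivation vanishes. Globality follows, and the map is plainly $\Cc_M$-linear.

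It remains to check exactness, which I would do stalkwise (equivalently on small opens), using the local normal form. Exactness at $\Tfr_{(M,E)}[k+1]$ is immediate since the inclusion is injective. For exactness at $\Tfr_{(M,E)}[k]$: a degree-$k$ derivation maps to $0$ in $\Tfr_{\Pi E}[k]$ exactly when its $\pt/\pt x^i$-coefficients lie in $\Jc^{k+2}$ and its $\pt/\pt\q^a$-coefficients lie in $\Jc^{k+1}$, i.e.\ exactly when it already has filtration degree $k+1$; this is where the independence-of-splitting from the previous paragraph is used, so that ``symbol $=0$'' is an intrinsic statement. Surjectivity onto $\Tfr_{\Pi E}[k]$ follows because, having fixed a local splitting, any graded degree-$k$ derivation of $\wedge^\bt\Ec$ lifts tautologically to a derivation of $\Oc_M|_U$ of filtration degree $k$ with the prescribed leading term (set the higher-order coefficients to zero). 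The one genuinely delicate point — the main obstacle — is the splitting-independence of the symbol map, since $\Oc_M$ itself is only \emph{locally} isomorphic to $\wedge^\bt\Ec$ and a global splitting need not exist; the resolution is that the symbol only sees the leading term, on which the transition automorphisms (which are $\id$ modulo $\Jc$) act trivially, so the locally defined maps patch. Everything else is bookkeeping with the $\Jc$-adic filtration. For the precise local computation and the identification of $\Tfr_{\Pi E}[k]$ with the degree-$k$ piece of $\Der(\wedge^\bt\Ec)$ I refer to \cite[p.~311]{ONISHCLASS}. $\qed$
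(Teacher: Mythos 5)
Your strategy --- local computation via a coframe, passage to the associated graded (the symbol map), and checking independence of the local splitting --- is the right one; the paper itself gives no proof here, referring instead to \cite[p.~311]{ONISHCLASS}, and this is the standard argument. The problem is that your local degree bookkeeping is exactly backwards, and as written the symbol map does not land in $\Tfr_{\Pi E}[k]$. Writing $\nu = f^\mu\,\pt/\pt x^\mu + g_a\,\pt/\pt\q_a$, the condition $\nu\in\Tfr_\Xfr[k]$ forces $f^\mu=\nu(x^\mu)\in\Jc^k$ and $g_a=\nu(\q_a)\in\Jc^{k+1}$ (the latter because $\nu(\Jc)\subset\Jc^{k+1}$), which is the opposite of what you state. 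The reason is that $\pt/\pt\q_a$ lowers $\Jc$-adic order by one while $\pt/\pt x^\mu$ preserves it, so the odd coefficients must sit one degree higher. This matches Lemma \ref{hjbbjhhveyjj}: the subsheaf $\wedge^{k+1}\Ec\otimes\Ec^\vee$ of $\Tfr_{\Pi E}[k]$ is the $g_a\,\pt/\pt\q_a$ part (coefficients of degree $k+1$), and the quotient $\Tfr_M\otimes\wedge^k\Ec$ is the $f^\mu\,\pt/\pt x^\mu$ part (coefficients of degree $k$). Correspondingly, the symbol should take $f^\mu$ modulo $\Jc^{k+1}$ and $g_a$ modulo $\Jc^{k+2}$; your prescription produces elements of the wrong degrees.

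A secondary imprecision: you say two local splittings differ by $\id+(\mbox{order}\geq1)$ and that this suffices. Since both splittings are compatible with the fixed coframe $\vp:\Jc/\Jc^2\cong\Ec$, the transition automorphism $\psi$ in fact induces the identity on $\Jc/\Jc^2$ as well as on $\Cc_M$, so $\psi\in\Gc^{(2)}_E$, i.e., $\psi-\id$ has order $\geq 2$. The strengthening is actually needed for the argument: an automorphism that is merely $\id$ modulo $\Jc$, such as $\exp(t\,\q_a\,\pt/\pt\q_a)$, rescales the frame and genuinely changes the leading term of a filtration-$k$ derivation. With $\psi=\exp D$ and $D\in\Tfr_\Xfr[2]$, one gets $\psi_*\nu-\nu\in\Tfr_\Xfr[k+2]\subset\Tfr_\Xfr[k+1]$, so the symbol is preserved. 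With these two degree corrections your argument goes through.
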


As for the tangent sheaf of the split model itself, we have the following:

\begin{LEM}\label{hjbbjhhveyjj}
There exists a short-exact sequence of sheaves of $\Cc_M$-modules: 
\begin{align}
0 \ra \wedge^{k+1}\Ec\otimes\Ec^\vee \hookrightarrow  \Tfr_{\Pi E}[k] \twoheadrightarrow \Tfr_M\otimes\wedge^k\Ec\ra0
\label{jkkbkbkbk}
\end{align}
for all $k\geq-1$.
\qed
\end{LEM}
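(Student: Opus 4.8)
The plan is to construct the short-exact sequence \eqref{jkkbkbkbk} locally on coordinate patches of $M$ and then verify that the local descriptions glue to a morphism of sheaves. First I would fix an open cover $\{U_i\}$ of $M$ over which $E$ trivializes, with local fiber coordinates $\theta^1, \dots, \theta^q$ for $\wedge^\bt \Ec$ and local base coordinates $x^1, \dots, x^p$ for $M$; then $\Tfr_{\Pi E}$ is locally free over $\wedge^\bt \Ec$ with frame $\{\pt/\pt x^a, \pt/\pt \theta^\alpha\}$. Unwinding Onishchik's filtration, a derivation $\nu \in \Tfr_{\Pi E}[k]$ is one that raises $\Zbb$-degree by $k$; writing $\nu = f^a \,\pt/\pt x^a + g^\alpha\,\pt/\pt\theta^\alpha$ with $f^a, g^\alpha \in \wedge^\bt\Ec$, the condition $\nu \in \Tfr_{\Pi E}[k]$ amounts to $f^a \in \wedge^k\Ec$ and $g^\alpha \in \wedge^{k+1}\Ec$. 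This immediately suggests the two outer terms: the $\pt/\pt x^a$-part contributes $\Tfr_M \otimes \wedge^k\Ec$ and the $\pt/\pt\theta^\alpha$-part contributes $\wedge^{k+1}\Ec \otimes \Ec^\vee$ (the $\Ec^\vee$ recording which $\pt/\pt\theta^\alpha$).

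Second I would define the surjection $\Tfr_{\Pi E}[k] \twoheadrightarrow \Tfr_M \otimes \wedge^k\Ec$ as the map sending $\nu$ to the leading symbol of its action on $\Cc_M \subset \wedge^\bt\Ec$, i.e. $\nu \mapsto \nu \bmod (\text{terms killing } \Cc_M)$; invariantly, this is the composite $\Tfr_{\Pi E}[k] \to \Der_{\Cc_M}(\Cc_M, \wedge^k\Ec/\wedge^{k+1}\Ec) = \Tfr_M \otimes \wedge^k\Ec$. Its kernel consists exactly of those $\nu$ with $f^a = 0$, i.e. $\nu = g^\alpha\,\pt/\pt\theta^\alpha$ with $g^\alpha \in \wedge^{k+1}\Ec$; identifying such $\nu$ with the tensor $\sum g^\alpha \otimes (\theta^\alpha)^\vee \in \wedge^{k+1}\Ec \otimes \Ec^\vee$ gives the inclusion $\wedge^{k+1}\Ec \otimes \Ec^\vee \hookrightarrow \Tfr_{\Pi E}[k]$. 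Exactness in the middle is then the local statement just described. For $k = -1$ one checks directly that the sequence reads $0 \to \Ec \otimes \Ec^\vee \hookrightarrow \Tfr_{\Pi E} \twoheadrightarrow \Tfr_M \to 0$ after noting $\wedge^0\Ec = \Cc_M$, consistent with the $\Zbb$-grading of $\Tfr_{\Pi E}$.

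The main obstacle — the only point requiring genuine care — is checking that these locally defined maps are independent of the chosen trivialization and local coordinates, so that they patch to honest morphisms of sheaves. The surjection is manifestly coordinate-free once phrased via the symbol map on $\Cc_M$, so the real work is the inclusion: under a change of fiber coordinates $\tilde\theta^\alpha = A^\alpha_\beta \theta^\beta$ with $A \in \GL(\Ec)$ (and a base-coordinate change), one must verify that the assignment $g^\alpha\,\pt/\pt\theta^\alpha \mapsto \sum g^\alpha \otimes (\theta^\alpha)^\vee$ transforms by the transition functions of $\wedge^{k+1}E \otimes E^\vee$. This is a direct but slightly tedious computation: $\pt/\pt\theta^\beta$ transforms by the inverse-transpose of $A$ (hence the $\Ec^\vee$ factor), while the coefficients live in $\wedge^{k+1}\Ec$ and transform by $\wedge^{k+1}A$; one also checks the cross-terms arising from $\pt \tilde x^a/\pt \theta^\beta$ vanish because such derivatives are zero in the split model (the transition functions of $\Pi E$ are $\Zbb$-graded). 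Once this compatibility is in hand, the sequence \eqref{jkkbkbkbk} is a sequence of sheaves of $\Cc_M$-modules and exactness, being a local property, follows from the patchwise analysis above. $\qed$
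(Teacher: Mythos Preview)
The paper states this lemma without proof (the $\qed$ immediately follows the statement), so there is no approach to compare against. Your argument --- local coordinate description of degree-$k$ derivations on $\wedge^\bt\Ec$, identification of the symbol map onto $\Tfr_M\otimes\wedge^k\Ec$, and verification that the kernel is $\wedge^{k+1}\Ec\otimes\Ec^\vee$ --- is the standard one and is correct.

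One slip: your parenthetical check at $k=-1$ is wrong. With $k=-1$ the left-hand term is $\wedge^{0}\Ec\otimes\Ec^\vee=\Ec^\vee$ (not $\Ec\otimes\Ec^\vee$), and the right-hand term is $\Tfr_M\otimes\wedge^{-1}\Ec=0$ (not $\Tfr_M$). Thus the sequence degenerates to $\Tfr_{\Pi E}[-1]\cong\Ec^\vee$, which is exactly what the paper uses later (see the remark after \eqref{rofopjofj3o9fj930jf3}). Note also that in this lemma $\Tfr_{\Pi E}[k]$ denotes the degree-$k$ graded piece (as the quotient appearing in Lemma~\ref{jd90d098jdoijp3}), not the full tangent sheaf; your identification of $\Tfr_{\Pi E}[-1]$ with $\Tfr_{\Pi E}$ conflates the filtered and graded conventions. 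This does not affect your argument for $k\geq 0$, which is where the lemma is actually applied in the paper.
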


\section{Thickenings in Supergeometry}\label{3}

\noindent
In Construction \ref{nruiciurhf94h9h} we obtained locally ringed spaces from a given supermanifold $\Xfr$. To indicate this dependence on $\Xfr$, we referred to these spaces as \emph{thickenings in $\Xfr$} in Definition \ref{nkcdkjkfoieoe}. In the present section we will give a treatment of thickenings in the spirit of \cite{EASTBRU}.

\subsection{Preliminaries: Thickenings of Complex Manifolds}
We refer to \cite{EASTBRU,GRIFF} for a more complete treatment on the theory of thickenings of complex manifolds. Here only the basic notions will be described with the goal to motivate similar considerations in the context of supergeometry. 
\\\\
Firstly fix a complex manifold $M = (|M|, \Cc_M)$, thought of here as a locally ringed space (see Definition \ref{mironufuir}). 

\begin{DEF}
\label{djcnknckjncjkrccr}
\emph{A thickening of $M$ of order $m$ is defined to be a locally ringed space $M^{(m)} = (M, \Oc_M^{(m)})$, where the structure sheaf $\Oc_M^{(m)}$ is equipped with: 
\begin{enumerate}[(i)]
	\item an epimorphism $\Oc^{(m)}_M \twoheadrightarrow \Cc_M$; and
	\item local isomorphisms: $\Oc_M^{(m)} \cong_{\mathrm{loc.}} \Cc_M[{\bf x}]/({\bf x}^{m+1})$ 
\end{enumerate}
for ${\bf x}$ a formal variable.
}
\end{DEF}

An instructive example of a thickening is provided by holomorphically embedded sub-manifolds:

\begin{EX}
\label{jcnkrncjkrnckjr}
Suppose $N$ is a complex manifold and $M\subset N$ is a holomorphically embedded, co-dimension one submanifold.  If $\Cc_N$ denotes the structure sheaf of $N$, consider the ideal $\Jc_M\subset \Cc_N$ comprising those functions on $N$ which vanish on $M$. Then an $m$-th order thickening of $M$ is given by the locally ringed space $M^{(m)} = (M, \Oc_M^{(m)})$, where $\Oc_M^{(m)} := \iota^*(\Cc_N/\Jc_M^{m+1})$, for $\iota : M\hookrightarrow N$ the embedding. 
\end{EX}

Indeed, given a thickening of order $m$, it is possible to construct another thickening of order $l<m$ in the same manner as in Construction \ref{nruiciurhf94h9h}.

\begin{CON}
\label{djcnknjcknjcknrkc}
Suppose we are given an $m$-th order thickening $M^{(m)}$ of $M$. Then we obtain an ideal $\Jc_{(m)}$ as the following kernel,
\begin{align}
\Jc_{(m)} := \ker\{ \Oc_M^{(m)} \twoheadrightarrow \Cc_M\}. 
\label{jcnkrnckrnckjrcrcr}
\end{align}
Now consider the $\Jc_{(m)}$-adic filtration of $\Oc_M^{(m)}$ and define $\Oc_M^{(l)}$ by,
\begin{align}
\Jc^{l+1}_{(m)} \hookrightarrow \Oc_M^{(m)} \twoheadrightarrow \Oc_M^{(l)}
\label{jcnkjrnckrcnkrnc}
\end{align}
for $0\leq l< m$. Then $(|M|, \Oc_M^{(l)})$ will be a thickening of $M$ of order $l$.
\end{CON}

Now by definition of $\Jc_{(m)}$ in \eqref{jcnkrnckrnckjrcrcr} we see that $M = M^{(0)}$. Moreover, since $\Jc^{l}_{(m)}\supset \Jc^{l+1}_{(m)}$, we are led to the following commuting diagram of exact sequences:
\[
\xymatrix{
 0 \ar[r] & \ar@{^{(}->}[d] \Jc^{l+1}_{(m)} \ar[r] & \ar@{=}[d]\Oc_M^{(m)} \ar[r] & \ar[d]\Oc_M^{(l)} \ar[r] & 0\\
0 \ar[r] & \Jc^{l}_{(m)} \ar[r] & \Oc_M^{(m)} \ar[r] & \Oc_M^{(l-1)} \ar[r] & 0
}
\]
The morphism $\Oc_M^{(l)} \ra \Oc_M^{(l-1)}$ will be surjective. As these are the structure sheaves of the ringed spaces $M^{(l)}$ and $M^{(l-1)}$, this morphism will therefore correspond to an embedding $M^{(l-1)}\subset M^{(l)}$ (c.f., footnote \eqref{rfb8y4f84hf9h8f3}). In general we have a filtration
\begin{align}
M = M^{(0)} \subset M^{(1)} \subset M^{(2)} \subset \cdots\subset M^{(m)}.
\label{idcmlmclrmclr}
\end{align}
In this way we see that associated to any $m$-th order thickening of $M$ are thickenings of orders $l$, for $l = 0, \ldots, m-1$, filtered as in \eqref{idcmlmclrmclr}. We now have the following  natural question: 

\begin{QUE}\label{thickeningsquestion}
Given a thickening $M^{(m)}$ of $M$ does there exist a thickening $M^{(m+1)}$ of $M$ containing $M^{(m)}$?
\end{QUE}

One of the main results in \cite{EASTBRU} is in identifying the space of obstructions to the existence of thickenings $M^{(m+1)}$ of a given $M^{(m)}$. This was then generalised to the context of supergeometry. This is what we detail in what follows.

\subsection{Thickenings in Supergeometry}
We firstly note the similarities between Definition \ref{djcnknckjncjkrccr} of a thickening of a complex manifold and Definition \ref{ncnrnv322rnk} of a (complex) supermanifold. This motivates the following definition of an abstract, order-$m$ `super-geometric' thickening of a complex manifold:

\begin{DEF}\label{xomcomocoocoko}
\emph{Let $M$ be a $p$-dimensional, complex manifold. An \emph{abstract, super-geometric thickening of $M$ of order $m$ and dimension $(p|q)$} is a locally ringed space $\Xfr^{(m)} = (M, \Oc_M^{(m)})$ whose structure sheaf satisfies:
\begin{enumerate}[(i)]
	\item it is $\Zbb_2$-graded;
	\item it is equipped with an epimorphism $\iota^\sharp: \Oc_M^{(m)}\twoheadrightarrow \Cc_M$; and 
	\item it is locally isomorphic to $\Cc_M[\xi_1, \ldots, \xi_q]/J^{m+1}$, where $\xi_i$ anti-commute amongst each other; $J = \ker\{\Cc_M[\xi_1, \ldots, \xi_q]\twoheadrightarrow \Cc_M\}$; and $J^{m+1}$ is its $(m+1)$-th power.
\end{enumerate}
}
\end{DEF}

\begin{REM}\label{rfi4hf984hoj3jp3}\emph{The definition of an abstract, super-geometric thickening given in Definition \ref{xomcomocoocoko} above is taken from \cite{EASTBRU} where the term `\emph{supersymmetric} thickening' is used. The word supersymmetry has some connotation in physics so we have opted here to use the term `super-geometric' rather than supersymmetric. However, we will later drop the prefix `super-geometric' where confusion is unlikely to arise.}
\end{REM}

\begin{DEF}\label{deduieuidh4hd487h}
\emph{Let $\Xfr^{(m)}$ be an abstract, super-geometric thickening of $M$ of order-$m$. We say it is \emph{trivial} if the local isomorphism in Definition \ref{xomcomocoocoko}(iii) is global.}
\end{DEF}

As with supermanifolds, if $\Jc = \ker\iota^\sharp$ then $\Jc/\Jc^2$ is identified with the sheaf of sections of a vector bundle $\Ec$ of $E\ra M$. So in analogy with the notion of an abstract supermanifold modelled on a pair $(M, E)$, we have the notion of an abstract, \emph{super-geometric thickening} of the pair $(M, E)$ of order $m$, denoted $\Xfr^{(m)}_{(M, E)}$. It is defined just as in Definition \ref{xomcomocoocoko} with Condition (iii) replaced by:
\begin{align}
\Oc_M^{(m)}\cong_{\mathrm{loc.}} \wedge^{\bt}\Ec/J^{m+1},
\label{nfo3hof3hf8o3hf03}
\end{align}
where $J= \ker\{\wedge^\bt\Ec \twoheadrightarrow \Cc_M\}$.

\begin{LEM}\label{fiugf78g79h380fj09}
For any thickening $\Xfr^{(m)}_{(M, E)}$ we have $\Xfr^{(1)}_{(M, E)} \cong \Pi E^{(1)}$.  
\end{LEM}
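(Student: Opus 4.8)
The plan is to compute the structure sheaf of $\Xfr^{(1)}_{(M,E)}$ outright and observe that the $\Zbb_2$-grading forces it to agree with that of $\Pi E^{(1)}$. Write $\Jc = \ker\{\iota^\sharp:\Oc_M^{(m)}\twoheadrightarrow\Cc_M\}$ for the nilpotent ideal of $\Xfr^{(m)}_{(M,E)}$. First I would record that, by the super-geometric counterpart of Construction \ref{djcnknjcknjcknrkc} (equivalently, by the analogue of Construction \ref{nruiciurhf94h9h} for thickenings), $\Xfr^{(1)}_{(M,E)} = (M,\Oc_M^{(1)})$ with $\Oc_M^{(1)} := \Oc_M^{(m)}/\Jc^2$, which sits in a short exact sequence of $\Zbb_2$-graded sheaves of $\Cbb$-algebras
\[
0 \lra \Jc/\Jc^2 \lra \Oc_M^{(1)} \lra \Cc_M \lra 0 ,
\]
wherein $\Jc/\Jc^2$ is a square-zero ideal and, by the hypothesis that $\Xfr^{(m)}_{(M,E)}$ is modelled on $(M,E)$, admits a coframe $\vp : \Jc/\Jc^2 \stackrel{\cong}{\ra} \Ec$ (Definition \ref{ifh4fgiuho48h04}).

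The crux --- and the one step needing genuine care --- is the claim that $\Jc/\Jc^2$ is concentrated in odd degree, i.e.\ $\Jc^{\ev}\subseteq\Jc^2$. This is a local statement, so I would verify it in the model $\Cc_U[\xi_1,\dots,\xi_q]/J^{m+1}$ of Definition \ref{xomcomocoocoko}(iii): the even part of the augmentation ideal $J$ is spanned over $\Cc_U$ by monomials of even positive degree in the anticommuting $\xi_i$, and every such monomial lies in $J^2$. Granting this, the even part of the displayed sequence reads $0\ra 0\ra\Oc_M^{(1),\ev}\ra\Cc_M\ra 0$, so $\Oc_M^{(1),\ev}\cong\Cc_M$ as sheaves of $\Cbb$-algebras; and since the odd part of $\Oc_M^{(m)}$ already lies in $\Jc$, we get $\Oc_M^{(1),\odd} = \Jc^{\odd}/(\Jc^2)^{\odd} = \Jc/\Jc^2 \cong \Ec$ through $\vp$.

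It then only remains to pin down the algebra structure on $\Oc_M^{(1)} = \Cc_M\oplus\Ec$, and at this point there is no freedom left: the multiplication $\Oc_M^{(1),\odd}\otimes\Oc_M^{(1),\odd}\to\Oc_M^{(1),\ev}$ factors through $\Jc^2/\Jc^2 = 0$, so $\Ec$ is a square-zero ideal, while the $\Oc_M^{(1),\ev}=\Cc_M$-module structure on $\Ec = \Jc/\Jc^2$ is, via $\vp$, the tautological one on the sheaf of sections of $E$. Hence $\Oc_M^{(1)}$ is isomorphic, as a $\Zbb_2$-graded sheaf of $\Cc_M$-algebras, to $\wedge^{\bt}\Ec/J^2$ --- precisely the structure sheaf of $\Pi E^{(1)}$ in the sense of \eqref{nfo3hof3hf8o3hf03} --- and the identity on $M$ together with this isomorphism of structure sheaves yields the asserted isomorphism of locally ringed spaces $\Xfr^{(1)}_{(M,E)}\cong\Pi E^{(1)}$.

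I do not anticipate a deeper obstacle; once the odd-degree observation of the second paragraph is in hand the argument is entirely formal. The point worth stressing is the contrast with Definition \ref{djcnknckjncjkrccr}: an ordinary first-order thickening of a complex manifold may be non-trivial, whereas a super-geometric first-order thickening is always split, precisely because its defining ideal is purely odd and so leaves no room for a non-trivial even-degree square-zero extension of $\Cc_M$.
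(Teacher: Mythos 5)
Your proof is correct and follows essentially the same route as the paper: both hinge on the observation that $\Jc$ is generated by odd sections, so $\Jc^{\ev}\subseteq\Jc^2$, which forces $\Oc_M^{(1);\ev}\cong\Cc_M$ and $\Oc_M^{(1);\odd}\cong\Jc/\Jc^2\cong\Ec$. You are slightly more explicit than the paper in also pinning down the multiplicative structure on $\Cc_M\oplus\Ec$ (the square-zero observation needed to upgrade a $\Cc_M$-module identification to an isomorphism of ringed spaces), a detail the paper leaves implicit.
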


\begin{proof}
This follows from Definition \ref{xomcomocoocoko}(i). To see this, recall that $\Xfr^{(m)}_{(M, E)}$ is the thickening $\Xfr^{(m)} = (M, \Oc^{(m)}_M)$. Moreover, we have an isomorphism $\Jc/\Jc^2\stackrel{\cong}{\ra}\Ec$. Now since $\Oc^{(m)}_M$ is $\Zbb_2$-graded we can write $\Oc^{(m)}_M = \Oc_M^{(m);\ev}\oplus \Oc_M^{(m);\odd}$. The ideal $\Jc$ is generated by odd sections and so 
\begin{align}
\Oc_M^{(m);\ev}/\Jc^{2k} = \Oc_M^{(m);\ev}/\Jc^{2k+1}
&&
\mbox{and}
&&
\Oc_M^{(m);\odd}/\Jc^{2k+1} = \Oc_M^{(m);\odd}/\Jc^{2k+2}.
\label{3hf8gf83gf793hf39}
\end{align}
In particular, for $\Xfr^{(1)}_{(M, E)}$ we find:
\begin{align*}
\Oc_M^{(1)} := \Oc^{(m)}_M/\Jc^2
&= (\Oc_M^{(m);\ev}/\Jc^2)\oplus(\Oc_M^{(m);\odd}/\Jc^2)
\\
&= (\Oc^{(m)}_M/\Jc)\oplus(\Jc/\Jc^2) &&\mbox{from \eqref{3hf8gf83gf793hf39}}
\\
&\cong
\Cc_M\oplus \Ec \\
&= \wedge^\bt\Ec^{(1)}.
\end{align*}
Hence $\Xfr^{(1)}_{(M, E)}\cong \Pi E^{(1)}$. 
\end{proof}

\begin{REM}\label{dnclncjkdddjdjd}
\emph{For $E\ra M$ a vector bundle of rank $q$, a thickening of $(M, E)$ of order $q$ or higher will define a $(p|q)$-dimensional supermanifold $\Xfr$ modelled on $(M, E)$. In this way, a supermanifold $\Xfr_{(M, E)}$ may be thought of as a `maximal', super-geometric thickening of the pair $(M, E)$. 
}
\end{REM}

\begin{REM}\emph{To avoid cumbersome terminology, we will typically refer to a `$(p|q)$-dimensional, order $m$, abstract, super-geometric thickening' as a \emph{$(p|q)$-dimensional thickening}. Moreover, if $E\ra M$ is a fixed, rank-$q$ vector bundle then the terminology `$(p|q)$-dimensional' is redundant and so will be dropped.} 
\end{REM}

Now similarly to \eqref{bcbfjhbvrbjh} and \eqref{idcmlmclrmclr}, we have a filtration of locally ringed spaces:
\begin{align}
M \subset \Pi E^{(1)} \subset \Xfr_{(M, E)}^{(2)}\subset\cdots\subset\Xfr_{(M, E)}^{(m)}.
\label{rhcbjrbchjrchjrnc}
\end{align}
Then just as in the case of thickenings of complex manifolds, we want to know what the obstructions are to extending a given, abstract thickening. That is, we want a definitive answer to Question \ref{thickeningsquestion}, adapted to thickenings in the present context: 

\begin{QUE}\label{SUSYthickeningsquestion}
Given an abstract (super-geometric) thickening $\Xfr_{(M, E)}^{(m)}$ of $M$, does there exist a (super-geometric) thickening $\Xfr_{(M, E)}^{(m+1)}$ which contains $\Xfr_{(M, E)}^{(m)}$?
\end{QUE}

In order to gain traction on the Question \ref{SUSYthickeningsquestion} we submit the following definition.

\begin{DEF}\label{smeioiof894f89jcoi}
\emph{A given, abstract thickening $\Xfr_{(M, E)}^{(m)}$ is said to be \emph{obstructed} if there does not exist any $(m+1)$-th order thickening $\Xfr_{(M, E)}^{(m+1)}$ containing it. Otherwise, it is said to be \emph{unobstructed}.
}
\end{DEF}

Hence if a given, abstract thickening $\Xfr_{(M, E)}^{(m)}$ is unobstructed, there will exist a thickening $\Xfr_{(M, E)}^{(m+1)}$ containing it and so an embedding $\Xfr_{(M, E)}^{(m)}\subset \Xfr_{(M, E)}^{(m+1)}$. In this article one of the goals is to give an example of an obstructed thickening, i.e., an abstract thickening for which no embedding into a higher structure will exist. In the sections to follow we will look to give a classification of thickenings and a first step toward doing so is a notion of equivalence for thickenings, defined below.

\begin{DEF}\label{equivextensusythick}
\emph{Two thickenings $\Xfr_{(M, E)}^{(m+1)}$ and $\tilde\Xfr_{(M, E)}^{(m+1)}$ of $\Xfr^{(m)}$ are said to be \emph{equivalent} if there exists an isomorphism $\Xfr_{(M, E)}^{(m+1)} \stackrel{\cong}{\ra}\tilde \Xfr_{(M, E)}^{(m+1)}$, as locally ringed spaces, which restricts to the identity on $\Xfr_{(M, E)}^{(m)}$, i.e., a commutative diagram of locally ringed spaces:
\[
\xymatrix{
\ar@{^{(}->}[d] \Xfr_{(M, E)}^{(m)} \ar@{^{(}->}[r] & \Xfr_{(M, E)}^{(m+1)}\ar[dl]_\cong\\
\tilde \Xfr_{(M, E)}^{(m+1)}. & 
}
\] 
We denote by ${\bf T}^1(\Xfr_{(M, E)}^{(m)})$ the set of equivalence classes of $(m+1)$-th order thickenings containing $\Xfr_{(M, E)}^{(m)}$.
}
\end{DEF}

A first step toward an answer to Question \ref{SUSYthickeningsquestion} is provided by the following result:

\begin{THM}
\label{jnvjknknvkjnvk}
Suppose $\Xfr_{(M, E)}^{(m)}$ is a thickening of $(M,E)$ of order $m$, where $m\geq2$. Then the space of obstructions to finding a thickening of $\Xfr_{(M, E)}^{(m)}$ is identified with either: 
\begin{align*}
\mbox{$H^2(M, \Tfr_M\otimes\wedge^{m+1}\Ec)$ if $m$ is odd}
&&\mbox{or}&&
\mbox{$H^2(M, \wedge^{m+1}\Ec\otimes \Ec^\vee)$ if $m$ is even.} 
\end{align*}
\end{THM}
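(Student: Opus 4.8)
The plan is to mimic the standard obstruction-theoretic argument of Kodaira--Spencer deformation theory, working directly with the gluing data that presents a thickening. First I would fix a sufficiently fine (Stein/acyclic) open cover $\{U_\alpha\}$ of $M$ over which the given thickening $\Xfr^{(m)}_{(M,E)}$ trivializes, so that on each $U_\alpha$ the structure sheaf is identified with $\wedge^\bt\Ec/J^{m+1}$ restricted to $U_\alpha$, and the global object is recovered by transition automorphisms $\rho_{\alpha\beta}$ of $\wedge^\bt\Ec/J^{m+1}$ on overlaps $U_{\alpha\beta}$ satisfying the cocycle condition $\rho_{\alpha\beta}\rho_{\beta\gamma}\rho_{\gamma\alpha}=\id$. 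An $(m+1)$-th order extension amounts to lifting each $\rho_{\alpha\beta}$ to an automorphism $\tilde\rho_{\alpha\beta}$ of $\wedge^\bt\Ec/J^{m+2}$ which restricts to $\rho_{\alpha\beta}$ mod $J^{m+1}$, and such that the lifted cocycle condition holds. The lifts exist locally since $\Aut(\wedge^\bt\Ec/J^{m+2})\to\Aut(\wedge^\bt\Ec/J^{m+1})$ is surjective on each $U_{\alpha\beta}$ with "kernel" controlled by the next graded piece; the question is whether they can be chosen compatibly.

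Next I would make precise what graded piece controls the lifts. Any two local lifts of $\rho_{\alpha\beta}$ differ by an element of the sheaf $\Tfr_{\Pi E}[m+1]/\Tfr_{\Pi E}[m+2]$ acting additively; by Lemma~\ref{hjbbjhhveyjj} this associated graded sheaf is $\Tfr_M\otimes\wedge^{m+1}\Ec$ when $m+1$ is even (equivalently $m$ odd) and $\wedge^{m+1}\Ec\otimes\Ec^\vee$ when $m+1$ is odd (equivalently $m$ even) --- this is exactly where the parity dichotomy in the statement enters, and it reflects the $\Zbb$-grading of $\Tfr_{\Pi E}$ together with the $\Zbb_2$-grading constraint of Definition~\ref{xomcomocoocoko}(i). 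Denote this sheaf by $\Gc^{(m+1)}$. Choosing arbitrary local lifts $\tilde\rho_{\alpha\beta}$, one forms the discrepancy $o_{\alpha\beta\gamma}:=\tilde\rho_{\alpha\beta}\tilde\rho_{\beta\gamma}\tilde\rho_{\gamma\alpha}$, which is automatically $\id$ modulo $J^{m+1}$ (since the $\rho$'s form a cocycle) hence defines a section of $\Gc^{(m+1)}$ over $U_{\alpha\beta\gamma}$. The core computational steps are: (a) check $o_{\alpha\beta\gamma}$ is a \v{C}ech $2$-cocycle for $\Gc^{(m+1)}$ --- this uses the associativity/conjugation identities for automorphisms, carried out carefully modulo $J^{m+2}$; (b) check that changing the choice of local lifts $\tilde\rho_{\alpha\beta}\rightsquigarrow\tilde\rho_{\alpha\beta}+\tau_{\alpha\beta}$ with $\tau_{\alpha\beta}\in\Gc^{(m+1)}(U_{\alpha\beta})$ alters $o_{\alpha\beta\gamma}$ by the \v{C}ech coboundary $\delta\tau$; hence the class $[o]\in H^2(M,\Gc^{(m+1)})$ is well-defined; (c) show $[o]=0$ if and only if a genuine compatible lift exists, i.e.\ the thickening extends. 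The converse direction ($[o]=0\Rightarrow$ extension) is the reverse of (b). Finally I would invoke the acyclicity of the cover to identify the \v{C}ech $H^2$ with sheaf cohomology $H^2(M,\Tfr_M\otimes\wedge^{m+1}\Ec)$ or $H^2(M,\wedge^{m+1}\Ec\otimes\Ec^\vee)$ according to parity, giving the claimed identification of the obstruction space.

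The hardest part will be step (a): verifying that the discrepancy $o_{\alpha\beta\gamma}$ genuinely lands in the additive sheaf $\Gc^{(m+1)}$ and satisfies the cocycle identity. This requires handling the non-abelian structure of the automorphism groups $\Aut(\wedge^\bt\Ec/J^{m+2})$ and checking that all the "correction terms" beyond linear order vanish modulo $J^{m+2}$ --- in particular that conjugation by $\rho_{\alpha\beta}$ acts trivially on $\Gc^{(m+1)}$ in the relevant sense (this is where one needs $m\geq 2$, so that the leading part of $\rho_{\alpha\beta}$ is the identity on the relevant graded pieces and conjugation does not reshuffle the grading). A subsidiary technical point is to justify the additive (rather than merely torsor) structure on the set of lifts, which again follows from $m\geq 2$ together with the filtration properties of $\Tfr_{\Pi E}$ recorded in the Constructions above. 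As a warm-up, as the excerpt promises, I would first carry out the argument in a concrete low-order instance (e.g.\ $m=2$), where the bookkeeping is transparent, and then indicate how the general case follows by the same pattern with indices incremented.
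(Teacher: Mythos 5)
Your plan is correct and is essentially the paper's own ``elementary proof'': fix a trivialisation, try to lift the transition data by one degree, extract from the failure of the cocycle condition a $2$-cochain valued in $\mathcal Q_E^{(m+1);\pm}$, and show this is a \v Cech $2$-cocycle whose class is the obstruction. The parity dichotomy is forced, as you note, by the $\Zbb_2$-grading constraint of Definition~\ref{xomcomocoocoko}(i), which restricts the degree-$(m+1)$ corrections of $\rho^{(m+1);\mu}$ (resp.\ $\rho^{(m+1)}_a$) to even (resp.\ odd) monomials --- Lemma~\ref{hjbbjhhveyjj} alone does not isolate one summand; it gives an extension of both, and the selection comes from parity. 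One parenthetical claim, however, is not right and is worth correcting before you carry out step~(a): conjugation by $\rho_{\alpha\beta}$ does \emph{not} act trivially on $\Gc^{(m+1)}$. The degree-$0$ and degree-$1$ parts of $\rho_{\alpha\beta}$ (the transition data of $M$ and $E$) act nontrivially on the degree-$(m+1)$ graded piece --- this is precisely what makes your discrepancy a \v Cech cocycle for the bundle-valued sheaf $\mathcal Q_E^{(m+1);\pm}$ rather than for a constant sheaf, and it is the same conjugation-by-$\rho$ that appears explicitly in the paper's computation \eqref{9g859gh95hfFD} and in the definition of $\Gam_{\Uc\Vc\Wc}$ in \eqref{mchrehehruei38deu383j}. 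What \emph{is} true is that the degree-$\geq 2$ correction parts of $\rho_{\alpha\beta}$ raise filtration degree by at least two and hence die modulo $\Jc^{m+2}$ when applied to degree-$(m+1)$ elements. This holds for all $m$; the hypothesis $m\geq 2$ is not there to tame conjugation but to exclude the degenerate cases $m\leq 1$, where $\rho^{(1)}$ satisfies the cocycle identity exactly (not just modulo $\Jc^{m+1}$), so the obstruction vanishes automatically (cf.\ Lemma~\ref{pjiroich4h894ho} and Corollary~\ref{dopmropvopejpv}).
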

~\\

Theorem \ref{jnvjknknvkjnvk} is stated in \cite[p. 1188]{EASTBRU}, where it is deduced from the more general theory of extensions of holomorphic vector bundles, detailed in \cite{GRIFF}. We will give a more elementary proof of Theorem \ref{jnvjknknvkjnvk} in the section to follow.

\section{Obstructions to Existence}\label{4}

\noindent
In Remark \ref{dnclncjkdddjdjd} it was observed that a supermanifold modelled on $(M, E)$ may be thought of as a `maximal' super-geometric thickening of $(M, E)$. Question \ref{SUSYthickeningsquestion} then motivates asking: \emph{given a thickening $\Xfr^{(m)}_{(M, E)}$, does there exist a supermanifold $\Xfr_{(M, E)}$ which contains it?} We will ask this question again later. For now, we concentrate on Theorem \ref{jnvjknknvkjnvk} where the space of obstructions to the existence of such a supermanifold is identified. We will present here firstly a `direct' proof of Theorem \ref{jnvjknknvkjnvk} for $m = 2$, from which a useful criterion for checking whether a given thickening is obstructed or not will become apparent. A full (and more indirect) proof of Theorem \ref{jnvjknknvkjnvk} will then be given. To give these proofs, it will be convenient to digress and describe thickenings by means of gluing.

\subsection{Gluing Data for Thickenings} Associated to any manifold (smooth or complex) is the data of an atlas, charts and transition functions. Conversely, the \emph{manifold gluing construction} allows one to construct a manifold from a given collection of such data. This is described in \cite{LEE} for smooth manifolds and in \cite{KS} for complex manifolds. Importantly, a similar description exists for supermanifolds and thickenings. 

\begin{CON}\label{knkniucbruibckj}
(for supermanifolds) If $U$ is an open subset of $\Cbb^p$, then $\Uc := U\times\Cbb^{0|q}$ will be an open subset of $\Cbb^{p|q}$ with $\Uc_\red = U$. Now let $\Ufr = \{U, V, W, \ldots\}$ be a collection of open subsets of $\Cbb^p$ and $\{\Uc, \Vc,\Wc, \ldots\}$ a corresponding collection of subsets of $\Cbb^{p|q}$. Furthermore, let $\{\Uc_\Vc, \Uc_\Wc, \ldots\}$ be a collection of subsets where $\Uc_\bt\subset \Uc; \Vc_\bt\subset \Vc$ and so on. The transition data $\rho = \{\rho_{\Uc\Vc}\}$ then consists of a collection of isomorphisms $\rho_{\Uc\Vc} : \Uc_\Vc \stackrel{\cong}{\ra}\Vc_\Uc$ satisfying the cocycle condition, represented here by commutativity of the following diagram:
\begin{align}
\xymatrix{
\ar[dr]_{\rho_{\Uc\Wc}} \Uc_\Vc \cap \Uc_\Wc \ar[rr]^{\rho_{\Uc\Vc}} & & \Vc_\Uc\cap \Vc_\Wc\ar[dl]^{\rho_{\Vc\Wc}}\\
& \Wc_\Uc\cap\Wc_\Vc &
}
\label{mciornoihroi}
\end{align} 
The isomorphisms $\{\rho_{\Uc\Vc}\}$ are required to preserve the $\Zbb_2$-grading, in accordance with Definition $\ref{ncnrnv322rnk}$. If $(x^\mu, \q_a)$ (resp. $(y^\mu, \eta_a)$) denote coordinates on $\Uc$ (resp. $\Vc$), then on the intersection $\Uc\cap\Vc$ we can write:
\begin{align}
y^\mu &= \rho_{\Uc\Vc}^\mu(x, \q) = f_{UV}^\mu(x) + \sum_{|I|>0} f_{UV}^{\mu|2I}\q_{2I}
\label{uicuibviuv78v3}
\\
\eta_a &= \rho_{\Uc\Vc, a}(x, \q) = \zeta_{UV, a}^b(x)~\q_b + \sum_{|I|>0} \zeta_{UV, a}^{2I+1}(x)~\q_{2I+1}
\label{porviorhg894h89}
\end{align}
where $I$ is a multi-index and $|I|$ its length; for $I = (i_1, \ldots, i_n)$ that $\q_I = \q_{i_1}\wedge\cdots\wedge\q_{i_n}$; by $2I$ (resp. $2I+1$) it is meant the multi-indices of even (resp. odd) length. The coefficient functions $\{(f_{UV}^{\mu|2I})\}$ (resp. $\{ (\zeta_{UV, a}^{2I+1})\}$) are holomorphic and defined on the intersection $\Uc\cap \Vc$. 
Then just as for manifolds, a supermanifold $\Xfr_{(M, E)}$ is constructed by setting:\footnote{note, this is well defined up to common refinement of the open covering. In this way $\Xfr_{(M, E)}$ will not depend on this choice of open covering.}
\[
\Xfr_{(M, E)} = \frac{\bigsqcup_{\{\Uc,\Vc,\ldots\}} \Uc}{\{\Uc_\Vc \sim_{\rho_{\Uc\Vc}}\Vc_\Uc\}}.
\]
The transition data $\rho = \{\rho_{\Uc\Vc}\}$ is graded by the degree of the monomials $\q_I$. In this way, observe that the transition data for the manifold $M$ (resp. vector bundle $E$) is encoded in the degree-zero (resp. degree-one) components of $\rho$. For more details on this construction see \cite[p. 92]{ROG}. 
\end{CON}

\begin{DEF}\label{k49fk90fk3p}
\emph{Gluing data for a supermanifold $\Xfr$ described in Construction \ref{knkniucbruibckj} will be referred to as a \emph{trivialisation} and denoted $(\Ufr, \rho)$, or simply $\rho$. 
}
\end{DEF}

Following Construction \ref{nruiciurhf94h9h} and \ref{knkniucbruibckj} above, we describe gluing data for thickenings in what follows. Central to the construction is the dichotomy between trivial and non-trivial thickenings from Definition \ref{deduieuidh4hd487h}.

\begin{CON}\label{opnciornvruo4}
Fix subsets $\{U, V, W, \ldots\}$ of $\Cbb^p$. For the open set $U$, let $\Cc_U$ denote its structure sheaf. Then associated to each $U$ is the locally ringed space $\Uc^{(m)} = (U, \Oc_U^{(m)})$, where $\Oc_U^{(m)} = \Cc_U\otimes\wedge^{k\leq m}\Cbb^q$.\footnote{Here we define: $\wedge^{k\leq l}\Cbb^q :=\wedge^\bt\Cbb^q /J^{l+1}$ where $J  = \ker\{\wedge^\bt\Cbb^q \twoheadrightarrow \Cbb\}$. As $\Cbb$-modules (complex vector spaces) we have $\wedge^{k\leq l}\Cbb^q\cong \bigoplus_{k=0}^l\wedge^k\Cbb^q$.
} By Definition \emph{\ref{deduieuidh4hd487h}} we see that $\Uc^{(m)}$ will be a trivial, $m$-th order thickening of $U$. In this way, we see that associated to $\{U, V, W, \ldots\}$ will be a collection of trivial, $m$-th order thickenings $\Ufr^{(m)} = \{\Uc^{(m)}, \Vc^{(m)},\Wc^{(m)}, \ldots\}$. Now just as in Construction \emph{\ref{knkniucbruibckj}} we fix a collection of subsets $\{\Uc_\Vc^{(m)}, \Uc_\Wc^{(m)}, \ldots\}$, where $\Uc_\bt^{(m)}\subset \Uc^{(m)}$, and a collection of isomorphisms $\rho^{(m)} = \{\rho^{(m)}_{\Uc\Vc}\}$, where $\rho^{(m)}_{\Uc\Vc} : \Uc_\Vc^{(m)} \stackrel{\cong}{\ra}\Vc_\Uc^{(m)}$ satisfies the cocycle condition as in \eqref{mciornoihroi}$:$
\begin{align}
\xymatrix{
\ar[dr]_{\rho^{(m)}_{\Uc\Wc}} \Uc^{(m)}_\Vc \cap \Uc^{(m)}_\Wc \ar[rr]^{\rho^{(m)}_{\Uc\Vc}} & & \Vc^{(m)}_\Uc\cap \Vc^{(m)}_\Wc\ar[dl]^{\rho^{(m)}_{\Vc\Wc}}\\
& \Wc^{(m)}_\Uc\cap\Wc^{(m)}_\Vc &
}
\label{090038083j}
\end{align}
Just as in \eqref{uicuibviuv78v3} and \eqref{porviorhg894h89} we have here on $\Uc\cap\Vc$
\begin{align*}
y^\mu &= \rho_{\Uc\Vc}^\mu(x, \q) = f_{UV}^\mu(x) + \sum_{0<|I|\leq m} f_{UV}^{\mu|2I}\q_{2I}
\\
\eta_a &= \rho_{\Uc\Vc, a}(x, \q) = \zeta_{UV, a}^b(x)~\q_b + \sum_{0<|I|\leq m} \zeta_{UV, a}^{2I+1}(x)~\q_{2I+1}.
\end{align*}
With this data a super-geometric thickening $\Xfr_{(M, E)}^{(l)}$ is given by:
\[
\Xfr^{(m)}_{(M, E)} = \frac{\bigsqcup_{\Uc^{(m)}\in \Ufr^{(m)}} \Uc^{(m)}}{\{\Uc^{(m)}_\Vc \sim_{\rho^{(m)}_{\Uc\Vc}}\Vc^{(m)}_\Uc\}}.
\]
Note, this is consistent with the construction of thickenings associated to supermanifolds described in Construction \emph{\ref{nruiciurhf94h9h}}.  Of course, the thickening described here does not presume the existence of some supermanifold containing it and so is abstract.
\end{CON}

\begin{DEF}
\emph{Gluing data for an $m$-th order thickening $\Xfr_{(M, E)}^{(m)}$ as described in Construction \ref{opnciornvruo4} will be referred to as a \emph{trivialisation} and denoted $(\Ufr^{(m)}, \rho^{(m)})$. 
}
\end{DEF}

From Construction \ref{opnciornvruo4} it is evident that Question \ref{SUSYthickeningsquestion} may be reformulated at the level of trivialisations as follows: \emph{given a trivialisation $(\Ufr^{(m)}, \rho^{(m)})$ for $\Xfr^{(m)}$, does there exist a trivialisation $(\Ufr^{(m+1)}, \rho^{(m+1)})$ for a thickening $\Xfr^{(m+1)}$ such that each $\Uc^{(m+1)}$ is a thickening of $\Uc^{(m)}$ and $\rho^{(m)} \equiv \rho^{(m+1)}\mod \Jc^{m+1}$?} We address this question in what follows for $m = 2$ and will thereby arrive at a proof of Theorem \ref{jnvjknknvkjnvk} in this particular case.

\subsection{Proof of Theorem $\ref{jnvjknknvkjnvk}_{m=2}$}
We begin with the following result whose proof can be found, albeit under a different guise, in \cite{GREEN, YMAN}.

\begin{LEM}\label{pjiroich4h894ho}
There exists a bijection $\mathrm{\bf T}^1(\Pi E^{(1)})\cong H^1(M, \Tfr_M\otimes\wedge^2\Ec)$. 
\end{LEM}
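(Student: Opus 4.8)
The plan is to compute $\mathrm{T}^1(\Pi E^{(1)})$ directly from the gluing description in Construction \ref{opnciornvruo4} with $m = 1 \to 2$. A $2$-nd order thickening $\Xfr^{(2)}$ containing $\Pi E^{(1)}$ is specified, relative to a fixed trivialisation, by a collection of isomorphisms $\rho^{(2)}_{\Uc\Vc}: \Uc^{(2)}_\Vc \stackrel{\cong}{\ra} \Vc^{(2)}_\Uc$ whose reduction mod $\Jc^2$ agrees with the fixed gluing data for $\Pi E^{(1)}$. Since $\Pi E^{(1)}$ already fixes the degree-zero part (the transition functions of $M$) and the degree-one part (the transition functions of $E$), the only freedom in $\rho^{(2)}_{\Uc\Vc}$ lives in the degree-two monomials $\q_{2I}$ appearing in $y^\mu = \rho^\mu_{\Uc\Vc}$; note the odd coordinate transformation $\eta_a$ receives no new data at order $2$, since $\Jc^2/\Jc^3$ is concentrated in even degree. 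First I would show that the difference of any two such extensions $\rho^{(2)}$ and $\tilde\rho^{(2)}$, measured in the local even coordinates, is a $\Cc_M$-linear derivation of the form $\sum_\mu g^\mu_{UV}(x,\q)\,\pt/\pt x^\mu$ with $g^\mu_{UV}$ of pure $\q$-degree $2$; that is, a Čech $1$-cochain valued in $\Tfr_M\otimes\wedge^2\Ec$ (using $\wedge^2\Cbb^q \cong$ the degree-$2$ part and $\Tfr_M$ for the $\pt/\pt x^\mu$).

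The key steps, in order: (1) identify the set of extensions $\rho^{(2)}$ of the fixed $\rho^{(1)}$ with an affine space over Čech $1$-cochains in $\Tfr_M \otimes \wedge^2\Ec$, by subtracting a chosen base extension (which exists locally, e.g. the split one $\Pi E^{(2)}$); (2) impose the cocycle condition \eqref{090038083j} on $\rho^{(2)}$ and show, by expanding to order $\q^2$, that it is equivalent to the cochain being a Čech $1$-\emph{cocycle}; this is where the hypothesis $m \geq 2$ is used in Theorem \ref{jnvjknknvkjnvk}, but for $m=1 \to 2$ the cross-terms $f^{\mu|2I}_{UV}\cdot(\text{lower order})$ assemble precisely into the Čech coboundary of the transition functions and cancel, leaving the linear condition; (3) unwind Definition \ref{equivextensusythick}: an equivalence $\Xfr^{(2)} \stackrel{\cong}{\ra}\tilde\Xfr^{(2)}$ restricting to the identity on $\Pi E^{(1)}$ is, locally, a coordinate change of the form $x^\mu \mapsto x^\mu + h^\mu_U(x,\q)$ with $h^\mu_U$ of $\q$-degree $2$, i.e. a Čech $0$-cochain in $\Tfr_M\otimes\wedge^2\Ec$; and conjugating $\rho^{(2)}$ by such changes shifts the associated $1$-cocycle exactly by its Čech coboundary. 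Assembling (1)--(3) gives a bijection $\mathrm{T}^1(\Pi E^{(1)}) \cong \check H^1(\Ufr, \Tfr_M\otimes\wedge^2\Ec)$, and passing to a good cover (or taking the limit over refinements, as in the footnote to Construction \ref{knkniucbruibckj}) identifies this with $H^1(M, \Tfr_M\otimes\wedge^2\Ec)$.

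I expect the main obstacle to be step (2): verifying cleanly that the cocycle condition on the \emph{nonlinear} gluing maps $\rho^{(2)}$ reduces, after subtracting the base point, to the \emph{linear} cocycle condition on the difference cochain. One must check that the corrections coming from composing the degree-$1$ parts (the $\zeta_{UV}$'s) with the new degree-$2$ terms, together with the chain-rule contributions of $f^\mu_{UV}$ acting on $\q$-degree-$2$ quantities, cancel between the two extensions being compared — precisely because both satisfy the same cocycle identity and agree mod $\Jc^2$. A secondary technical point is bookkeeping the even/odd grading: one must confirm that no degree-$2$ contribution to the odd transition law $\eta_a$ arises (it would live in $\Jc^2 \cap \Oc^{\odd}$, which vanishes at this order), so that the entire deformation is captured by the even part and hence by $\Tfr_M\otimes\wedge^2\Ec$ alone, with no contribution from $\wedge^2\Ec\otimes\Ec^\vee$. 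This asymmetry is exactly the $m$ odd versus $m$ even dichotomy of Theorem \ref{jnvjknknvkjnvk}, here in the first ($m$ odd, with $m=1$) case.
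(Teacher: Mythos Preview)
Your proposal is correct and follows essentially the same approach as the paper: both extract the degree-two coefficients $f_{UV}^{\mu|ij}$ of the even transition data in a trivialisation and verify that the cocycle condition on $\rho^{(2)}$ and the equivalence relation of Definition \ref{equivextensusythick} translate precisely into the \v Cech cocycle and coboundary conditions for $\Tfr_M\otimes\wedge^2\Ec$. Your write-up is in fact more explicit than the paper's, which states the key identifications and defers the calculations to \cite{BETTPHD}; in particular your observation that the odd transition law receives no degree-two correction (so that $\wedge^2\Ec\otimes\Ec^\vee$ does not enter) is exactly the content of the paper's formula $\rho_{\Uc\Vc,a}^{(2)} = \zeta_{UV,a}^b\,\q_b$ in \eqref{piorf84hf4j0f43}.
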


\begin{proof}
Let $\Xfr^{(2)}_{(M, E)}$ be a second order thickening with trivialisation $(\Ufr^{(2)}, \rho^{(2)})$. Here $\rho^{(2)} = \{\rho^{(2)}_{\Uc\Vc}\}$ and, following Construction \ref{opnciornvruo4}, we write:
\begin{align}
\rho_{\Uc\Vc}^{\mu;(2)} = f_{UV}^\mu + f_{UV}^{\mu|ij}~\q_{ij}
&&
\mbox{and}
&&
\rho_{\Uc\Vc, a}^{(2)} = \zeta_{UV, a}^b~\q_b.
\label{piorf84hf4j0f43}
\end{align}
The summation in \eqref{piorf84hf4j0f43} over the indices $i, j$ is implicit. Now as a result of the cocycle condition we will find that $f^{(2)} = \{(f^{\mu|ij}_{UV})\}$ is a cocycle representative of a class $\om_{(M, E)}\in H^1(M, \wedge^2\Ec\otimes\Tfr_M)$. Furthermore, it is a straightforward check that the notion of equivalence in Definition \ref{equivextensusythick} is precisely captured by that notion of equivalence at the level of cohomology. That is, if $\Xfr^{(2)\p}_{(M, E)}$ is another first order thickening of $\Pi E^{(1)}$ and if the corresponding cocycle representative $f^{(2)\p}$ is cohomologous to $f^{(2)}$, then it is straightforward to construct an equivalence between the thickenings $\Xfr^{(2)}_{(M, E)}$ and $\Xfr^{(2)\p}_{(M, E)}$ and vice-versa. For the calculations justifying our assertions, we refer to \cite[Chapter 2]{BETTPHD}. 
\end{proof}

Now if we are given a second order thickening $\Xfr^{(2)}_{(M, E)}$, then by construction it will be a thickening of $\Pi E^{(1)}$ by Lemma \ref{fiugf78g79h380fj09}. A such, by Lemma \ref{pjiroich4h894ho}, it will define a class in $H^1(M, \Tfr_M\otimes\wedge^2\Ec)$. We denote this class by $\om_{(M, E)}$ and refer to it as the \emph{obstruction class associated to $\Xfr^{(2)}_{(M, E)}$}. The central result of the present section, which we will prove en route to Theorem \ref{jnvjknknvkjnvk}, is the following:

\begin{PROP}\label{popiioh4o8fh84o}
Consider the exact sequence associated to the degree-two component of tangent sheaf $\Tfr_{\Pi E}$ in Lemma \emph{\ref{hjbbjhhveyjj}}, i.e., the sequence in \eqref{jkkbkbkbk} at $k = 2$. It induces a long-exact sequence on sheaf cohomology containing the piece:
\begin{align}
\xymatrix{
\cdots \ar[r] & H^1(M, \Tfr_M\otimes\wedge^2\Ec) \ar[r]^{\pt_*} & H^2(M, \wedge^3\Ec\otimes\Ec^\vee) \ar[r] & \cdots
}
\label{poijirjciojj49j4}
\end{align}
Then a given, second-order thickening $\Xfr^{(2)}_{(M, E)}$ is unobstructed if and only if its obstruction class satisfies: $\pt_*(\om_{(M, E)})=0$. 
\end{PROP}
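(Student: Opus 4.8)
\textbf{Proof strategy for Proposition \ref{popiioh4o8fh84o}.}

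The plan is to translate Question \ref{SUSYthickeningsquestion} (for $m=2$) into a cochain-level obstruction computation using the gluing data of Construction \ref{opnciornvruo4}, and then recognize the resulting cocycle as the image of $\om_{(M,E)}$ under the connecting homomorphism $\pt_*$ of the long exact sequence attached to \eqref{jkkbkbkbk} at $k=2$. First I would fix a trivialisation $(\Ufr^{(2)}, \rho^{(2)})$ for $\Xfr^{(2)}_{(M,E)}$, with $\rho^{(2)}_{\Uc\Vc}$ written as in \eqref{piorf84hf4j0f43}, and ask what it takes to extend it to a trivialisation $(\Ufr^{(3)}, \rho^{(3)})$ of a third-order thickening restricting to $\rho^{(2)}$ modulo $\Jc^3$. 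Concretely, one writes the most general allowed extension
\[
\rho^{\mu;(3)}_{\Uc\Vc} = f^\mu_{UV} + f^{\mu|ij}_{UV}\,\q_{ij}, \qquad
\rho^{(3)}_{\Uc\Vc,a} = \zeta^b_{UV,a}\,\q_b + \zeta^{ijk}_{UV,a}\,\q_{ijk},
\]
so the only genuinely new gluing parameter (since $\Xfr^{(3)}$ is $\Zbb_2$-graded, the even part at degree $3$ contributes nothing new by \eqref{3hf8gf83gf793hf39}) is the collection $\zeta^{(3)} = \{(\zeta^{ijk}_{UV,a})\}$, which is a $0$-cochain valued in $\wedge^3\Ec\otimes\Ec^\vee$ relative to the cover. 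The thickening $\Xfr^{(3)}$ exists if and only if $\zeta^{(3)}$ can be chosen so that the $\rho^{(3)}_{\Uc\Vc}$ satisfy the cocycle condition \eqref{090038083j} at order $3$.

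Next I would compute the failure of the cocycle condition at order $3$ for the naive choice $\zeta^{(3)}=0$. Composing $\rho^{(3)}_{\Uc\Vc}$ with $\rho^{(3)}_{\Vc\Wc}$ and comparing with $\rho^{(3)}_{\Uc\Wc}$, all terms of degree $\le 2$ cancel because $\rho^{(2)}$ is already a cocycle; the degree-$3$ discrepancy is a Čech $2$-cochain $c = \{c_{UVW}\}$. The key observation is that the degree-$3$ part of the composition receives a contribution precisely of the form ``$f^{\mu|ij}$ fed into the derivative of the odd transition $\zeta^b\q_b$'' — i.e. the discrepancy $c$ is built bilinearly from the degree-$2$ even data $f^{(2)} = \{(f^{\mu|ij}_{UV})\}$ and the linear (vector-bundle) data $\zeta^{(0)} = \{(\zeta^b_{UV,a})\}$. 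I would then check that, modulo Čech coboundaries, this $c$ depends only on the cohomology class $\om_{(M,E)}\in H^1(M,\Tfr_M\otimes\wedge^2\Ec)$ determined by $f^{(2)}$ (Lemma \ref{pjiroich4h894ho}), and that altering $\zeta^{(3)}$ from $0$ to an arbitrary $0$-cochain changes $c$ by exactly the Čech coboundary $\delta\zeta^{(3)}$ (valued in $\wedge^3\Ec\otimes\Ec^\vee$, the sub-object in \eqref{jkkbkbkbk} at $k=2$). Hence $\Xfr^{(3)}$ exists $\iff$ the class $[c]\in H^2(M,\wedge^3\Ec\otimes\Ec^\vee)$ vanishes. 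Finally I would identify $[c]$ with $\pt_*(\om_{(M,E)})$: this is essentially the definition of the connecting map — a $1$-cocycle for the quotient $\Tfr_M\otimes\wedge^2\Ec$ is lifted (over intersections) into $\Tfr_{\Pi E}[2]$, and the failure of the lift to be a cocycle lands in $\wedge^3\Ec\otimes\Ec^\vee$; one checks the lift dictated by the gluing construction reproduces $c$ up to sign.

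\textbf{Main obstacle.} The routine bookkeeping is the bilinear composition-of-transition-data calculation; the genuinely delicate point is the last identification, namely verifying that the cochain $c$ produced by the gluing construction agrees (up to a universal sign and up to coboundary) with the connecting homomorphism $\pt_*$ applied to $\om_{(M,E)}$, rather than to some other naturally associated class. This requires matching the ad hoc lift ``set $\zeta^{(3)}=0$'' against the lift implicit in the snake-lemma construction for \eqref{jkkbkbkbk} at $k=2$, keeping careful track of how a section of $\Tfr_M\otimes\wedge^2\Ec$ sits inside $\Tfr_{\Pi E}[2]$ as a derivation (how $\partial_{x^\mu}$, twisted by $\q_{ij}$, acts on the odd coordinates $\q_a$). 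I expect this to be a short but sign-sensitive diagram chase, and I would isolate it as a lemma so the bilinear computation and the homological identification are cleanly separated; the equivalence statement (Definition \ref{equivextensusythick} matching cohomological equivalence) then lets one pass from the chosen trivialisation to the class $\om_{(M,E)}$ without loss of generality.
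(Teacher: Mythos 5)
Your strategy is essentially the paper's: extend a trivialisation $\rho^{(2)}$ by a degree-$3$ term $\zeta^{(3)}$, observe the cocycle defect $c$ at degree $3$, and identify $[c]$ with $\pt_*(\om_{(M,E)})$ via the snake-lemma description of the connecting map; the delicate identification you isolate is precisely the content of the paper's Lemma~\ref{lkliioolkss}. However, there is a real degree error in your bookkeeping: the collection $\zeta^{(3)} = \{(\zeta^{ijk}_{UV,a})\}$ is a Čech \emph{$1$-cochain}, not a $0$-cochain --- it is indexed by pairwise intersections $U\cap V$, exactly as your own notation displays. As written, your claim that varying $\zeta^{(3)}$ changes $c$ by the coboundary $\delta\zeta^{(3)}$ is ill-typed: for a $0$-cochain, $\delta\zeta^{(3)}$ is a $1$-cochain while $c$ is a $2$-cochain, so the two cannot be compared and the key ``$\Xfr^{(3)}$ exists iff $[c]=0$'' step collapses. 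Once you correct this ($\zeta^{(3)}\in C^1(\Ufr,\wedge^3\Ec\otimes\Ec^\vee)$, so $\delta\zeta^{(3)}\in C^2$, matching $c$), the argument goes through and aligns with the paper, which explicitly flags $\zeta^{(3)}$ as a $1$-cochain and expresses the cocycle condition as the equality of the $2$-coboundary \eqref{ioiohfojfoifjof} with the $2$-cocycle \eqref{ejckencjkencjkenc}.
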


We will now give a proof of Theorem \ref{jnvjknknvkjnvk} for $m = 2$ by appealing to the description of a thickening by means of trivialisations, as in Construction \ref{opnciornvruo4}. 
\\\\
\emph{Proof of Theorem $\emph{\ref{jnvjknknvkjnvk}}_{m=2}$}. Following on from the proof of Lemma \ref{pjiroich4h894ho}, we consider here the analogous picture for any thickening $\Xfr^{(3)}_{(M, E)}$ of $\Xfr^{(2)}_{(M, E)}$. If $(\Ufr^{(3)}, \rho^{(3)})$ is a trivialisation for $\Xfr^{(3)}_{(M, E)}$, then $\rho^{(3)}$ will be as in \eqref{piorf84hf4j0f43}, however we will have an additional term:
\begin{align}
\rho^{(3);\mu}_{\Uc\Vc} = \rho^{(2);\mu}_{\Uc\Vc}
&&
\mbox{and}
&&
\rho^{(3)}_{\Uc\Vc, a} = \rho^{(2)}_{\Uc\Vc, a} + \zeta_{UV, a}^{ijk}~\q_{ijk}.
\label{piorf83f33f34hf4j0f43}
\end{align}
The summation in \eqref{piorf83f33f34hf4j0f43} over the indices $i, j, k$ is implicit. In assuming that a thickening $\Xfr^{(3)}_{(M, E)}$ of $\Xfr^{(2)}_{(M, E)}$ exists, we will be faced with a non-trivial condition that needs to hold. We will write down this condition explicitly here as an equation which, we argue, will lie in the second cohomology group. This is the assertion of the present theorem. To begin note that $\zeta^{(3)} = \{(\zeta_{UV, a}^{ijk})\}$ here is a $1$-cochain valued in $\wedge^3\Ec\otimes\Ec^\vee$. In imposing the cocycle condition on $\rho^{(3)}$ we obtain:  
\begin{align}
\zeta_{UW, a}^{lmn}~\q_{lmn}\frac{\pt}{\pt \xi_a} - \zeta_{UV, b}^{lmn}~\q_{lmn}\frac{\pt}{\pt \eta_b} 
~-~
&\zeta_{VW, a}^{ijk}~\eta_{ijk}\frac{\pt}{\pt \xi_a}
\label{ioiohfojfoifjof}
\\
&=
\notag
\\
\frac{\pt \zeta_{VW, a}^b}{\pt y^\mu}&f_{UV}^{\mu|lm}\zeta_{UV, b}^n~\q_{lmn}\frac{\pt}{\pt \xi_a}.
\label{ejckencjkencjkenc}
\end{align}
Now note that $\eqref{ejckencjkencjkenc}$ is a 2-cochain. The proof of Theorem $\ref{opnciornvruo4}_{m=2}$ will  follow if we can successfully argue that $\eqref{ejckencjkencjkenc}$ is in fact a 2-cocycle, for then the equality $\eqref{ioiohfojfoifjof} = \eqref{ejckencjkencjkenc}$ will make sense as a statement in the second cohomology group, which is precisely what is asserted in Theorem $\ref{jnvjknknvkjnvk}_{m=2}$. We will now show that \eqref{ejckencjkencjkenc} is a 2-cocycle in a way that will also prove Proposition \ref{popiioh4o8fh84o} with the following:

\begin{LEM}\label{lkliioolkss}
$\eqref{ejckencjkencjkenc}$ is a $2$-cocycle representative of $\pt_*(\om_{(M, E)})$. 
\end{LEM}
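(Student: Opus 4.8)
The plan is to verify two things about the $2$-cochain in \eqref{ejckencjkencjkenc}: first, that it is a genuine $2$-cocycle valued in $\wedge^3\Ec\otimes\Ec^\vee$; and second, that its cohomology class coincides with $\pt_*(\om_{(M,E)})$, where $\pt_*$ is the connecting homomorphism attached to the sequence \eqref{jkkbkbkbk} at $k=2$. Both assertions will follow from a direct chase with the gluing data, organised so that the connecting-homomorphism computation is carried out explicitly rather than invoked abstractly.

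First I would recall how $\pt_*$ is computed on a \v{C}ech representative. The class $\om_{(M,E)}\in H^1(M,\Tfr_M\otimes\wedge^2\Ec)$ is represented by the $1$-cocycle $f^{(2)} = \{(f^{\mu|ij}_{UV})\}$ from \eqref{piorf84hf4j0f43}; under the identification $\Tfr_M\otimes\wedge^2\Ec \cong \Tfr_M\otimes\wedge^2\Ec$ sitting as the quotient term in \eqref{jkkbkbkbk} at $k=2$, one lifts each $f^{\mu|ij}_{UV}\,\q_{ij}\,\pt/\pt x^\mu$ to a local section of $\Tfr_{\Pi E}[2]$ over $\Uc\cap\Vc$ — call such a lift $\widetilde f_{UV}$ — and then $\pt_*(\om_{(M,E)})$ is represented by the \v{C}ech $2$-cochain $\{\widetilde f_{VW} - \widetilde f_{UW} + \widetilde f_{UV}\}$, which automatically lands in the subsheaf $\wedge^3\Ec\otimes\Ec^\vee$. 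The point of the trivialisation picture is that the natural lift is supplied by the gluing map itself: the failure of the degree-two part of the coordinate changes $\rho^{(2)}$ to satisfy the cocycle identity \emph{on the nose} (as opposed to modulo $\Jc^2$) is exactly a coboundary-type expression, and the surviving piece is \eqref{ejckencjkencjkenc}. So the strategy is to expand the cocycle condition for $\rho^{(2)}$ (equivalently, the $m=2$ truncation of the full composition $\rho_{\Uc\Vc}$ followed by $\rho_{\Vc\Wc}$ compared with $\rho_{\Uc\Wc}$), isolate the $\wedge^3\Ec\otimes\Ec^\vee$-component, and recognise it as the Čech differential applied to the lift $\{\widetilde f_{UV}\}$ — which is precisely the right-hand side of \eqref{ejckencjkencjkenc}, the term $\tfrac{\pt \zeta^b_{VW,a}}{\pt y^\mu} f^{\mu|lm}_{UV}\zeta^n_{UV,b}\,\q_{lmn}\,\pt/\pt\xi_a$.

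Next I would establish the cocycle property of \eqref{ejckencjkencjkenc} directly. This is a four-fold overlap computation: write out $\delta$ of the $2$-cochain on $\Uc\cap\Vc\cap\Wc\cap\Xc$ (alternating sum over omitting one index), substitute the explicit formula, and use (i) the chain rule for the composite transition functions $f_{UW} = f_{VW}\circ f_{UV}$ on the reduced manifold $M$, (ii) the cocycle identity $\zeta_{UW,a}^b = \zeta_{VW,c}^b\,\zeta_{UV,a}^c$ for the modelling bundle $E$ (both of which are the degree-zero and degree-one shadows of the cocycle condition on $\rho$, as noted after \eqref{porviorhg894h89}), and (iii) the fact that $f^{(2)}$ is itself a $1$-cocycle in $\Tfr_M\otimes\wedge^2\Ec$. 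These cancellations are the same ones that make the Čech differential square to zero, so in fact the cleanest route is: show \eqref{ejckencjkencjkenc} equals $\delta\{\widetilde f_{UV}\}$ for the chosen lift, whence $\delta(\eqref{ejckencjkencjkenc}) = \delta\delta\{\widetilde f_{UV}\} = 0$ formally, and separately check that $\delta\{\widetilde f_{UV}\}$ indeed has no component outside $\wedge^3\Ec\otimes\Ec^\vee$ (its $\Tfr_M\otimes\wedge^2\Ec$-component vanishes precisely because $f^{(2)}$ is a cocycle). That simultaneously gives the cocycle property and the identification with $\pt_*(\om_{(M,E)})$, finishing Lemma \ref{lkliioolkss} and, with it, Proposition \ref{popiioh4o8fh84o} (since $\Xfr^{(2)}_{(M,E)}$ is unobstructed iff the obstruction $2$-cochain \eqref{ejckencjkencjkenc} is a coboundary, iff its class $\pt_*(\om_{(M,E)})$ vanishes) and Theorem $\ref{jnvjknknvkjnvk}_{m=2}$.

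The main obstacle will be bookkeeping rather than conceptual: one must track exactly which monomials in the $\q$'s survive when truncating modulo $\Jc^4$, make sure that the odd-coordinate change $\rho^{(3)}_{\Uc\Vc,a} = \rho^{(2)}_{\Uc\Vc,a} + \zeta^{ijk}_{UV,a}\q_{ijk}$ contributes no spurious even-degree terms, and confirm that the term \eqref{ejckencjkencjkenc} — which arises from differentiating the bundle cocycle $\zeta_{VW}$ along the even deformation $f^{\mu|lm}_{UV}$ — is the \emph{only} obstruction term, i.e. that all other contributions to the failure of the cocycle condition for $\rho^{(3)}$ can be absorbed by redefining $\zeta^{(3)} = \{(\zeta^{ijk}_{UV,a})\}$ (that is, they are coboundaries in $\wedge^3\Ec\otimes\Ec^\vee$). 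Making this last point precise is what ties the vanishing of $\pt_*(\om_{(M,E)})$ to the actual existence of a consistent choice of $\rho^{(3)}$, and hence to unobstructedness in the sense of Definition \ref{smeioiof894f89jcoi}; I would handle it by observing that modifying $\zeta^{(3)}$ by a $1$-cochain $\lambda = \{\lambda^{ijk}_{UV,a}\}$ changes the left-hand side \eqref{ioiohfojfoifjof} precisely by $\delta\lambda$, so solvability of $\eqref{ioiohfojfoifjof}=\eqref{ejckencjkencjkenc}$ for some choice of $\rho^{(3)}$ is equivalent to $[\eqref{ejckencjkencjkenc}] = 0$ in $H^2(M,\wedge^3\Ec\otimes\Ec^\vee)$.
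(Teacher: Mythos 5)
Your proposal is correct and follows essentially the same route as the paper: you compute the connecting homomorphism $\pt_*$ by choosing the obvious lifts of the cocycle $f^{(2)}$ to $\Tfr_{\Pi E}[2]$-valued sections (the paper's $Y_{\Uc\Vc}$, your $\widetilde f_{UV}$), form the \v{C}ech coboundary, observe that the $\Tfr_M\otimes\wedge^2\Ec$-component vanishes by the cocycle condition on $f^{(2)}$, and identify the surviving $\wedge^3\Ec\otimes\Ec^\vee$-part with \eqref{ejckencjkencjkenc}. The paper organises the final identification slightly more crisply by splitting the transition map $\rho^{(2)}$ into even and odd parts and recognising \eqref{ejckencjkencjkenc} as the pushforward $(\rho^{(2);-}_{\Vc\Wc})_* f^{(2)}_{\Uc\Vc}$, but this is the same vector-field change-of-coordinates computation you describe when you say the obstruction term arises from ``differentiating the bundle cocycle $\zeta_{VW}$ along the even deformation $f^{\mu|lm}_{UV}$''. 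One trivial slip: where you write ``modulo $\Jc^2$'' in characterising the failure of the cocycle identity for $\rho^{(2)}$, you mean ``modulo $\Jc^3$'' --- the degree-two data is consistent through order two and the obstruction first appears in degree three.
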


\begin{proof}
The proof follows essentially by construction. Indeed, firstly note that the map $\Tfr_{\Pi E}[2]\ra \Tfr_M\otimes\wedge^2\Ec$ is given by sending a vector field $v$ in $\Tfr_{\Pi E}[2]$ to $v\mod \Jc^3$. At the level of $1$-cocycles: let $f^{(2)} =\{f^{(2)}_{\Uc\Vc}\} =  \{(f_{\Uc\Vc}^{\mu|ij})\}$ be a cocycle representative for $\om_{(M, E)}$. Then we have the cocycle condition:
 \begin{align}
f_{UW}^{\s|kl}~\q_{kl}\frac{\pt}{\pt z^\s} =   f^{\tau|kl}_{UV}~\q_{kl} \frac{\pt}{\pt y^\tau}+ f^{\s|ij}_{VW}~\eta_{ij}\frac{\pt}{\pt z^\s}
\label{4h9f8h5hiurhh89}
\end{align}
Now let $\{Y_{\Uc\Vc}\}$ be such that $Y_{\Uc\Vc}\equiv f^{(2)}_{\Uc\Vc}$ modulo $\Jc^3$. Then we have the 2-cocycle $Y_{\Uc\Vc\Wc} = \{Y_{\Uc\Wc} - Y_{\Uc\Vc} - Y_{\Vc\Wc}\}$ which, by \eqref{4h9f8h5hiurhh89}, vanishes modulo $\Jc^3$. Now by exactness of the induced, long-exact sequence on cohomology in \eqref{poijirjciojj49j4} we have, at the level of 2-cocycles, a 2-cocycle $Z = \{Z_{\Uc\Vc\Wc}\}$ in $\mathcal Z^2(\Ufr, \wedge^3\Ec\otimes\Ec^\vee)$ mapping to $Y = \{Y_{\Uc\Vc\Wc}\}$. That is $Z = Y|_{\wedge^3\Ec\otimes\Ec^\vee}$. Its cohomology class $[Z]$ is defined to be $\pt_*(\om_{(M, E)})$. To get an expression for its representative we will make use of the transition data $\rho^{(2)}$. Firstly recall that $f^{(2)}_{\Uc\Vc} = (f^{\mu|ij}_{\Uc\Vc})$ defines a vector field on $\Vc_\Uc$. With $\rho^{(2)}_{\Vc\Wc} : \Vc^{(2)}_\Wc \stackrel{\cong}{\ra}\Wc^{(2)}_\Uc$, we can send $f^{(2)}_{\Uc\Vc}$ to a vector field $(\rho^{(2)}_{\Vc\Wc})_*f^{(2)}_{\Uc\Vc}$ on $\Wc_\Uc$. Now note that $\rho^{(2)} = \rho^{(3)}\mod \Jc^3$. In writing $\rho^{(2)}$ in terms of its even and odd components $\rho^{(2)} = (\rho^{(2);+}, \rho^{(2);-})$, the condition in \eqref{4h9f8h5hiurhh89} asserts, on $\Wc_\Uc$, that: $f_{\Uc\Wc}^{(2)} = (\rho_{\Uc\Vc}^{(2);+})_*f^{(2)}_{\Uc\Vc} + f^{(2)}_{\Vc\Wc}$. Hence, as a vector field on $\Wc_\Uc$, we have:
\begin{align}
 Y_{\Uc\Vc\Wc}
&= (\rho^{(2);-}_{\Vc\Wc})_*f^{(2)}_{\Uc\Vc} 
\label{9g859gh95hfFD}
\\
\notag
&=f^{\mu|ij}_{UV}\q_{ij} ~(\rho^{(2)}_-)_*\left( \frac{\pt}{\pt y^\mu}\right)
\\
\notag
&= f^{\mu|ij}_{UV}\q_{ij} \cdot \left( \frac{\pt \zeta_{VW, a}^b}{\pt y^\mu}\eta_b\frac{\pt}{\pt \xi_a}\right)
\\
\notag
&=f^{\mu|ij}_{UV}\frac{\pt \zeta_{VW, a}^b}{\pt y^\mu}\zeta_{UV, b}^c~ \q_{ijc} \frac{\pt}{\pt \xi_a}.
\end{align} 
The lemma now follows by comparing \eqref{ejckencjkencjkenc} with \eqref{9g859gh95hfFD}. 
\end{proof}

An alternative and more direct check that \eqref{ejckencjkencjkenc} satisfies the conditions to be a 2-cocycle can be found in \cite[Appendix A.1]{BETTPHD}. The proof of Theorem $\ref{jnvjknknvkjnvk}_{m=2}$ and Proposition \ref{popiioh4o8fh84o} now follows. \qed
\\

A much simpler proof of Theorem $\ref{jnvjknknvkjnvk}_{m=2}$ may be given more indirectly and independently of the direct verification in Lemma \ref{lkliioolkss}. This uses the assumption that $\Xfr^{(2)}_{(M, E)}$ is a second order thickening, meaning $\rho^{(2)}$ must satisfy the cocycle condition. The proof is inductive and can in fact be applied to give a full proof of Theorem \ref{jnvjknknvkjnvk}, which we do so in what follows. 

\subsection{An Elementary Proof of Theorem $\ref{jnvjknknvkjnvk}$}
We will consider the case where we are given a super-geometric thickening of odd order $\Xfr^{(2m-1)}$. The argument remains essentially unchanged in the even case so we omit it here. Firstly, to briefly recap: if $\Xfr^{(2m)}$ denotes a $(2m)$-th order, super-geometric thickening containing $\Xfr^{(2m-1)}$, then it will admit a trivialisation $(\Ufr^{(2m)}, \rho^{(2m)})$, with $\rho^{(2m)} = \{\rho^{(2m)}_{\Uc\Vc}\}$ subject to the cocycle condition in \eqref{090038083j}, i.e., that:
\begin{align}
\rho^{(2m);\mu}_{\Uc\Wc} = \rho^{(2m);\mu}_{\Vc\Wc}\circ \rho^{(2m)}_{\Uc\Vc}.
\label{dkncrcurcrknckmcr87h3}
\end{align}
Note that $\rho_{\Uc\Vc, a}^{(2m)} = \rho_{\Uc\Vc, a}^{(2m-1)}$ and that, by assumption, $\rho^{(2m-1)}$ will satisfy the cocycle condition. Now write,
\begin{align}
\rho_{\Uc\Vc}^{(2m);\mu} = \rho_{\Uc\Vc}^{(2m-2);\mu} + \phi_{\Uc\Vc}^\mu
\label{pprirjiovrio}
\end{align}
where $\phi = \{(\phi_{\Uc\Vc}^\mu)\}$ comprises homogeneous quantities of degree-$(2m)$. In applying the decomposition in \eqref{pprirjiovrio} to \eqref{dkncrcurcrknckmcr87h3} we arrive at the following expression:
\begin{align}
\rho_{\Uc\Wc}^{(2m-2);\mu} - \rho_{\Vc\Wc}^{(2m-2);\mu}\circ \rho_{\Uc\Vc}^{(2m-2)}
=
\phi^\mu_{\Vc\Wc}\circ \rho_{\Uc\Vc}
+
\frac{\pt f^\mu_{VW}}{\pt y^\nu}\phi_{\Uc\Vc}^\nu
-
\phi^\mu_{\Uc\Wc}.
\label{dcnn8984jid}
\end{align}
We see that the right-hand side of \eqref{dcnn8984jid} consists only of those quantities which regulate the extension of $\rho^{(2m-1)}$ to $\rho^{(2m)}$. As for the left-hand side of \eqref{dcnn8984jid}, we set:
\begin{align}
\Gam_{\Uc\Vc\Wc}^\mu := \rho_{\Uc\Wc}^{(2m-2);\mu} - \rho_{\Vc\Wc}^{(2m-2);\mu}\circ \rho_{\Uc\Vc}^{(2m-2)}.
\label{mchrehehruei38deu383j}
\end{align}
Since $\rho^{(2m-2)}$ satisfies the cocycle condition (modulo $\Jc^{2m}$), it follows that $\Gam_{\Uc\Vc\Wc}^\mu$ will be homogeneous and of degree-$(2m)$.\footnote{Recall that we are implicitly considering all quantities here modulo $\Jc^{2m+1}$.} 

\begin{LEM}\label{bipalabophhhh89}
Let 
\[
\Gam_{\Uc\Vc\Wc} = \Gam_{\Uc\Vc\Wc}^\mu\frac{\pt}{\pt z^\mu}.
\] 
Then $\Gam = \{\Gam_{\Uc\Vc\Wc}\}$ will be a $2$-cocycle valued in $\Tfr_M\otimes\wedge^{2m}\Ec$.
\end{LEM}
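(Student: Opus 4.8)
The plan is to show that $\Gam = \{\Gam_{\Uc\Vc\Wc}\}$ as defined in \eqref{mchrehehruei38deu383j} is well-defined as a collection of degree-$(2m)$ vector fields, that it is valued in the correct sheaf $\Tfr_M\otimes\wedge^{2m}\Ec$, and that it satisfies the $2$-cocycle condition. The homogeneity claim has already been granted in the text: because $\rho^{(2m-2)}$ satisfies the cocycle condition modulo $\Jc^{2m}$, the difference $\rho_{\Uc\Wc}^{(2m-2);\mu} - \rho_{\Vc\Wc}^{(2m-2);\mu}\circ \rho_{\Uc\Vc}^{(2m-2)}$ vanishes modulo $\Jc^{2m}$, so modulo $\Jc^{2m+1}$ it is a sum of monomials of degree exactly $2m$ in the odd coordinates. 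Thus $\Gam_{\Uc\Vc\Wc}^\mu$ is a homogeneous degree-$(2m)$ function times the even coordinate vector fields $\pt/\pt z^\mu$; reading off coefficients of the degree-$2m$ monomials $\q_I$ (with $|I| = 2m$) identifies $\Gam_{\Uc\Vc\Wc}$ with a section of $\Tfr_M\otimes\wedge^{2m}\Ec$ over triple intersections, exactly as the degree-two piece was handled in Lemma \ref{pjiroich4h894ho}.

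The substantive step is the cocycle identity. First I would fix a convention: $\Gam_{\Uc\Vc\Wc}$ is really a vector field living on the triple overlap, and to compare the various $\Gam$'s on a quadruple overlap $\Uc\cap\Vc\cap\Wc\cap\Xc$ one must transport them using the appropriate $\rho$'s — but since $\Gam$ is already degree-$(2m)$, all the transition maps act on it through their \emph{degree-zero} part only, i.e.\ through the underlying transition functions $f_{UV}$ of $M$; every higher-order correction raises the degree past $2m$ and is killed modulo $\Jc^{2m+1}$. This is the same phenomenon exploited in the proof of Lemma \ref{lkliioolkss}. Consequently, checking the \v{C}ech $2$-cocycle condition $\dt\Gam = 0$ reduces to a purely formal manipulation of the expression $\rho_{\Uc\Wc}^{(2m-2);\mu} - \rho_{\Vc\Wc}^{(2m-2);\mu}\circ\rho_{\Uc\Vc}^{(2m-2)}$ under the alternating sum over the four triples obtained from $\{\Uc,\Vc,\Wc,\Xc\}$, where the composition $\circ$ is associative and the chain rule contributes only the Jacobian of the degree-zero parts.

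Concretely, I would write the alternating sum
\[
(\dt\Gam)_{\Uc\Vc\Wc\Xc} = \Gam_{\Vc\Wc\Xc} - \Gam_{\Uc\Wc\Xc} + \Gam_{\Uc\Vc\Xc} - \Gam_{\Uc\Vc\Wc},
\]
substitute the definition \eqref{mchrehehruei38deu383j} into each term (all entries now understood modulo $\Jc^{2m+1}$ and transported by the degree-zero transition data), and expand each composition $\rho_{\bt\bt}^{(2m-2)}\circ\rho_{\bt\bt}^{(2m-2)}$. The terms organise into pairs that cancel: the associativity of composition of the maps $\rho^{(2m-2)}$ makes the ``double composition'' contributions telescope, and the remaining first-order Jacobian terms $\frac{\pt f^\mu}{\pt z^\nu}$ pair off against each other because on degree-$(2m)$ quantities the action of a transition map is exactly multiplication by that Jacobian. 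After the dust settles every monomial appears twice with opposite sign, giving $(\dt\Gam)_{\Uc\Vc\Wc\Xc} = 0$. The main obstacle is purely bookkeeping: keeping careful track of which vector field lives over which chart and which transition map is used to compare them, and being rigorous that every correction term of degree $>2m$ really is discarded; once that is set up cleanly, the cancellation is forced by associativity of composition together with the chain rule, mirroring the verification already referenced in \cite[Appendix A.1]{BETTPHD}.
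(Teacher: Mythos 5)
Your proposal is correct and follows essentially the same route as the paper, whose own ``proof'' is a one-line citation of Kodaira--Spencer \cite[pp.\,254--255]{KS}; you have simply reconstructed what that argument looks like in the super-geometric setting. The two key observations you make---that the cocycle failure of $\rho^{(2m-2)}$ is concentrated in degree exactly $2m$ (granted in the text), and that degree-$(2m)$ quantities only feel the degree-preserving part of the transition data modulo $\Jc^{2m+1}$, so the \v Cech coboundary check reduces to associativity of composition plus the chain rule for $\pt f^\mu/\pt z^\nu$---are precisely the points that make the Kodaira--Spencer argument go through here. One small imprecision worth flagging: to see that $\Gam$ is a section of $\Tfr_M\otimes\wedge^{2m}\Ec$ (as opposed to merely being a degree-$(2m)$ vector field) you also need the linear odd transition data $\zeta_{UV,a}^b$, not only $f_{UV}$, since the $\q_I$'s transform by products of the $\zeta$'s; but for the coboundary computation itself, where the vector-field indices $\pt/\pt z^\mu$ are what gets pushed around, your statement that only the even Jacobian $\pt f/\pt z$ enters is accurate.
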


\begin{proof}
The proof follows along essentially the same lines as that given in \cite[p. 254-255]{KS}, so we omit it here.
\end{proof}

Finally, from \eqref{dcnn8984jid} we obtain the following equation:
\begin{align}
\phi^\mu_{\Vc\Wc}
\frac{\pt}{\pt z^\mu}
+
\phi_{\Uc\Vc}^\nu
\frac{\pt}{\pt y^\nu}
-
\phi^\mu_{\Uc\Wc}
\frac{\pt}{\pt z^\mu}
= 
\Gam_{\Uc\Vc\Wc}^\mu\frac{\pt}{\pt z^\mu}.
\label{dmodj833jso}
\end{align}
The left-hand side of \eqref{dmodj833jso} is clearly the coboundary of a 1-cochain valued in $\Tfr_M\otimes\wedge^{2m}\Ec$, whereas the right-hand side is a 2-cocycle valued in $\Tfr_M\otimes\wedge^{2m}\Ec$ by Lemma \ref{bipalabophhhh89}. The theorem now follows. \qed

\begin{REM}
\emph{Our usage of the word `elementary' to term the proof of Theorem \ref{jnvjknknvkjnvk} given here is motivated by a similar usage of this word in \cite{KS}. There it is used in reference to a proof of the existence of (formal) deformations of complex structures on compact manifolds. 
}
\end{REM}

\begin{REM}
\emph{Up to sign, the equation in \eqref{dmodj833jso} is explicitly written down as the equality $\eqref{ioiohfojfoifjof} = \eqref{ejckencjkencjkenc}$ in the case where we are given an even thickening $\Xfr^{(2)}_{(M, E)}$.}
\end{REM}

\subsection{A Classification of Thickenings}
Consequences of Theorem \ref{jnvjknknvkjnvk} and its proof provided in the previous (sub)section are the results to be given here regarding classifications of thickenings. It will be convenient to introduce the following notation:
\begin{align}
\mathcal Q^{(k);+}_E := 
\Tfr_M\otimes \wedge^k\Ec
&&
\mbox{and}
&&
\mathcal Q^{(k);-}_E := \wedge^k\Ec\otimes\Ec^\vee.
\label{dafoprmijo4jo4}
\end{align}
For further convenience we will write:
\[
\mathcal Q_E^{(i);\pm} := 
\left\{\begin{array}{ll}
\mathcal Q_E^{(i);+} & \mbox{if $i$ is even}
\\
\mathcal Q_E^{(i);-} &\mbox{if $i$ is odd}.
\end{array}
\right.
\]
With this notation the exact sequence in Lemma \ref{hjbbjhhveyjj} becomes:
\begin{align}
0 \ra \mathcal Q_E^{(k+1);-} \hookrightarrow  \Tfr_{\Pi E}[k] \twoheadrightarrow \mathcal Q_E^{(k);+}\ra0.
\label{oinriobuoioie}
\end{align}
In particular, we see that $\mathcal Q_E^{(i);\pm}$ will always be affiliated with sections of the tangent sheaf of even-degree tangent vectors.
\\\\
We have so far been guided by Question \ref{SUSYthickeningsquestion}, which was addressed in part in  Theorem \ref{jnvjknknvkjnvk}. In what follows we present a result which can be seen as a generalisation of Proposition \ref{popiioh4o8fh84o} or as a slight furthering of Theorem \ref{jnvjknknvkjnvk}. In particular, it will address Question \ref{SUSYthickeningsquestion} more directly. 

\begin{PROP}\label{o3jdioj3ijp3}
Let $\Xfr^{(l)}_{(M, E)}$ be a thickening of order $l$ and suppose $\mathrm{\bf T}^1(\Xfr_{(M, E)}^{(l)})$ is non-empty. Then there exists a map 
\[
\pt_* : \mathrm{\bf T}^1(\Xfr^{(l)}_{(M, E)}) \lra H^2(M, \mathcal Q^{(l+2);\pm}_E)
\]
which measures the failure for a thickening of $\Xfr^{(l)}_{(M, E)}$ to be unobstructed.   
\end{PROP}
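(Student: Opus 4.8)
The plan is to adapt, in a uniform way, the argument already carried out for $l=1$ (Proposition \ref{popiioh4o8fh84o} together with Lemma \ref{lkliioolkss}) and for general odd/even order (the elementary proof of Theorem \ref{jnvjknknvkjnvk}). First I would fix a thickening $\Xfr^{(l)}_{(M,E)}$ with $\mathrm{T}^1(\Xfr^{(l)}_{(M,E)})\neq\eset$, so that there exists some $(l+1)$-th order thickening $\Xfr^{(l+1)}_{(M,E)}$ containing it; choose a trivialisation $(\Ufr^{(l+1)}, \rho^{(l+1)})$ as in Construction \ref{opnciornvruo4}. The key point is that extending $\rho^{(l+1)}$ to gluing data $\rho^{(l+2)}$ for an $(l+2)$-th order thickening amounts, exactly as in \eqref{dcnn8984jid} and \eqref{dmodj833jso}, to solving a coboundary equation: writing $\rho^{(l+2)} = \rho^{(l)} + \phi$ with $\phi$ homogeneous of degree $l+2$ in the odd coordinates (the even component $\rho^{(l+2);\mu}$ getting corrected when $l+2$ is even, the odd component $\rho^{(l+2)}_{\Uc\Vc,a}$ when $l+2$ is odd), the cocycle condition on $\rho^{(l+2)}$ forces $\delta\phi = \Gam$ where $\Gam = \{\Gam_{\Uc\Vc\Wc}\}$ is built only from $\rho^{(l)}$ and is, by the argument of Lemma \ref{bipalabophhhh89} (which is the $l$-graded analogue of \cite[p.~254--255]{KS}), a $2$-cocycle valued in $\mathcal Q^{(l+2);\pm}_E$ — using here the identification of the relevant graded piece of the tangent sheaf via \eqref{oinriobuoioie}.

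Next I would define $\pt_*$. Given a class in $\mathrm{T}^1(\Xfr^{(l)}_{(M,E)})$, represented by a trivialisation $(\Ufr^{(l+1)}, \rho^{(l+1)})$ refining $\rho^{(l)}$, I assign to it the cohomology class $[\Gam]\in H^2(M,\mathcal Q^{(l+2);\pm}_E)$ of the $2$-cocycle $\Gam$ constructed above (formed using $\rho^{(l)}$, which is the common $l$-truncation). I would then check that this is well defined on equivalence classes: if $(\Ufr^{(l+1)}, \tilde\rho^{(l+1)})$ is equivalent to $(\Ufr^{(l+1)}, \rho^{(l+1)})$ in the sense of Definition \ref{equivextensusythick}, the two differ by a degree-$(l+1)$ adjustment that restricts to the identity on $\Xfr^{(l)}_{(M,E)}$, and a direct computation — of exactly the flavour of the manipulations in the proof of Lemma \ref{lkliioolkss} — shows that the associated cocycles $\Gam$ and $\tilde\Gam$ differ by the coboundary $\delta$ of a $1$-cochain valued in $\mathcal Q^{(l+2);\pm}_E$, so $[\Gam] = [\tilde\Gam]$. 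Here it is essential that one may pass to a common refinement of the open cover without changing the class, as noted in the footnote to Construction \ref{knkniucbruibckj}.

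Finally I would verify the stated characterisation: the thickening $\Xfr^{(l+1)}_{(M,E)}$ is unobstructed if and only if $\pt_*$ sends its class to $0$. Indeed, $\delta\phi = \Gam$ is solvable for $\phi$ a $1$-cochain precisely when $[\Gam] = 0$ in $H^2(M,\mathcal Q^{(l+2);\pm}_E)$, and a solution $\phi$ is exactly the extra gluing data producing an $(l+2)$-th order thickening containing $\Xfr^{(l+1)}_{(M,E)}$; conversely any such extension furnishes a solution. Thus $\pt_*$ "measures the failure" in the precise sense that its vanishing on a class is equivalent to that class being represented by an unobstructed thickening, recovering Proposition \ref{popiioh4o8fh84o} when $l=1$ and Theorem \ref{jnvjknknvkjnvk} when $\mathrm{T}^1$ is a single point.

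\textbf{Main obstacle.} The routine-looking but genuinely delicate step is the well-definedness of $\pt_*$ on equivalence classes: one must track carefully how a degree-$(l+1)$ change of trivialisation that is trivial on $\Xfr^{(l)}_{(M,E)}$ propagates into the degree-$(l+2)$ cocycle $\Gam$, and confirm that the only net effect is a coboundary — the bookkeeping with the odd-coordinate multi-indices and the chain rule terms (the analogue of the $\frac{\pt f^\mu_{VW}}{\pt y^\nu}$ terms in \eqref{dcnn8984jid}) is where an error would most easily creep in. This is precisely the computation relegated to \cite[Chapter 2]{BETTPHD} and \cite[Appendix A.1]{BETTPHD} in the $l=1$ case, and I would either reproduce it in the graded generality needed here or cite those computations with the observation that the degree shift by $l-1$ throughout changes nothing essential.
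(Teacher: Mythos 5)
Your proposal matches the paper's own strategy step for step: represent a class in $\mathrm{T}^1(\Xfr^{(l)}_{(M,E)})$ by a trivialisation $(\Ufr^{(l+1)},\rho^{(l+1)})$, send it to the class of the obstruction $2$-cocycle $\Gam^{(l+2)}$ of Lemma \ref{bipalabophhhh89}, check independence of the choice of representative via the equivalence in Definition \ref{equivextensusythick}, and read off the characterisation of unobstructedness. One small sharpening worth noting: in the paper's verification of well-definedness, equations \eqref{4903jf903jf093}--\eqref{39j03jf30jfp3}, the conjugating cochain $\{\lam_\Uc\}$ telescopes across $\Uc\cap\Vc\cap\Wc$, so the two cocycle representatives $\Gam^{(l+2)}$ and $\tilde\Gam^{(l+2)}$ come out \emph{equal on the nose} (the $\dt\nu$-terms cancel), not merely cohomologous as you anticipate; this makes the delicate bookkeeping you flag as the main obstacle dissolve entirely.
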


\begin{proof}
The map $\pt_*$ can be described explicitly by appealing to a trivialisation. To do so, let $\Xfr^{(l)}_{(M, E)}$ be a thickening of order $l$ and $\Xfr^{(l+1)}_{(M, E)}$ a thickening of $\Xfr^{(l)}_{(M, E)}$. We think of $\Xfr_{(M,E)}^{(l+1)}$ as a representative of a class $[\Xfr_{(M, E)}^{(l+1)}]\in \mathrm{\bf T}^1(\Xfr_{(M, E)}^{(l)})$. Let $(\Ufr^{(l+1)}, \rho^{(l+1)})$ be a trivialisation for $\Xfr_{(M, E)}^{(l+1)}$. Then associated to $\rho^{(l+1)}$ will, by Lemma \ref{bipalabophhhh89}, be the 2-cocycle $\{\Gam^{(l+2)}_{\Uc\Vc\Wc}\}\in \mathcal Z^2(\Ufr, \mathcal Q^{(l+2);\pm}_E)$. We set:
\begin{align}
\pt_*\left( [\Xfr^{(l+1)}_{(M, E)}]\right) := \left[ \left\{\Gam^{(l+2)}_{\Uc\Vc\Wc}\right\}\right].
\label{mdio3nfuih3fh3o}
\end{align}
The goal now is to argue that \eqref{mdio3nfuih3fh3o} is well-defined. To that end, let $\tilde\Xfr^{(l+1)}_{(M, E)}$ be another thickening of $\Xfr^{(l)}_{(M, E)}$ and suppose $\Xfr^{(l+1)}_{(M, E)}\sim \tilde\Xfr^{(l+1)}_{(M, E)}$ as thickenings. This means there exists an isomorphism $\lam: \Xfr^{(l+1)}_{(M, E)}\cong \tilde\Xfr^{(l+1)}_{(M, E)}$ which is the identity when restricted to $\Xfr^{(l)}_{(M, E)}$. Let $(\Ufr^{(l+1)}, \tilde\rho^{(l+1)})$ be a trivialisation for $\tilde\Xfr^{(l+1)}_{(M, E)}$ and denote by $\{\tilde\Gam^{(l+2)}_{\Uc\Vc\Wc}\}$ the 2-cocycle constructed from the trivialisation $\tilde\rho^{(l+1)}$. In order to show that the map $\pt_*$ in \eqref{mdio3nfuih3fh3o} is well-defined, we need to show that $[\{\Gam^{(l+2)}_{\Uc\Vc\Wc}\}]= [\{\tilde\Gam^{(l+2)}_{\Uc\Vc\Wc}\}]$ in $H^2(M, \mathcal Q^{(l+2);\pm}_E)$. Under the assumptions here we will show that the cocycle representatives coincide, i.e., that $\Gam^{(l+2)}_{\Uc\Vc\Wc} = \tilde\Gam^{(l+2)}_{\Uc\Vc\Wc}$. So firstly, as we assume $\Xfr^{(l+1)}_{(M, E)}\sim \tilde\Xfr^{(l+1)}_{(M, E)}$, we have on the intersection $\Uc\cap\Vc$ that
\begin{align}
\rho_{\Uc\Vc}^{(l+1)}\circ \lam_\Uc= \lam_\Vc\circ \tilde\rho^{(l+1)}_{\Uc\Vc}
\label{4903jf903jf093}
\end{align}
where $\{\lam_\Uc\}$ is a 0-cochain representative of $\lam: \Xfr^{(l+1)}_{(M, E)}\cong \tilde\Xfr^{(l+1)}_{(M, E)}$ with respect to the cover $\Ufr^{(l+1)}$. Now since $\lam_\Uc : \Uc^{(l+1)}\cong \Uc^{(l+1)}$ is trivial modulo $\Jc^l$ (i.e., that $\lam_\Uc|_{\Uc^{(l)}}$ is the identity), we can construct from $\{\lam_\Uc\}$ a $\mathcal Q_E^{(l+1);\pm}$-valued, 0-cochain $\{\nu_\Uc\}$.\footnote{a simpler result justifying our assertions here is a certain exact sequence of sheaves of groups which we discuss in the section to come (Proposition \ref{dm30j8fjf3o}).} Expanding \eqref{4903jf903jf093} will now reveal that:
\begin{align}
\rho^{(l+1)}_{\Uc\Vc} - \tilde\rho^{(l+1)}_{\Uc\Vc} = (\dt\nu)_{\Uc\Vc}
=
\nu_\Vc-\nu_\Uc
\label{fj4j49jf94kf4kp}
\end{align}
and as a result:
\begin{align}
\tilde \Gam^{(l+2)}_{\Uc\Vc\Wc} &= \tilde \rho^{(l+1)}_{\Uc\Wc} - \tilde\rho^{(l+1)}_{\Wc\Vc}\circ \tilde\rho^{(l+1)}_{\Uc\Vc}
\notag
\\
&= \lam_\Wc^{-1}\circ \rho^{(l+1)}_{\Uc\Wc}\circ \lam_\Uc - \left(\lam_\Wc^{-1}\circ\rho^{(l+1)}_{\Vc\Wc}\circ\lam_\Vc\right)\circ \left(\lam_\Vc^{-1}\circ  \rho^{(l+1)}_{\Uc\Vc}\circ \lam_\Uc\right)
\notag
\\
&= 
\Gam^{(l+2)}_{\Uc\Vc\Wc} - (\dt\nu)_{\Uc\Wc} + (\dt\nu)_{\Vc\Wc} + (\dt\nu)_{\Uc\Vc}
\notag
\\
&= \Gam^{(l+2)}_{\Uc\Vc\Wc}.
\label{39j03jf30jfp3}
\end{align}
Hence we see that if $\Xfr_{(M, E)}^{(l+1)}\sim \tilde \Xfr_{(M, E)}^{(l+1)}$, then the 2-cocycles $\Gam_{\Uc\Vc\Wc}$ and $\tilde \Gam_{\Uc\Vc\Wc}$ associated to trivialisations $\rho^{(l+1)}$ and $\tilde\rho^{(l+1)}$ of $\Xfr_{(M, E)}^{(l+1)}$ and $\tilde \Xfr_{(M, E)}^{(l+1)}$ respectively will coincide. In particular, the map $\pt_*$ in \eqref{mdio3nfuih3fh3o} is well-defined. Now, from the proof of Theorem \ref{jnvjknknvkjnvk} given earlier, we see that $\Xfr^{(l+1)}_{(M, E)}$ will be an obstructed thickening if and only if $\pt_*( [\Xfr^{(l+1)}_{(M, E)}])\neq0$. It is in this sense that $\pt_*$ measures the failure for a given first-order extension of $\Xfr^{(l)}$ to be unobstructed. The proposition now follows.
\end{proof}

Recall that $\Xfr_{(M, E)}$ will admit a trivialisation by Construction \ref{opnciornvruo4} which is unique up to common refinement. Should we consider trivialisations of $\Xfr_{(M, E)}$ directly, then from the proof of Proposition \ref{o3jdioj3ijp3} above---and in particular from \eqref{39j03jf30jfp3}, we obtain a stronger result:

\begin{COR}\label{eiooihf839hf3ho}
Let $\Xfr_{(M, E)}^{(l)}$ be an unobstructed thickening and suppose $(\Ufr^{(l)}, \rho^{(l)})$ is a trivialisation for $\Xfr_{(M, E)}^{(l)}$. Denote by $\mathrm{\bf T}^1(\Ufr^{(l)}, \rho^{(l)})$ the set of thickenings of $\Xfr_{(M, E)}^{(l)}$ equipped with this choice of trivialisation $(\Ufr^{(l)}, \rho^{(l)})$ and suppose it is non-empty. Then the following diagram commutes:
\begin{align}
\xymatrix{
\ar[dr]_{\pt_*}\mathrm{\bf T}^1(\Ufr^{(l)}, \rho^{(l)}) \ar[r] & \mathcal Z^2(\Ufr, \mathcal Q_E^{(l+2);\pm})\ar[d]
\\
& \mbox{\emph{ \v H}}^2(\Ufr, Q_E^{(l+2);\pm})
}
\label{prfpejpj3pj390}
\end{align}
where in \eqref{prfpejpj3pj390}: the horizontal map is given by sending a trivialisation $(\Ufr^{(l+1)}, \rho^{(l+1)})$ to its obstruction class $\Gam^{(l+2)}$; the vertical arrow is the natural map on \v Cech cohomology of the covering $\Ufr$; and $\pt_*$ is the map from Proposition $\ref{o3jdioj3ijp3}$.\qed
\end{COR}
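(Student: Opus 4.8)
\textbf{Proof plan for Corollary \ref{eiooihf839hf3ho}.}
The plan is to unwind the definitions and observe that all the work has essentially been done in the proof of Proposition \ref{o3jdioj3ijp3}. The horizontal map in \eqref{prfpejpj3pj390} sends a trivialisation $(\Ufr^{(l+1)}, \rho^{(l+1)})$ of a thickening extending $\Xfr^{(l)}_{(M, E)}$ (with its fixed trivialisation $(\Ufr^{(l)}, \rho^{(l)})$) to the explicit $2$-cocycle $\Gam^{(l+2)} = \{\Gam^{(l+2)}_{\Uc\Vc\Wc}\}$ constructed from $\rho^{(l+1)}$ via \eqref{mchrehehruei38deu383j}, which lies in $\mathcal Z^2(\Ufr, \mathcal Q^{(l+2);\pm}_E)$ by Lemma \ref{bipalabophhhh89}. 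The vertical map is the standard passage from cocycles to \v Cech cohomology classes, and $\pt_*$ from Proposition \ref{o3jdioj3ijp3} sends the equivalence class $[\Xfr^{(l+1)}_{(M, E)}] \in \mathrm{T}^1(\Xfr^{(l)}_{(M, E)})$ to the cohomology class $[\Gam^{(l+2)}]$. So commutativity of \eqref{prfpejpj3pj390} amounts to the compatibility of two facts: (1) passing from a trivialisation to its obstruction $2$-cocycle and then to its cohomology class agrees with (2) passing from the trivialisation to the thickening it represents, then to the equivalence class of that thickening in $\mathrm{T}^1$, and then applying $\pt_*$.

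First I would note that there is an obvious forgetful map $\mathrm{T}^1(\Ufr^{(l)}, \rho^{(l)}) \to \mathrm{T}^1(\Xfr^{(l)}_{(M, E)})$, sending a trivialised thickening to its equivalence class of thickenings (forgetting the trivialisation); this is well-defined because any trivialised extension of $(\Ufr^{(l)}, \rho^{(l)})$ is in particular a thickening containing $\Xfr^{(l)}_{(M, E)}$. Composing this forgetful map with $\pt_*$ from Proposition \ref{o3jdioj3ijp3} is, by definition \eqref{mdio3nfuih3fh3o}, exactly the map $(\Ufr^{(l+1)}, \rho^{(l+1)}) \mapsto [\{\Gam^{(l+2)}_{\Uc\Vc\Wc}\}]$ computed from that very trivialisation. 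On the other hand, the composite of the horizontal and vertical maps in \eqref{prfpejpj3pj390} sends $(\Ufr^{(l+1)}, \rho^{(l+1)})$ to $\Gam^{(l+2)} \in \mathcal Z^2$ and then to its class in $\check H^2(\Ufr, \mathcal Q^{(l+2);\pm}_E)$, which is the same element. Hence the outer triangle commutes once we identify the two routes through $\pt_*$, and the point is that the $\pt_*$ decorating the slanted arrow of \eqref{prfpejpj3pj390} really means "forget the trivialisation, then apply the $\pt_*$ of Proposition \ref{o3jdioj3ijp3}."

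The one substantive point to check — and where I expect the only real content to lie — is that this is well-defined and consistent, i.e., that choosing a different trivialisation for the \emph{same} underlying thickening (still restricting to $\rho^{(l)}$ on $\Xfr^{(l)}_{(M, E)}$) produces the same $2$-cocycle $\Gam^{(l+2)}$, not merely a cohomologous one. But this is precisely the computation carried out in \eqref{4903jf903jf093}--\eqref{39j03jf30jfp3} in the proof of Proposition \ref{o3jdioj3ijp3}: there an equivalence $\lam$ between two extensions restricting to the identity on $\Xfr^{(l)}_{(M, E)}$ is shown to produce a $\mathcal Q^{(l+1);\pm}_E$-valued $0$-cochain $\{\nu_\Uc\}$ with $\rho^{(l+1)}_{\Uc\Vc} - \tilde\rho^{(l+1)}_{\Uc\Vc} = (\dt\nu)_{\Uc\Vc}$, and the telescoping in \eqref{39j03jf30jfp3} gives $\tilde\Gam^{(l+2)}_{\Uc\Vc\Wc} = \Gam^{(l+2)}_{\Uc\Vc\Wc}$ on the nose. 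Since a change of trivialisation of a fixed thickening that is the identity on $\Xfr^{(l)}_{(M, E)}$ is a special case of such an equivalence (with source and target the same thickening), the $2$-cocycle $\Gam^{(l+2)}$ depends only on the thickening together with $(\Ufr^{(l)},\rho^{(l)})$ — giving the horizontal map of \eqref{prfpejpj3pj390} — and its cohomology class depends only on the equivalence class in $\mathrm{T}^1(\Xfr^{(l)}_{(M, E)})$, which is what makes $\pt_*$ on the slanted arrow meaningful. Assembling these observations yields the commuting diagram \eqref{prfpejpj3pj390}, and the corollary follows. \qed
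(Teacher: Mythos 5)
Your proof is correct and follows essentially the same route the paper intends: the corollary is simply a repackaging of the proof of Proposition~\ref{o3jdioj3ijp3}, and the key point — that the $2$-cocycle $\Gam^{(l+2)}$ is unchanged on the nose (not merely up to coboundary) under an equivalence restricting to the identity on $\Xfr^{(l)}_{(M,E)}$ — is exactly what \eqref{39j03jf30jfp3} establishes and what makes the diagram commute. Your explicit identification of the slanted arrow as the composite of a forgetful map $\mathrm{T}^1(\Ufr^{(l)},\rho^{(l)})\to\mathrm{T}^1(\Xfr^{(l)}_{(M,E)})$ with the $\pt_*$ of Proposition~\ref{o3jdioj3ijp3} is a helpful clarification the paper leaves implicit.
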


We turn our attention now to the set of thickenings itself with the intent to show that, in general, it will admit the structure of a pseudo-torsor over a certain group. The notion of a torsor itself comes up in many areas of mathematics and the definition we present below is taken from \cite[p. 49]{HARTDEF}.

\begin{DEF}
\emph{A set $S$ is said to be a \emph{torsor} for a group $G$ if there exists an action of $G$ on $S$, i.e., a set-theoretic map $G\times S\ra S$, such that:
\begin{enumerate}[(i)]
	\item for each $g\in G$, the map $g\cdot : S\ra S$ given by $s\mapsto g\cdot s$, $s\in S$, is bijective and;
	\item $S$ is non-empty.
\end{enumerate}
If $S$ is empty, then it is said to be a \emph{pseudo-torsor}.}
\end{DEF}

Following on from Proposition \ref{o3jdioj3ijp3} we have:

\begin{THM}\label{o4993uf0oijwoijfwojpw}
Given an $l$-th order thickening $\Xfr_{(M, E)}^{(l)}$, then the set of thickenings $\mathrm{\bf T}^1(\Xfr_{(M, E)}^{(l)})$ is a pseudo-torsor for the group $H^1(M, \mathcal Q_E^{(l+1);\pm})$.
\end{THM}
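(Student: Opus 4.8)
The plan is to exhibit an action of $H^1(M, \mathcal Q_E^{(l+1);\pm})$ on $\mathrm T^1(\Xfr_{(M,E)}^{(l)})$ via the gluing description of Construction \ref{opnciornvruo4}, and to show this action is free and transitive whenever the target set is non-empty. The construction is modelled on the analogous statement in Kodaira--Spencer deformation theory (see \cite[p.~254--255]{KS}) and on the proof of Proposition \ref{o3jdioj3ijp3}, from which the key algebraic identities are already available.

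First I would fix a thickening $\Xfr_{(M,E)}^{(l)}$ together with a trivialisation $(\Ufr^{(l)}, \rho^{(l)})$, which exists by Construction \ref{opnciornvruo4} and is unique up to common refinement. Given a representative $\Xfr_{(M,E)}^{(l+1)}$ of a class in $\mathrm T^1(\Xfr_{(M,E)}^{(l)})$, choose a compatible trivialisation $(\Ufr^{(l+1)}, \rho^{(l+1)})$ with $\rho^{(l+1)} \equiv \rho^{(l)} \bmod \Jc^{l+1}$; I would write $\rho^{(l+1)}_{\Uc\Vc} = \rho^{(l)}_{\Uc\Vc} + \psi_{\Uc\Vc}$, where the correction $\psi = \{\psi_{\Uc\Vc}\}$ is homogeneous of degree $l+1$ and hence is a $1$-cochain valued in $\mathcal Q_E^{(l+1);\pm}$ (the parity alternation is exactly the one recorded after \eqref{oinriobuoioie}). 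Given a $1$-cocycle $\theta = \{\theta_{\Uc\Vc}\} \in \mathcal Z^1(\Ufr, \mathcal Q_E^{(l+1);\pm})$, I define the new gluing data $(\rho \ast \theta)_{\Uc\Vc} := \rho^{(l+1)}_{\Uc\Vc} + \theta_{\Uc\Vc}$. The main point in this step is to check that $\rho \ast \theta$ still satisfies the cocycle condition \eqref{090038083j}: expanding the composition and retaining only degree-$(l+1)$ terms, the obstruction to cocyclicity of $\rho^{(l+1)} + \theta$ differs from that of $\rho^{(l+1)}$ precisely by $(\delta\theta)_{\Uc\Vc\Wc}$ (the same linearisation that produced \eqref{dcnn8984jid} and \eqref{dmodj833jso}), which vanishes because $\theta$ is a cocycle; so $\rho\ast\theta$ defines a genuine $(l+1)$-th order thickening $\Xfr_{(M,E)}^{(l+1)}\ast\theta$ of $\Xfr_{(M,E)}^{(l)}$.

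Next I would verify that this passes to equivalence classes and defines a group action. If $\theta$ is a coboundary $\theta = \delta\nu$, then the $0$-cochain $\{\nu_\Uc\}$ of $\mathcal Q_E^{(l+1);\pm}$-valued correction terms assembles (as in the footnote to Proposition \ref{o3jdioj3ijp3}, via Proposition \ref{dm30j8fjf3o}) into a $0$-cochain of automorphisms $\{\lam_\Uc\}$ of the $\Uc^{(l+1)}$ that are trivial modulo $\Jc^l$, and the computation \eqref{4903jf903jf093}--\eqref{fj4j49jf94kf4kp} shows this conjugation carries $\rho^{(l+1)}$ to $\rho^{(l+1)} + \delta\nu$, i.e.\ it is an equivalence of thickenings restricting to the identity on $\Xfr_{(M,E)}^{(l)}$. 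Hence the class of $\Xfr_{(M,E)}^{(l+1)}\ast\theta$ depends only on $[\theta]\in H^1(M, \mathcal Q_E^{(l+1);\pm})$; one also checks directly that replacing the chosen trivialisation of a fixed class by an equivalent one shifts $\psi$ by a coboundary, so the action is well-defined on $\mathrm T^1$. Additivity $(\Xfr\ast\theta_1)\ast\theta_2 = \Xfr\ast(\theta_1+\theta_2)$ and $\Xfr\ast 0 = \Xfr$ are immediate from the definition, so each $[\theta]\cdot$ is a bijection with inverse $[-\theta]\cdot$, giving condition (i) in the definition of torsor.

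Finally, for transitivity (hence the torsor property when $\mathrm T^1\neq\emptyset$): given two thickenings $\Xfr^{(l+1)}, \tilde\Xfr^{(l+1)}$ of $\Xfr^{(l)}$ with trivialisations $\rho^{(l+1)}, \tilde\rho^{(l+1)}$ both reducing to $\rho^{(l)}$ mod $\Jc^{l+1}$, the difference $\theta_{\Uc\Vc} := \tilde\rho^{(l+1)}_{\Uc\Vc} - \rho^{(l+1)}_{\Uc\Vc}$ is a degree-$(l+1)$ quantity, i.e.\ a $1$-cochain of $\mathcal Q_E^{(l+1);\pm}$; subtracting the two cocycle conditions and keeping degree-$(l+1)$ terms gives $\delta\theta = 0$, so $[\theta]\in H^1(M,\mathcal Q_E^{(l+1);\pm})$ and $\Xfr^{(l+1)}\ast\theta = \tilde\Xfr^{(l+1)}$ by construction. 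For freeness, if $\Xfr^{(l+1)}\ast\theta \sim \Xfr^{(l+1)}$ then by the equivalence computation above $\theta$ must be a coboundary, so $[\theta]=0$. I expect the main obstacle to be bookkeeping: making precise the dictionary between degree-$(l+1)$ corrections of gluing data and cochains of $\mathcal Q_E^{(l+1);\pm}$ (correctly tracking the parity split between the $\Tfr_M\otimes\wedge^{l+1}\Ec$ and $\wedge^{l+1}\Ec\otimes\Ec^\vee$ summands according to the parity of $l+1$), and checking that the linearised cocycle condition really is the \v Cech differential $\delta$ on that sheaf rather than something twisted. Once that identification is in hand, every remaining step is a short computation formally identical to the ones already carried out in the proofs of Theorem \ref{jnvjknknvkjnvk} and Proposition \ref{o3jdioj3ijp3}.
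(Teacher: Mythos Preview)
Your proposal is correct and follows essentially the same approach as the paper: define the action by adding a $\mathcal Q_E^{(l+1);\pm}$-valued $1$-cocycle to the degree-$(l+1)$ part of the gluing data, check via the linearised cocycle identity \eqref{dmodj833jso} that this produces a valid trivialisation, and then verify that equivalence of thickenings corresponds exactly to cohomologous cocycles. The paper's own proof is in fact terser than yours (it defers the freeness/transitivity verification to \cite{BETTPHD}), so your outline already contains everything the paper supplies and a bit more.
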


\begin{proof}
Assume $\mathrm{\bf T}^1(\Xfr_{(M, E)}^{(l)})$ is non-empty. This means $\Xfr_{(M, E)}^{(l)}$ is unobstructed. Then associated to any thickening $\Xfr_{(M, E)}^{(l+1)}$ will be a trivialisation $(\Ufr^{(l+1)}, \rho^{(l+1)})$ and associated to $\rho^{(l+1)}$ will be 1-cochains $c^{(j)}(\rho)=\{c_{\Uc\Vc}^{(j)}\}\in C^1(\Ufr^{(l+1)}, \mathcal Q_E^{(j);\pm})$ for $j = 2,\cdots l+1$. On the intersection $\Uc\cap\Vc$ the term $c_{\Uc\Vc}^{(j)}$ is precisely the homogeneous, degree-$j$ component of $\rho_{\Uc\Vc}^{(l+1)}$. Now recall that $\Xfr_{(M, E)}^{(l)}$ is unobstructed from our assumption. As such we know from Proposition \ref{o3jdioj3ijp3} that the 2-cocycle $\Gam^{(l+1)}$ corresponding to $\rho^{(l+1)}$ will be cohomologically trivial. Now let $\al^{(l+1)}\in \mathcal Z^1(\Ufr, \mathcal Q_E^{(l+1);\pm})$. Then $\dt\al^{(l+1)} = 0$. Define $\tilde\rho^{(l+1)} = \rho^{(l+1)} + \al^{(l+1)}$. Evidently the 2-cocycle $\tilde\Gam$ associated to $\tilde\rho^{(l+1)}$ is
\[
\tilde\Gam_{\Uc\Vc\Wc} = \Gam_{\Uc\Vc\Wc} +(\dt\al)_{\Uc\Vc\Wc} = \Gam_{\Uc\Vc\Wc}
\]
and since $\Gam_{\Uc\Vc\Wc}\sim 0$, we find $\tilde  \Gam_{\Uc\Vc\Wc}\sim 0$ also. Hence $(\Ufr^{(l+1)}, \tilde\rho^{(l+1)})$ defines a trivialisation from which we can construct another, unobstructed thickening $\tilde \Xfr_{(M, E)}^{(l+1)}$ of $\Xfr^{(l)}$ as in Construction \ref{opnciornvruo4}. From here, it is not too hard to deduce that $\Xfr_{(M, E)}^{(l+1)}$ and $\tilde \Xfr_{(M, E)}^{(l+1)}$ will be equivalent as thickenings of $\Xfr_{(M, E)}^{(l)}$ if and only if $\al^{(l+1)}\sim 0$. If $\Xfr^{(l)}_{(M, E)}$ is obstructed then by definition there will not exist any thickening containing it, which means $\mathrm{\bf T}^1(\Xfr_{(M, E)}^{(l)})=\eset$. The present theorem now follows. For more details see \cite[Chapter 2]{BETTPHD}.
\end{proof}

Then as a generalisation of Lemma \ref{pjiroich4h894ho} we have:

\begin{COR}\label{dopmropvopejpv}
$\mathrm{\bf T}^1(\Pi E^{(l)})\cong H^1(M, \mathcal Q_E^{(l+1);\pm})$.
\qed
\end{COR}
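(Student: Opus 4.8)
The plan is to deduce Corollary \ref{dopmropvopejpv} from Theorem \ref{o4993uf0oijwoijfwojpw} by observing that the split model supplies a distinguished point of $\mathrm{T}^1(\Pi E^{(l)})$, so that the pseudo-torsor is in fact a torsor and the choice of basepoint trivialises it. First I would note that $\Pi E^{(l)} = (M, \wedge^{\bt}\Ec/J^{l+1})$ is an honest thickening of $(M,E)$ of order $l$, and that $\Pi E^{(l+1)}$ is a thickening of $\Pi E^{(l)}$ in the sense of Definition \ref{equivextensusythick}: the canonical quotient map $\wedge^{\bt}\Ec/J^{l+2}\twoheadrightarrow \wedge^{\bt}\Ec/J^{l+1}$ exhibits the required embedding $\Pi E^{(l)}\subset \Pi E^{(l+1)}$. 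Hence $[\Pi E^{(l+1)}]\in \mathrm{T}^1(\Pi E^{(l)})$, so this set is non-empty; equivalently $\Pi E^{(l)}$ is unobstructed. (Concretely, its trivialisation can be taken with $\rho^{(l)}$ supported in degrees $0$ and $1$ only — the transition data of $M$ and of $E$ — so $\Gamma^{(l+1)}$ vanishes identically.)

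With non-emptiness established, Theorem \ref{o4993uf0oijwoijfwojpw} tells us $\mathrm{T}^1(\Pi E^{(l)})$ is a torsor for $G := H^1(M, \mathcal Q_E^{(l+1);\pm})$. The next step is the standard fact that a torsor with a chosen basepoint is canonically isomorphic, as a $G$-set, to $G$ acting on itself by translation: the map $G\to \mathrm{T}^1(\Pi E^{(l)})$, $\alpha^{(l+1)}\mapsto [\Pi E^{(l+1)}] + \alpha^{(l+1)}$, is a bijection. Surjectivity is transitivity of the action, which the proof of Theorem \ref{o4993uf0oijwoijfwojpw} gives (every trivialisation of a thickening of $\Pi E^{(l)}$ differs from $\rho^{(l+1)}$ by a $1$-cocycle $\alpha^{(l+1)}$, since both restrict to the fixed $\rho^{(l)}$ modulo $\Jc^{l+1}$ and both satisfy the cocycle condition); injectivity is the assertion, also established there, that $[\Pi E^{(l+1)}]+\alpha^{(l+1)}$ and $[\Pi E^{(l+1)}]+\beta^{(l+1)}$ are equivalent as thickenings of $\Pi E^{(l)}$ if and only if $\alpha^{(l+1)}\sim\beta^{(l+1)}$, i.e. $\alpha^{(l+1)}-\beta^{(l+1)}$ is a coboundary. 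Passing to cohomology classes then yields the claimed bijection $\mathrm{T}^1(\Pi E^{(l)})\cong H^1(M, \mathcal Q_E^{(l+1);\pm})$.

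I expect the only genuine point requiring care — and thus the main obstacle — is not the torsor formalism but verifying that $\Pi E^{(l)}$ really is unobstructed, i.e. exhibiting the basepoint. This is where one must check that the truncation $\Pi E^{(l+1)}\twoheadrightarrow \Pi E^{(l)}$ fits Definition \ref{nkcdkjkfoieoe} / \ref{equivextensusythick}, using that $J^{l+2}\subset J^{l+1}$ and that the grading behaviour of Lemma \ref{fiugf78g79h380fj09} and \eqref{3hf8gf83gf793hf39} is compatible with this truncation; once the split model is seen to have the trivialisation with $\rho$ concentrated in degrees $\leq 1$, the vanishing $\Gamma^{(l+1)}=0$ is immediate from \eqref{mchrehehruei38deu383j} since there are no homogeneous degree-$(l+1)$ terms with $l+1\geq 2$. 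Everything else is a direct appeal to Theorem \ref{o4993uf0oijwoijfwojpw} and the bookkeeping already carried out in its proof, so the corollary follows.
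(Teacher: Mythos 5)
Your proof is correct and matches what the paper intends: the corollary is stated with a bare \qed precisely because it follows from Theorem \ref{o4993uf0oijwoijfwojpw} once one notes that the split model supplies the basepoint $\Pi E^{(l+1)}\in\mathrm{T}^1(\Pi E^{(l)})$, turning the pseudo-torsor into a torsor and hence a bijection with $H^1(M,\mathcal Q_E^{(l+1);\pm})$. Your explicit check that $\Pi E^{(l)}$ is unobstructed (trivialisation concentrated in degrees $0$ and $1$, so $\Gamma^{(l+1)}=0$ by \eqref{mchrehehruei38deu383j}) is exactly the content the paper leaves implicit.
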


\section{A Decomposition of the Obstruction Spaces}\label{5}

\noindent
It was mentioned in the previous section that the discussion and results so far presented were guided by Question \ref{SUSYthickeningsquestion}. This question was answered in Theorem \ref{jnvjknknvkjnvk} and Proposition \ref{o3jdioj3ijp3}. In the present section we consider the following variant incorporating supermanifolds:

\begin{QUE}\label{kd93f3jf3kf9p4}
Suppose $\Xfr^{(k)}$ is a given, unobstructed thickening. Then will there exist some supermanifold $\Xfr$ containing $\Xfr^{(k)}$?
\end{QUE}

In general the answer to Question \ref{kd93f3jf3kf9p4} above will be in the negative and, just as in Proposition \ref{o3jdioj3ijp3}, we can identify conditions under which an unobstructed thickening will be associated to a supermanifold. Before doing so it will be necessary to briefly deliberate on obstruction theory for supermanifolds.

\subsection{Preliminaries: Obstruction Theory for Supermanifolds}
Let $\Xfr_{(M, E)}$ be a complex supermanifold with split model $\Pi E$ (see Section \ref{fjf9j490fj40fj40}). Then by Definition \ref{ncnrnv322rnk} we know that $\Xfr_{(M, E)}$ and $\Pi E$ will be \emph{locally} isomorphic.

\begin{DEF}
\emph{The supermanifold $\Xfr_{(M, E)}$ is said to be \emph{split} if it is globally isomorphic to its split model.}
\end{DEF}

With this definition of splitness we can ask the following natural question guiding considerations in obstruction theory for supermanifolds:

\begin{QUE}\label{kf93j890jf0j}
Let $\Xfr$ be a supermanifold. Is it split?
\end{QUE}

As a smooth supermanifold $\Xfr_{(M, E)}$ is split, a result which is well known and originally due to Batchelor in \cite{BAT}. In the holomorphic setting, a given complex supermanifold certainly need not be split and the study of Question \ref{kf93j890jf0j} in this setting is referred to as `obstruction theory for supermanifolds'.  In this section we present some well known results, following \cite{BER, GREEN, YMAN, ONISHCLASS}. Our starting point is in the construction of the sheaf of groups $\Gc^{(i)}_E$ for any $i\geq2$ as follows: on a complex manifold $M$ let $E\ra M$ be a rank $q$, holomorphic vector bundle and $\Ec$ its sheaf of holomorphic sections. Define,
\begin{align}
\Gc^{(i)}_E := \left\{ \al\in \mathscr Aut\wedge^\bt\Ec\mid \al(u) - u\in \Jc^i\right\}.
\label{fkfugunefefmelkfe}
\end{align}
Then by construction, for $i = 2$, we have the following short-exact sequence of sheaves of groups on $M$ which is right-split, as observed by Onishchik in \cite{ONISHCLASS}:
\[
1 \ra \Gc^{(2)}_E \hookrightarrow \mathscr Aut\wedge^\bt\Ec \twoheadrightarrow \mathscr Aut~\Ec\ra1.
\] 
For $i>2$ the relation of these sheaves $\Gc^{(i)}_E$ to $\mathcal Q_E^{(i);\pm}$ from \eqref{dafoprmijo4jo4} is captured in the following result by Green in \cite{GREEN}:

\begin{PROP}\label{dm30j8fjf3o}
For each $i$ the sheaf $\Gc^{(i)}_E$ contains $\Gc^{(i+1)}_E$ as a (sheaf of) normal subgroups. Moreover, for $i> 2$, there exists an isomorphism
\[
\frac{\Gc_E^{(i)}}{\Gc_E^{(i+1)}} \cong 
\mathcal Q_E^{(i);\pm}.
\]
If $i> q$, then $\Gc^{(i)}_E = \{1\}$. \qed
\end{PROP}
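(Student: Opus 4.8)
\textbf{Proof plan for Proposition \ref{dm30j8fjf3o}.}

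The plan is to work locally and then globalise, treating the two claims---normality plus the quotient isomorphism for $i>2$, and the vanishing for $i>q$---separately. For the vanishing statement, recall that $\Ec$ is the sheaf of sections of a rank-$q$ bundle, so the fibrewise exterior algebra $\wedge^\bt\Ec$ has $\wedge^k\Ec = 0$ for $k>q$; hence the nilpotent ideal $\Jc=\ker\{\wedge^\bt\Ec\twoheadrightarrow\Cc_M\}$ satisfies $\Jc^{q+1}=0$. If $\al\in\Gc_E^{(q+1)}$ then $\al(u)-u\in\Jc^{q+1}=0$ for every local section $u$, so $\al=\id$, and therefore $\Gc_E^{(i)}=\{1\}$ for all $i>q$. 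This is the easy part and I would dispatch it first.

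For the main part, I would first check that $\Gc_E^{(i)}$ really is a sheaf of groups: if $\al,\be$ satisfy $\al(u)-u,\ \be(u)-u\in\Jc^i$ then $(\al\be)(u)-u = \al(\be(u))-\be(u) + (\al(u)-u) \in \Jc^i$ using that $\al$ preserves $\Jc$ and is $\Cc_M$-linear modulo $\Jc$, and similarly $\al^{-1}(u)-u = -\al^{-1}(\al(u)-u)\in\Jc^i$. Normality of $\Gc_E^{(i+1)}$ in $\Gc_E^{(i)}$: for $\al\in\Gc_E^{(i)}$ and $\be\in\Gc_E^{(i+1)}$, write $(\al\be\al^{-1})(u)-u = \al\bigl(\be(\al^{-1}(u)) - \al^{-1}(u)\bigr)$, and since $\be(v)-v\in\Jc^{i+1}$ for all $v$ and $\al$ preserves the $\Jc$-adic filtration, this lies in $\Jc^{i+1}$. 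To construct the isomorphism $\Gc_E^{(i)}/\Gc_E^{(i+1)} \cong \mathcal Q_E^{(i);\pm}$ for $i>2$, I would send $\al\in\Gc_E^{(i)}$ to the map $u\mapsto \al(u)-u \bmod \Jc^{i+1}$. The point is that modulo $\Jc^{i+1}$ this is $\Cc_M$-linear in $u$, kills $\Jc^2$ (so factors through $\Jc/\Jc^2\cong\Ec$ and through $\wedge^\bt\Ec/\Jc = \Cc_M$ on generators), and is a derivation-type object of degree $i$: precisely an element of $\mathrm{Hom}_{\Cc_M}(\Ec,\wedge^i\Ec)\oplus\mathrm{Hom}_{\Cc_M}(\Cc_M\ \text{(even gens)}, \wedge^i\Ec) = \mathcal Q_E^{(i);-}\oplus\text{(even part)}$, and the parity constraint from the $\Zbb_2$-grading forces exactly the piece $\mathcal Q_E^{(i);\pm}$ recorded in \eqref{oinriobuoioie}. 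Injectivity modulo $\Gc_E^{(i+1)}$ is immediate from the definition, and surjectivity follows because any such degree-$i$ linear map $\phi$ can be exponentiated: $\al := \id + \phi$ (the higher terms vanish since $\phi$ raises $\Jc$-degree by at least $i\geq3>1$, so $\phi^2$ lands in $\Jc^{2i}\subset\Jc^{i+1}$) is an algebra automorphism in $\Gc_E^{(i)}$ mapping to $\phi$. I would also note $\al\be - (\id+\phi_\al+\phi_\be)\in\Jc^{2i}\subset\Jc^{i+1}$, so the map is a group homomorphism to the additive group $\mathcal Q_E^{(i);\pm}$.

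The main obstacle I anticipate is bookkeeping the $\Zbb_2$-grading carefully enough to see that the quotient is \emph{exactly} $\mathcal Q_E^{(i);\pm}$ (i.e. the single summand $\Tfr_M\otimes\wedge^i\Ec$ for $i$ even, respectively $\wedge^i\Ec\otimes\Ec^\vee$ for $i$ odd) rather than a sum of both flavours: one must use that $\al$ preserves parity, so the image of an even generator of $\wedge^\bt\Ec$ must land in the even part of $\wedge^\bt\Ec/\Jc^{i+1}$ and that of an odd generator in the odd part, which selects the correct homogeneous component of the fibrewise tangent sheaf $\Tfr_{\Pi E}[i]$ appearing in \eqref{oinriobuoioie}. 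Everything else is local and glues to a statement of sheaves because all constructions are natural in the open set. Where a fuller verification of these identifications is needed, I would refer to \cite{GREEN} and \cite{ONISHCLASS}.
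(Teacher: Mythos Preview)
The paper does not prove this proposition at all: it attributes the result to Green \cite{GREEN} and closes with \qed. Your plan therefore supplies strictly more than the paper, and the route you take---writing $\al=\id+D$, observing that $D$ is a derivation modulo $\Jc^{i+1}$ because $D(u)D(v)\in\Jc^{2i}\subset\Jc^{i+1}$, and then using the $\Zbb_2$-grading to select the correct summand $\mathcal Q_E^{(i);\pm}$---is the standard and correct one.

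Three small points to tighten. First, in your closure computation the last term should be $\be(u)-u$, not $\al(u)-u$: one writes $(\al\be)(u)-u=[\al(\be(u))-\be(u)]+[\be(u)-u]$. Second, for surjectivity $\id+\phi$ with $\phi$ a derivation is not an algebra automorphism on the nose; what you want is $\exp(\phi)$, which terminates since $\phi$ is nilpotent of order $\leq \lceil q/i\rceil$ and which agrees with $\id+\phi$ modulo $\Jc^{2i}\subset\Jc^{i+1}$. Third, on the even generators the assignment $u\mapsto \al(u)-u\bmod\Jc^{i+1}$ is a \emph{derivation} $\Cc_M\to\wedge^i\Ec$, not merely a $\Cc_M$-linear map as you wrote; this is precisely why the even piece is $\Tfr_M\otimes\wedge^i\Ec=\mathcal Q_E^{(i);+}$ rather than $\Hom_{\Cc_M}(\Cc_M,\wedge^i\Ec)=\wedge^i\Ec$. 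With these corrections your argument goes through.
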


\begin{REM}
\emph{As a result of the last sentence in the statement of Proposition \ref{dm30j8fjf3o} we have $\Gc^{(q)}_E \cong \mathcal Q^{(q);\pm}_E$. Hence $\Gc^{(q)}_E $ is abelian.}
\end{REM}

By the general theory of non-abelian sheaf cohomology developed in \cite{GROTHNONAB} and reviewed in \cite[p. 158-62]{BRY}, the results in Proposition \ref{dm30j8fjf3o} may be recast into a short-exact sequence of sheaves of groups from which we obtain a long exact sequence on \v Cech cohomology (as pointed sets). A piece of this sequence is given below:
\begin{align}
\cdots \lra H^1(\Gc^{(i+1)}_E) \lra H^1(\Gc^{(i)}_E) \stackrel{\om_*}{\lra} H^1(\mathcal Q_E^{(i);\pm})
\label{8393hf93j3fojf83}
\end{align}
where $H^i(-) = H^i(M, -)$ above.  Hence, in this way, we find the obstruction spaces $H^1(\mathcal Q_E^{(i);\pm})$ appearing here. Following \cite{DW1} we submit:

\begin{DEF}\label{rfijf903jf903kf}\emph{Elements in $H^1(M, \Gc^{(2)}_E)$ will be referred to as \emph{trivialisations}. Elements in $H^1(M, \Gc^{(i)}_E)$, for $i>2$, will be referred to as \emph{lifts of trivialisations} or \emph{level-$(i+1)$ trivialisations}}
\end{DEF}

\begin{DEF}\label{kd93jf03j0j3j3p}\emph{Elements in the image of $\om_*$ are referred to as \emph{obstruction classes to splitting}.
}
\end{DEF}

By construction, the obstruction classes to splitting depend on a choice of trivialisation or lifts thereof. To see the relevance of these constructs to supermanifold theory, firstly consider the following set:
\begin{align}
\Mfr_{(M, E)} = 
\left\{\mbox{supermanifolds modelled on $(M, E)$}\right\}/\cong
\label{rkfj490fj4fp3okp3ofk3p}
\end{align}
We think of $\Mfr_{(M, E)}$ here as a pointed set, with base-point represented by the split model $\Pi E$. We now have from \cite{GREEN}:

\begin{THM}\label{dn8j83jp3k9k922}
Any supermanifold $\Xfr_{(M, E)}$ will define a trivialisation in $H^1(M, \Gc^{(2)}_E)$. Furthermore, there exists a one-to-one correspondence as pointed sets:
\[
\Mfr_{(M, E)} \cong \frac{H^1(M, \Gc^{(2)}_E)}{H^0(M, \mathscr Aut~\Ec)}.
\]\qed
\end{THM}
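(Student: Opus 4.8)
The plan is to prove Theorem \ref{dn8j83jp3k9k922} by unwinding the gluing/\v{C}ech description of supermanifolds modelled on $(M,E)$ and comparing it with the non-abelian cohomology set $H^1(M,\Gc^{(2)}_E)$. First I would recall, from Construction \ref{knkniucbruibckj}, that a supermanifold $\Xfr_{(M,E)}$ is determined by a trivialisation $(\Ufr,\rho)$, where the transition data $\rho=\{\rho_{\Uc\Vc}\}$ are local automorphisms of $\Cc_M\otimes\wedge^\bt\Cbb^q$ over intersections, satisfying the cocycle condition \eqref{mciornoihroi}. The key observation is that, after fixing a coframe (Definition \ref{ifh4fgiuho48h04}) and trivialising the underlying manifold $M$ and bundle $E$ on the cover $\Ufr$, the degree-zero and degree-one parts of $\rho_{\Uc\Vc}$ are the transition functions of $M$ and $E$, which we hold fixed; so the remaining freedom in $\rho_{\Uc\Vc}$ is precisely an automorphism of $\wedge^\bt\Ec$ that is the identity modulo $\Jc^2$, i.e.\ a section of $\Gc^{(2)}_E$ over $\Uc\cap\Vc$. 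Thus the cocycle $\{\rho_{\Uc\Vc}\}$, viewed modulo the fixed degree $\leq 1$ data, is a \v{C}ech $1$-cocycle with values in the sheaf of groups $\Gc^{(2)}_E$, and this assignment gives the claimed map $\Xfr_{(M,E)}\mapsto$ a class in $H^1(M,\Gc^{(2)}_E)$.

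Next I would check that this map is well-defined on isomorphism classes and then identify its fibres. An isomorphism $\Xfr_{(M,E)}\xrightarrow{\cong}\Xfr'_{(M,E)}$ of supermanifolds inducing the identity on $(M,E)$ corresponds, after refining covers, to a $0$-cochain $\{\lam_\Uc\}$ with $\lam_\Uc\in\Gc^{(2)}_E(\Uc)$ conjugating one cocycle into the other; this is exactly the equivalence relation defining $H^1(M,\Gc^{(2)}_E)$ as a pointed set, with base point the constant cocycle corresponding to $\Pi E$. This shows the map $\Mfr_{(M,E)}\to H^1(M,\Gc^{(2)}_E)$ is injective, and surjectivity is immediate from Construction \ref{knkniucbruibckj}: any $\Gc^{(2)}_E$-valued $1$-cocycle, together with the fixed degree $\leq 1$ transition data, glues to a locally ringed space which by construction is locally isomorphic to $\Pi E$, hence a supermanifold modelled on $(M,E)$. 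So far this yields a bijection between $\Mfr_{(M,E)}$ \emph{with the extra datum of a coframe} and $H^1(M,\Gc^{(2)}_E)$.

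It remains to account for the fact that the left-hand side of the theorem parametrises supermanifolds up to isomorphism \emph{without} a fixed coframe, which is exactly the source of the quotient by $H^0(M,\mathscr{A}ut\,\Ec)$. The point is that two cocycles in $H^1(M,\Gc^{(2)}_E)$ give isomorphic supermanifolds $\Xfr_{(M,E)}$ (allowing the isomorphism to act non-trivially on $E$, i.e.\ changing the coframe) precisely when they differ by the action of a global bundle automorphism of $E$: such an automorphism is a $0$-cocycle in $\mathscr{A}ut\,\Ec$, acts on $\wedge^\bt\Ec$ and hence on the $\Gc^{(2)}_E$-cocycles by the split surjection $\mathscr{A}ut\wedge^\bt\Ec\twoheadrightarrow\mathscr{A}ut\,\Ec$ and conjugation. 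Dividing $H^1(M,\Gc^{(2)}_E)$ by this $H^0(M,\mathscr{A}ut\,\Ec)$-action therefore identifies exactly those classes corresponding to isomorphic supermanifolds, giving the stated correspondence of pointed sets.

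The main obstacle, and the step deserving the most care, is the last one: making precise how a global automorphism of $E$ acts on $H^1(M,\Gc^{(2)}_E)$ and verifying that two $\Gc^{(2)}_E$-cocycles yield isomorphic supermanifolds \emph{if and only if} they lie in the same orbit — the "only if" direction requires showing that any supermanifold isomorphism covering a nontrivial bundle automorphism can, after adjusting by an element of $\Gc^{(2)}_E$, be taken to be induced from that bundle automorphism via the splitting of $1\to\Gc^{(2)}_E\to\mathscr{A}ut\wedge^\bt\Ec\to\mathscr{A}ut\,\Ec\to 1$. This splitting, noted by Onishchik, is what makes the bookkeeping work; I would isolate it as the crucial input. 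The remaining verifications (cocycle and coboundary conditions translating correctly, independence of the choice of cover up to common refinement) are routine and I would only indicate them, referring as the excerpt does to \cite{GREEN} and \cite[Chapter 2]{BETTPHD} for the detailed computations.
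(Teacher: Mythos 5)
The paper does not actually prove Theorem \ref{dn8j83jp3k9k922}; it is quoted from Green \cite{GREEN} with no argument supplied (the QED symbol follows the statement directly), so there is no internal proof to compare your proposal against. Your reconstruction is essentially the standard argument and, in outline, correct. Two remarks are nonetheless worth recording.

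First, a simplification in setting up the cocycle: rather than working in local $\Cbb^{p|q}$-charts and ``holding fixed'' the degree $\leq1$ transition data, it is cleaner to cover $M$ by opens $U_i$ and choose $\Zbb_2$-graded local isomorphisms $\phi_i:\Oc_M|_{U_i}\stackrel{\cong}{\ra}\wedge^\bt\Ec|_{U_i}$ compatible with a fixed coframe $\Jc/\Jc^2\cong\Ec$. Then $g_{ij}:=\phi_i\circ\phi_j^{-1}$ is an automorphism of $\wedge^\bt\Ec|_{U_i\cap U_j}$ inducing the identity modulo $\Jc^2$, hence a section of $\Gc^{(2)}_E$ in precisely the sense of the paper's definition, and the cocycle condition holds on the nose; changing the $\phi_i$ alters $\{g_{ij}\}$ by a coboundary. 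This bypasses the twisted cocycle bookkeeping implicit in your ``factor out the degree $\leq 1$ data'' step, which otherwise needs to be checked carefully because the $\Gc^{(2)}_E$-part of $\rho_{\Uc\Vc}$ satisfies a cocycle condition only after conjugation by the $(M,E)$ transition functions.

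Second, the ``only if'' direction you correctly single out as the delicate point does go through once the Onishchik splitting of $1\ra\Gc^{(2)}_E\ra\mathscr Aut\wedge^\bt\Ec\ra\mathscr Aut~\Ec\ra1$ is in hand. Given an isomorphism $\Phi:\Xfr_{(M,E)}\stackrel{\cong}{\ra}\Xfr_{(M,E)}^\p$ over the identity on $M$, each $\phi_i^\p\circ\Phi\circ\phi_i^{-1}$ is an automorphism of $\wedge^\bt\Ec|_{U_i}$ inducing one and the same global $a\in H^0(M,\mathscr Aut~\Ec)$ on $\Jc/\Jc^2\cong\Ec$; using the splitting it factors as $h_i\circ\hat a$ with $h_i\in\Gc^{(2)}_E(U_i)$ and $\hat a$ the canonical lift of $a$. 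A short computation gives $g^\p_{ij}=h_i(\hat a\, g_{ij}\,\hat a^{-1})h_j^{-1}$, so $[g^\p]=a\cdot[g]$ in $H^1(M,\Gc^{(2)}_E)$ — exactly the $H^0(M,\mathscr Aut~\Ec)$-action that the theorem quotients by. This is the same mechanism the paper gestures at in its proof sketch of Proposition \ref{jf983f3jfoj3ojg93jgp}, which is the framed refinement of the present theorem, and also explains the first sentence of the statement: a supermanifold defines a well-defined class in $H^1(M,\Gc^{(2)}_E)$ only after a coframe is chosen, and the quotient removes that dependence.
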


As a result of Theorem \ref{dn8j83jp3k9k922} above we deduce the following statements pertaining to Question \ref{kf93j890jf0j}:
\begin{enumerate}[(i)]
	\item any supermanifold $\Xfr_{(M, E)}$ will define an element in $H^1(M, \mathcal Q^{(i);\pm}_E)$ for \emph{some $i$} by \eqref{8393hf93j3fojf83} which, by Definition \ref{kd93jf03j0j3j3p}, is called an obstruction class;
	\item if $\Xfr$ is non-split, then there will not exist a level-$(q+1)$ trivialisation for $\Xfr$, and;
	\item if $\Xfr$ is split, then there will exist a level-$(q+1)$ trivialisation for $\Xfr$.
\end{enumerate}
As remarked in \cite[p. 34]{DW1}, the obstruction classes themselves do not generally constitute a system of invariants of a given supermanifold. To elaborate further on this point we provide the following illustration.

\begin{ILL}\label{kopdj389hf883jfpo3}
Consider the split model $\Pi E$. By Theorem $\ref{dn8j83jp3k9k922}$ it will define a trivialisation $\rho$ in $H^1(M, \Gc^{(2)}_E)$, and by construction this trivialisation $\rho$ corresponds to the choice of base-point in the pointed set $H^1(M, \Gc^{(2)}_E)$. Furthermore, we also have that $\om_*(\rho) = 0$ in $H^1(M, \mathcal Q^{(2);+}_E)$ since $\om_*$ preserves the base-point. Then by exactness of the sequence in \eqref{8393hf93j3fojf83} at $i = 2$, we will obtain a lift of $\rho$ to some level-$3$ trivialisation $\vp_3\in H^1(M, \Gc^{(3)}_E)$. The map $H^1(M, \Gc^{(3)}_E) \ra H^1(M, \Gc^{(2)}_E)$ is, in general, not injective however so there may exist many lifts $\vp_3$ of $\rho$ and there is no reason to suppose, in general, that $\om_*(\vp_3) = 0$ in $H^1(M, \mathcal Q^{(3);-}_E)$. In accordance with \emph{(iii)} above, if $\Xfr$ is split then associated to $\rho$ will be lifts $\vp_i\in H^1(M, \Gc^{(i)}_E)$ such that $\om_*(\vp_i) = 0$ for each $i$. Lifts $\vp_i$ of the split supermanifold $\rho$ for which $\om_*(\vp_i)\neq0$ are referred to in \cite{DW1} as `exotic lifts'.
\end{ILL}

\begin{REM}
\emph{In terms of thickenings, an exotic structure on the split model $\Pi E$ is a supermanifold $\Xfr_{(M, E)}$ with the filtration
\begin{align}
M \subset \Pi E^{(1)}\subset \Pi E^{(2)}\subset \cdots \subset \Pi E^{(l-1)}\subset \Xfr_{(M, E)}^{(l)}\subset \cdots \subset \Xfr_{(M, E)}
\end{align}
where:
\begin{enumerate}[(i)]
	\item $2<l<q$;
	\item $\Xfr_{(M, E)}\cong \Pi E$ and; 
	\item $[\Xfr_{(M, E)}^{(l)}]\in \mathrm {\bf T}^1(\Pi E^{(l-1)})$ is non-vanishing. 
\end{enumerate}	
We refer to \cite{BETTHIGHOBS} where exotic structures on supermanifolds are studied in more detail.}
\end{REM}

\subsection{Supermanifolds and Thickenings}
We revisit thickenings now by considering the following variant of Question \ref{kd93f3jf3kf9p4}:

\begin{QUE}\label{dmopeififp3p33}
Let $\eta\in H^1(M,\mathcal Q_E^{(i);\pm})$. Then does there exist a supermanifold $\Xfr_{(M, E)}$ which admits a trivialisation whose obstruction class is realised by this given element $\eta$?
\end{QUE}

From Corollary \ref{dopmropvopejpv}, the space $H^1(M,\mathcal Q_E^{(i);\pm})$ can be identified with $i$-th order thickenings $\Xfr_{(M, E)}^{(i)}$ of $\Pi E^{(i-1)}$. In particular, a partial answer to Question \ref{dmopeififp3p33} is readily given by Proposition \ref{o3jdioj3ijp3}: \emph{if $\pt_*(\eta) \neq0$, then there will not exist any supermanifold realising $\eta$ as an obstruction class}. Hence we find in Proposition \ref{o3jdioj3ijp3} a necessary condition for $\eta$ to be realised by some supermanifold. Naturally, we can then ask: \emph{is this sufficient?} This question is addressed in the following result:

\begin{PROP}\label{ppoppieepeee}
$\img~\om_*\subset \ker\pt_*$ but not necessarily conversely.
\end{PROP}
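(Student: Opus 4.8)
The plan is to unwind both maps $\om_*$ and $\pt_*$ at the level of cochains on a common trivialisation and compare. Recall the setup: $\om_*$ is the connecting map in the cohomology sequence \eqref{8393hf93j3fojf83} arising from $1 \ra \Gc^{(i+1)}_E \ra \Gc^{(i)}_E \ra \Gc^{(i)}_E/\Gc^{(i+1)}_E \cong \mathcal Q^{(i);\pm}_E \ra 1$, and it takes a level-$i$ trivialisation (an element of $H^1(M,\Gc^{(i)}_E)$, which by Definition \ref{rfijf903jf903kf} is realised by a supermanifold together with a lift of its associated trivialisation) to the obstruction to lifting it one step further. Meanwhile $\pt_*$ from Proposition \ref{o3jdioj3ijp3} takes the class $[\Xfr^{(l+1)}_{(M,E)}] \in \mathrm{T}^1(\Xfr^{(l)}_{(M,E)})$ to the $2$-cocycle $\{\Gam^{(l+2)}_{\Uc\Vc\Wc}\}$ measuring the failure of $\rho^{(l+1)}$ to extend, living in $H^2(M, \mathcal Q^{(l+2);\pm}_E)$. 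So these maps have \emph{different} targets ($H^1$ versus $H^2$), and the containment ``$\img~\om_* \subset \ker\pt_*$'' must be read with the identification from Corollary \ref{dopmropvopejpv}: an element $\eta \in \img~\om_* \subset H^1(M,\mathcal Q^{(i);\pm}_E)$ corresponds to an $i$-th order thickening $\Xfr^{(i)}_{(M,E)}$ of $\Pi E^{(i-1)}$, and ``$\eta \in \ker\pt_*$'' means precisely that this thickening is unobstructed, i.e. extends to order $i+1$.

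With this reading the proof is essentially a chase. First I would take $\eta \in \img~\om_*$, so $\eta = \om_*(\vp)$ for some level-$i$ trivialisation $\vp \in H^1(M,\Gc^{(i)}_E)$; by Definition \ref{rfijf903jf903kf} and the discussion around \eqref{8393hf93j3fojf83}, $\vp$ is represented by a $1$-cocycle $\{g_{\Uc\Vc}\}$ valued in $\Gc^{(i)}_E$, i.e. genuine automorphisms of $\wedge^\bt\Ec$ (modulo $\Jc^{i+1}$, say) satisfying the cocycle condition exactly. Second, I would observe that such a $\{g_{\Uc\Vc}\}$ assembles a genuine supermanifold structure truncated at appropriate order — more precisely it produces a trivialisation $(\Ufr^{(i)}, \rho^{(i)})$ whose degree-$i$ component is the $\mathcal Q^{(i);\pm}_E$-valued cocycle representing $\om_*(\vp) = \eta$, and crucially $\rho^{(i)}$ itself \emph{satisfies the cocycle condition}, because it came from honest automorphisms $g_{\Uc\Vc}$. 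Third — and this is the heart — I would invoke the computation in the proof of Theorem \ref{jnvjknknvkjnvk} (equations \eqref{mchrehehruei38deu383j}--\eqref{dcnn8984jid} and Lemma \ref{bipalabophhhh89}): the $2$-cocycle $\Gam^{(i+1)}$ obstructing the extension of $\rho^{(i)}$ to order $i+1$ is built from the failure of $\rho^{(i)}$ to satisfy the cocycle condition one degree higher; but since $\rho^{(i)}$ arises from the automorphisms $g_{\Uc\Vc}$ which can be taken to satisfy the cocycle condition to \emph{all} orders up to $q$ (simply by not truncating), the obstruction $\Gam^{(i+1)}$ vanishes — hence $\pt_*(\eta) = 0$.

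The converse failure is the easier half and I would handle it by a dimension/example remark rather than a construction: $\ker\pt_*$ consists of all unobstructed $i$-th order thickenings of $\Pi E^{(i-1)}$, whereas $\img~\om_*$ consists only of those obstruction classes actually attained by a (possibly exotic) supermanifold-derived trivialisation and its lifts; there is no reason an unobstructed thickening — which extends one step — must extend all the way to a full supermanifold, and the ``exotic lifts'' discussion of Illustration \ref{kopdj389hf883jfpo3} together with the explicit examples of Section \ref{7} (obstructed thickenings of $\Cbb\Pbb^2$ appearing at intermediate order while lower-order thickenings are unobstructed) exhibits exactly such a gap. So for this direction I would simply point to those examples and to the fact that the map $H^1(\Gc^{(i)}_E) \ra H^1(\mathcal Q^{(i);\pm}_E)$ is neither surjective nor, going down, does surjectivity onto $\ker\pt_*$ follow from exactness of \eqref{8393hf93j3fojf83} alone.

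The main obstacle I anticipate is bookkeeping the two different identifications cleanly: matching up the ``level-$i$ trivialisation'' $\vp \in H^1(M,\Gc^{(i)}_E)$ (an automorphism-valued cocycle) with the ``trivialisation $(\Ufr^{(i)},\rho^{(i)})$ of a thickening'' (coordinate-change data) so that the degree-$i$ piece of $\rho^{(i)}$ really is the $\mathcal Q^{(i);\pm}_E$-cocycle $\om_*(\vp)$, and then verifying that the cocycle condition inherited from the $g_{\Uc\Vc}$'s is precisely what makes $\Gam^{(i+1)}$ (in the notation of \eqref{mchrehehruei38deu383j}) vanish. This is where one must be careful that ``$\om_*(\vp)$ is the obstruction to lifting $\vp$ from $\Gc^{(i)}_E$ to $\Gc^{(i-1)}_E$'' versus ``$\pt_*(\eta)$ is the obstruction to lifting the $i$-th order thickening to order $i+1$'' are compatible statements — essentially both are the primary obstruction in the same filtered tower, read from the two sides of $\Gc^{(i)}_E/\Gc^{(i+1)}_E \cong \mathcal Q^{(i);\pm}_E$, and the containment is the statement that coming from a genuine (higher) automorphism kills the next obstruction.
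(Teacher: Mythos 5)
Your proof is correct and takes essentially the same approach as the paper. For the containment $\img~\om_*\subset\ker\pt_*$, the paper argues that $\om_*(\vp_i)$ corresponds under Corollary \ref{dopmropvopejpv} to the $i$-th order thickening embedded in an actual supermanifold $\Xfr_{(M,E)}$, whence that thickening and all its successors are unobstructed; you reach the same conclusion by working at the cochain level, observing that the $\Gc^{(i)}_E$-valued cocycle $\{g_{\Uc\Vc}\}$ is a cocycle of honest automorphisms of $\wedge^\bt\Ec$, so its order-$i$ truncation $\rho^{(i)}$ satisfies the cocycle condition at order $i+1$ as well and hence $\Gam^{(i+1)}$ vanishes on the nose. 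For the non-converse, both you and the paper settle for the same observation: unobstructedness at one step does not force extendability to a full supermanifold, so $\ker\pt_*$ may exceed $\img~\om_*$ — the difference being the pseudo-supermanifolds of Definition \ref{f904fj04fjj33}.

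One small imprecision worth flagging: you suggest that Illustration \ref{kopdj389hf883jfpo3} and Section \ref{7} ``exhibit exactly such a gap.'' They do not. Exotic lifts are lifts of a (split) supermanifold with $\om_*\neq 0$, so they still lie in $\img~\om_*$; and the $\Cbb\Pbb^2$ constructions produce \emph{obstructed} thickenings, i.e.\ classes \emph{outside} $\ker\pt_*$, not elements of $\ker\pt_*\setminus\img~\om_*$. Neither produces a pseudo-supermanifold, and indeed the paper itself does not construct one either — it only argues that the forward implication has no reason to reverse. This does not affect the validity of your argument, but the citations overclaim what the referenced passages actually provide.
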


\begin{proof}
By Corollary \ref{dopmropvopejpv} we know that $H^1(M,\mathcal Q_E^{(i);\pm})\cong {\bf T}^1(\Pi E^{(i-1)})$. Let $\Xfr^{(i)}_{(M, E)}\supset \Pi E^{(i-1)}$ be a thickening and suppose that $\pt_*([\Xfr^{(i)}_{(M, E)}]) = 0$. Then by Proposition \ref{o3jdioj3ijp3} it will be unobstructed. Hence we can find a thickening $\Xfr^{(i+1)}_{(M, E)}\supset \Xfr^{(i)}_{(M, E)}$. Now, by Theorem \ref{dn8j83jp3k9k922} and the succeeding discussion, we know that any supermanifold $\Xfr_{(M, E)}$ will define an obstruction class in $H^1(M, \mathcal Q_E^{(i);\pm})$ for \emph{some $i$}, not necessarily unique, but in the image of the obstruction map $\om_*$. Suppose it admits a level-$i$ trivialisation with $\eta = \om_*(\vp_i)$. Then we can write:
\begin{align}
M \subset \Pi E^{(i)} \subset \cdots\subset \Pi E^{(i-1)}\subset \Xfr^{(i)}_{(M, E)}\subset \Xfr^{(i+1)}_{(M, E)}\subset\cdots \subset \Xfr_{(M, E)}
\label{dkjfjf03j93322}
\end{align}
and evidently $\om_*(\vp_i) = [\Xfr_{(M, E)}^{(i)}]$. Importantly, we see that the thickening $\Xfr^{(i+1)}_{(M, E)}\supset \Xfr^{(i)}_{(M, E)}$ must be unobstructed and so must thickenings of it, and so on (c.f., Remark \ref{dnclncjkdddjdjd}). More generally however there is of course no reason to expect, just because the thickening $\Xfr^{(i)}_{(M, E)}\supset \Pi E^{(i-1)}$ is unobstructed, that the thickening $\Xfr^{(i+1)}_{(M, E)}\supset \Xfr^{(i)}_{(M, E)}$ will be unobstructed also. In this case we will not be able to construct a supermanifold as in \eqref{dkjfjf03j93322} and so we \emph{cannot} write $ [\Xfr_{(M, E)}^{(i)}] = \om_*(\vp_i)$ for any level-$i$ trivialisation $\vp_i$. This shows that while we certainly have $\img~\om_*\subset \ker\pt_*$, the reverse containment need not hold.  
\end{proof}

The above result motivates the following definition.

\begin{DEF}\label{f904fj04fjj33}\emph{A thickening that defines a non-trivial class in $\ker\pt_*/\img~\om_*$ is referred to as a \emph{pseudo-supermanifold}. }
\end{DEF}

From the argument presented in the proof of Proposition \ref{ppoppieepeee}, it is clear that pseudo-supermanifolds can be characterised as those thickenings $\Xfr^{(l)}_{(M, E)}$ given by a filtration:
\begin{align}
M \subset \Pi E^{(1)} \subset \cdots\subset \Pi E^{(l-1)} \subset \Xfr^{(l)}_{(M, E)} \subset \cdots \subset \Xfr^{(k)}_{(M, E)}
\label{4j903f03jf03j9}
\end{align}
for \emph{some} $k$, where: 
\begin{enumerate}[(1)]
	\item $l<k< q$, for $q$ the rank of the vector bundle $E$; 
	\item that $[\Xfr^{(l)}_{(M, E)}]\in \ker\pt_*/\img~\om_*$ is non-trivial and;
	\item $\pt_*([\Xfr^{(k)}_{(M, E)}])\neq0$. 
\end{enumerate} 
Hence, for any $i$, we have:
\begin{align}
H^1(M, \mathcal Q^{(i);\pm}_E) = \img~\om_*\oplus \left(\frac{\ker \pt_*}{\img~\om_*}\right)\oplus (\ker\pt_*)^\perp.
\label{49d93j8fj3fj3p}
\end{align}
In words, we can decompose the cohomology groups $H^1(M, \mathcal Q_E^{(i);\pm})$ into:
\begin{enumerate}[(i)]
	\item supermanifolds; 
	\item pseudo-supermanifolds and; 
	\item obstructed thickenings. 
\end{enumerate}
This addresses Question \ref{dmopeififp3p33}. In what follows we will illustrate some of the results so far presented in a more invariant manner for thickenings of even order.

\subsection{Thickenings of Even Order} 
We have appealed to trivialisations to prove all the results so far presented. In the present section we will give simple illustrations of Proposition \ref{o3jdioj3ijp3} and \ref{ppoppieepeee} for thickenings $\Xfr^{(2j)}_{(M, E)}\supset \Pi E^{(2j-1)}$. The new ingredient here is the following morphism of exact sequences concerning the groups $\Gc^{(i)}_E$. This result extends Proposition \ref{dm30j8fjf3o} and can be found in \cite[p. 55]{ONISHCLASS}.

\begin{PROP}\label{dmopifoijfojfo3}
There exists a morphism of short-exact sequences of sheaves of groups on $M$:
\[
\xymatrix{
1\ar[r] & \ar@{^{(}->}[d] \Gc^{(2j+2)}_E \ar@{^{(}->}[r] &\ar@{=}[d] \Gc^{(2j)}_E\ar[r]^{\lam~~} &\ar[d]^p \Tfr_{\Pi E}[2j]\ar[r] & 1
\\
1 \ar[r] & \Gc^{(2j+1)}_E \ar@{^{(}->}[r] & \Gc^{(2j)}_E \ar[r]^{\om~~} & \mathcal Q_E^{(2j);+}\ar[r] & 1
}
\]
\qed
\end{PROP}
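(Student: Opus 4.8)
<br>

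The plan is to construct the morphism of short-exact sequences degree by degree, building on Proposition \ref{dm30j8fjf3o} and the structure of the tangent sheaf from Lemmas \ref{jd90d098jdoijp3} and \ref{hjbbjhhveyjj}. First I would identify the map $\lam : \Gc^{(2j)}_E \to \Tfr_{\Pi E}[2j]$: given an automorphism $\al \in \Gc^{(2j)}_E$, so that $\al(u) - u \in \Jc^{2j}$ for all local sections $u$, the assignment $u \mapsto \al(u) - u$ is a derivation modulo higher order, and taking its class modulo $\Jc^{2j+1}$ produces a derivation lowering degree by $2j$, i.e.\ an element of $\Tfr_{\Pi E}[2j]$. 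The content here is that $\lam$ is a \emph{surjective} homomorphism of sheaves of groups (the target being abelian, since $\Tfr_{\Pi E}[2j]$ sits in even-degree tangent vectors by the discussion following \eqref{oinriobuoioie}) with kernel exactly $\Gc^{(2j+2)}_E$: an automorphism in $\Gc^{(2j)}_E$ lies in the kernel iff $\al(u) - u \in \Jc^{2j+1}$, but then the parity considerations from \eqref{3hf8gf83gf793hf39} — the ideal $\Jc$ being generated by odd elements, so that $\Jc^{2j+1}$ and $\Jc^{2j+2}$ agree on the relevant graded pieces — force $\al(u) - u \in \Jc^{2j+2}$, i.e.\ $\al \in \Gc^{(2j+2)}_E$. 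This gives the top row.

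Next I would produce the vertical map $p : \Tfr_{\Pi E}[2j] \to \Qcal_E^{(2j);+} = \Tfr_M \otimes \wedge^{2j}\Ec$. This is simply the surjection appearing in Lemma \ref{hjbbjhhveyjj} (the sequence \eqref{jkkbkbkbk} at $k = 2j$), sending a vector field $v \in \Tfr_{\Pi E}[2j]$ to $v \bmod \Jc^{2j+1}$ — precisely the map used in the proof of Lemma \ref{lkliioolkss}. Its kernel is $\wedge^{2j+1}\Ec \otimes \Ec^\vee$. Then I need to check the right-hand square commutes: starting from $\al \in \Gc^{(2j)}_E$, one route sends $\al \mapsto \lam(\al) \mapsto p(\lam(\al))$, the other sends $\al \mapsto \om(\al)$, where $\om$ is the bottom composite from Proposition \ref{dm30j8fjf3o} (reduction $\Gc^{(2j)}_E/\Gc^{(2j+1)}_E \cong \Qcal_E^{(2j);+}$). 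Both amount to taking $u \mapsto \al(u) - u$ and then reducing modulo $\Jc^{2j+1}$, so commutativity is essentially a definitional unwinding. The left square commutes because both $\Gc^{(2j+2)}_E \hookrightarrow \Gc^{(2j)}_E$ and $\Gc^{(2j+1)}_E \hookrightarrow \Gc^{(2j)}_E$ are the evident inclusions (and $\Gc^{(2j+2)}_E \subset \Gc^{(2j+1)}_E$ by Proposition \ref{dm30j8fjf3o}), with the middle vertical an equality.

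Finally I would verify exactness of the two rows. The bottom row is exactly the defining exact sequence for $\om$ from Proposition \ref{dm30j8fjf3o}, so there is nothing new. For the top row, injectivity of $\Gc^{(2j+2)}_E \hookrightarrow \Gc^{(2j)}_E$ is the inclusion of subsheaves; exactness at the middle is the kernel computation for $\lam$ done above; surjectivity of $\lam$ is the remaining point — given a local section $v$ of $\Tfr_{\Pi E}[2j]$, one must exhibit an automorphism $\al = \id + (\text{order } 2j \text{ term realizing } v) + \cdots$, which exists because $\wedge^\bt\Ec$ is locally a polynomial (exterior) algebra and one may exponentiate/adjust the derivation $v$ to an algebra automorphism congruent to the identity modulo $\Jc^{2j}$ (the higher-order corrections being absorbed since we are only asking for the class, and convergence is automatic as $\Jc$ is nilpotent). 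I expect the main obstacle to be pinning down precisely why the kernel of $\lam$ is $\Gc^{(2j+2)}_E$ rather than merely $\Gc^{(2j+1)}_E$ — this is where the \emph{even} order $2j$ is used essentially, and it rests on the parity argument of Lemma \ref{fiugf78g79h380fj09} (equation \eqref{3hf8gf83gf793hf39}): an automorphism that shifts by $\Jc^{2j+1}$ automatically shifts by $\Jc^{2j+2}$ because the relevant graded components of the structure sheaf skip parity. Once that is in hand, the rest is diagram-chasing and bookkeeping, and I would point the reader to \cite[p.~55]{ONISHCLASS} for the detailed verification.
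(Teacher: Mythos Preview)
The paper gives no proof of this proposition: it is stated with a \qed\ and attributed to \cite[p.~55]{ONISHCLASS}. Your sketch therefore already goes beyond what the paper offers, and the overall architecture---define $\lambda$ via $\alpha \mapsto (\alpha - \id)$, take $p$ to be the surjection of Lemma~\ref{hjbbjhhveyjj}, and verify the squares---is sound.

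There is, however, a genuine slip in your kernel computation. You assert that $\alpha \in \ker\lambda$ iff $\alpha(u) - u \in \Jc^{2j+1}$ for all $u$, i.e.\ that $\ker\lambda = \Gc^{(2j+1)}_E$, and then invoke parity to conclude $\alpha \in \Gc^{(2j+2)}_E$. But were that parity step valid it would prove $\Gc^{(2j+1)}_E = \Gc^{(2j+2)}_E$, contradicting Proposition~\ref{dm30j8fjf3o}, whose quotient $\mathcal Q_E^{(2j+1);-} = \wedge^{2j+1}\Ec\otimes\Ec^\vee$ is typically nonzero. The source of the error is your description of $\lambda$ as ``taking the class modulo $\Jc^{2j+1}$'': applied uniformly, this records only the action on $\Cc_M$ (since $\alpha(u)-u$ already lies in $\Jc^{2j+1}$ for $u\in\Ec$, by parity) and so reproduces $\omega$, not $\lambda$. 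The correct $\lambda(\alpha)$ is the homogeneous degree-$2j$ component of the map $u \mapsto \alpha(u)-u$ with respect to the $\Zbb$-grading on $\wedge^\bullet\Ec$: on $\Cc_M$ this is the degree-$2j$ part, while on $\Ec$ it is the degree-$(2j{+}1)$ part. With this definition an element of $\Gc^{(2j+1)}_E \setminus \Gc^{(2j+2)}_E$ maps under $\lambda$ to a \emph{nonzero} class in $\wedge^{2j+1}\Ec\otimes\Ec^\vee \subset \Tfr_{\Pi E}[2j]$, and one checks $\ker\lambda = \Gc^{(2j+2)}_E$ directly, without passing through $\Gc^{(2j+1)}_E$. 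The evenness of $2j$ is used not to collapse $\Gc^{(2j+1)}_E$ into $\Gc^{(2j+2)}_E$, but to guarantee that the leading terms of $\alpha - \id$ on $\Cc_M$ and on $\Ec$ have the \emph{same} homogeneous degree and hence assemble into a single element of $\Tfr_{\Pi E}[2j]$.
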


By naturality of the sheaf cohomology functor $H^i$ we will obtain the following diagram on cohomology from the diagram in Proposition \ref{dmopifoijfojfo3} (horizontal) and the short-exact sequence in Lemma \ref{hjbbjhhveyjj} (vertical):
\begin{align}
\xymatrix{
& &\vdots\ar[d]
\\
& & H^1(M, \mathcal Q^{(2j+1);-}_E)\ar[d]
\\
\cdots \ar[r] & \ar@{=}[d]H^1(M, \Gc^{(2j)}_E) \ar[r]^{\lam_*~~} & H^1(M, \Tfr_{\Pi E}[2j])\ar[d]^{p_*}
\\
\cdots\ar[r] & H^1(M, \Gc^{(2j)}_E)\ar[r]^{\om_*} & H^1(M, \mathcal Q_E^{(2j);+})\ar[d]^{\pt_*}
\\
& & H^2(M, \mathcal Q_E^{(2j+1);-})\ar[d]
\\
& & \vdots
}
\label{miojfioejfoijf039}
\end{align}
From the above diagram we have the following:
\begin{enumerate}[(1)]
	\item the map $\pt_*$ in Proposition \ref{o3jdioj3ijp3} coincides with $\pt_* : H^1(\mathcal Q_E^{(2j);+}) \ra H^2(\mathcal Q_E^{(2j+1);-})$ in \eqref{miojfioejfoijf039} in the case where $\Xfr^{(l)}_{(M, E)} = \Pi E^{(2j-1)}$ (c.f., Corollary \ref{dopmropvopejpv}) and;
	\item commutativity of \eqref{miojfioejfoijf039} and long-exactness on cohomology allows us to deduce:
	\[
	\img~\om_* = \img~(p_*\circ \lam_*) \subset \img~p_* = \ker\pt_*.
	\]
	Hence we have $\img~\om_*\subset \ker\pt_*$, which is an instance of Proposition \ref{ppoppieepeee}. 
\end{enumerate}
We conclude now our foray into thickenings and supermanifolds. In the remaining sections we comment on the corresponding moduli problems and provide illustrations of obstructed thickenings of the complex projective plane. Before doing so however, we will compare the results obtained here with what one might find in the literature in what follows.

\subsection{Comparisons with Known Results}
The results so far presented and discussed in this article---mainly Proposition \ref{o3jdioj3ijp3} and Theorem \ref{o4993uf0oijwoijfwojpw}, are reminiscent of similar results one finds in studies of deformation theory in both complex-analytic and algebraic geometry. In the complex-analytic setting, we can compare Theorem \ref{jnvjknknvkjnvk} and Proposition \ref{o3jdioj3ijp3} with \cite[Theorem 5.1, p. 214]{KS}. In the algebraic setting, we compare Proposition \ref{o3jdioj3ijp3} with \cite[Tag 08L8]{STACKPROJ} and Theorem \ref{o4993uf0oijwoijfwojpw} with \cite[Tag 08UC]{STACKPROJ}. Where complex supermanifolds are concerned, the development of a deformation theory for them can be found in works such as \cite{ROTH} in the complex-analytic setting and \cite{VAIN} in the more algebro-geometric setting. 
\\\\
In \cite{ROTH}, the main idea is to regard a complex supermanifold as a deformation of its split model. A guiding question is then: \emph{what are the obstructions to deforming the split model?}  Sufficient conditions, being the vanishing of the second cohomology group, are identified in \cite[Theorem 3, p. 259]{ROTH}. In this way, the central result in \cite{ROTH} mirrors Theorem \ref{jnvjknknvkjnvk} and thereby also the results in \cite{EASTBRU}. In addition to these sufficient conditions, we consider necessary conditions in this article. Indeed the subject of the illustrations on the projective plane, provided at the end of this article, focus on these necessary conditions.
\\\\
The developments in \cite{VAIN} follow the more algebraic route to deformation theory, in the spirit of \cite{HARTDEF, STACKPROJ}. These developments are however concerned more with the deformation theory of a given supermanifold, as opposed to the problem of existence of a supermanifold as a deformation. In this subtle way the articles \cite{ROTH} and \cite{VAIN} differ\footnote{\label{ijfoeho8f4ho}this is explicitly mentioned by Vaintrob in \cite{VAIN} in the introductory paragraph. The difference is attributed to the difference between the terms \emph{classification} and \emph{deformation}.} and the material presented so far in this article is more closely related to \cite{ROTH}. In the following section we will consider the moduli problem for supermanifolds and thickenings.

\section{Moduli Problems: Supermanifolds and Thickenings}\label{6}

\noindent
Moduli problems in considerable generality are discussed in \cite[p. 150]{HARTDEF}. Loosely speaking, one aspect of the moduli problem concerns itself with giving sets such as \eqref{rkfj490fj4fp3okp3ofk3p} the structure of an algebraic variety or scheme or some other such geometric object. Such problems admit nice, categorical reformulations where they can be reduced to the problem of representability of an appropriately constructed functor. 
\\\\
In \cite{HARTDEF}, sets such as \eqref{rkfj490fj4fp3okp3ofk3p} are themselves referred to as moduli problems. In this section we will discuss the moduli problem for complex supermanifolds, following studies by Onishchik in \cite{ONISHMOD, ONISHCLASS}. We aim to then formulate a variant of this problem in order to investigate the structure of the corresponding moduli variety. Our objective will be to formulate some general conjectures on the structure of this variety. Finally, in taking motivation from the decomposition in \eqref{49d93j8fj3fj3p}, we conclude by arguing that this decomposition will hold at the level of moduli, i.e., up to isomorphism.

\subsection{Framed Supermanifolds} The moduli problem for complex supermanifolds, modelled on a given complex manifold $M$ and holomorphic vector bundle $E$, is written down in \eqref{rkfj490fj4fp3okp3ofk3p} and given further meaning in Theorem \ref{dn8j83jp3k9k922}. We consider here a variant of the moduli problem in \eqref{rkfj490fj4fp3okp3ofk3p}. 
\\\\
In Definition \ref{ifh4fgiuho48h04} we defined a coframe for a supermanifold $\Xfr_{(M, E)}$.  Dual to a coframe is a frame and we define it in what follows. Starting with a supermanifold  modelled on $(M, E)$, recall the exact sequences in Lemma \ref{jd90d098jdoijp3} and \ref{hjbbjhhveyjj}:
\begin{align}
0\ra \Tfr_{(M, E)}[0] \hookrightarrow \Tfr_{(M, E)}[-1] \twoheadrightarrow \Tfr_{\Pi E}[-1] \ra0.
\label{rofopjofj3o9fj930jf3}
\end{align}
Since $\Tfr_{(M, E)}[-1] = \Tfr_{(M, E)}$ and $\Tfr_{\Pi E}[-1] \cong \Ec^\vee$ we obtain from \eqref{rofopjofj3o9fj930jf3} a surjective morphism of sheaves $\al: \Tfr_{(M, E)} \twoheadrightarrow \Ec^\vee$ which is an isomorphism modulo $\Tfr_{(M, E)}[0]$. 

\begin{DEF}\label{fmojfiojf4fjp3}
\emph{To any supermanifold $\Xfr_{(M, E)}$, an epimorphism $\al:\Tfr_{(M, E)} \ra \Ec^\vee$ with  $\ker \al = \Tfr_{(M, E)}[0]$ will be referred to as a \emph{framing}. A supermanifold $\Xfr_{(M, E)}$ equipped with a choice of framing $\al$ will be called \emph{framed}}.
\end{DEF}

To justify the terminology `coframe' we have:

\begin{LEM}\label{fh389fh98hfoij83}
Let $\al$ denote a framing on $\Xfr_{(M,E)}$. Then $\al^\vee$ is a coframe modulo $\Tfr_{(M, E)}[0]$.
\end{LEM}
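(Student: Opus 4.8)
The plan is to unwind both sides of the asserted identity into isomorphisms between the bottom graded piece of the tangent sheaf and the conormal data $\Jc/\Jc^2\cong \Ec$, and then to exhibit a single canonical perfect pairing through which the framing $\al$ and the dual coframe $\vp^\vee$ are recognised as mutually inverse, up to the overall normalisation that the statement permits. Throughout I take $\al$ to be the epimorphism $\Tfr_{(M,E)}\twoheadrightarrow\Ec^\vee$ defined immediately after \eqref{rofopjofj3o9fj930jf3}; a general framing in the sense of Definition \ref{fmojfiojf4fjp3} differs from this one by an automorphism of $\Ec^\vee$, and the scale factor appearing in the statement absorbs the normalisation choices made below.

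First I would make precise a canonical perfect pairing $\Tfr_{(M,E)}/\Tfr_{(M,E)}[0]\times \Jc/\Jc^2\ra \Cc_M$. For a local section $\nu$ of $\Tfr_{(M,E)}=\Der\,\Oc_M$ and a local section $f$ of $\Jc/\Jc^2$ with an arbitrary lift $\hat f\in\Jc$, put $\langle\nu,f\rangle:=\iota^\sharp(\nu(\hat f))\in\Cc_M$. Three verifications are needed. First, the value does not depend on the lift, since two lifts of $f$ differ by a section of $\Jc^2$ and the Leibniz rule gives $\nu(\Jc^2)\subset\Jc=\ker\iota^\sharp$. Second, the pairing descends to $\Tfr_{(M,E)}/\Tfr_{(M,E)}[0]$: if $\nu\in\Tfr_{(M,E)}[0]$ then $\nu(\Jc)\subset\Jc$, so $\langle\nu,\cdot\rangle=0$; conversely $\nu(\Jc)\subset\Jc$ propagates by the derivation property to $\nu(\Jc^l)\subset\Jc^l$ for all $l$, so $\Tfr_{(M,E)}[0]$ is exactly the left kernel. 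Third, $\Cc_M$-bilinearity follows again from the Leibniz rule together with $\iota^\sharp$ being a morphism of rings. In a local coordinate system $(x^\mu,\q_a)$ the base fields $\partial/\partial x^\mu$ lie in $\Tfr_{(M,E)}[0]$ while $\langle\partial/\partial\q_a,\q_b\rangle=\delta_{ab}$; hence $\Tfr_{(M,E)}/\Tfr_{(M,E)}[0]$ is locally free on the classes $[\partial/\partial\q_a]$, the pairing is perfect, and it yields a canonical isomorphism $\Tfr_{(M,E)}/\Tfr_{(M,E)}[0]\cong(\Jc/\Jc^2)^\vee$.

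Next I would transport this through the exact sequences already recorded. Lemma \ref{jd90d098jdoijp3} at $k=-1$ identifies $\Tfr_{(M,E)}/\Tfr_{(M,E)}[0]$ with $\Tfr_{\Pi E}[-1]$, and Lemma \ref{hjbbjhhveyjj} at $k=-1$, using $\wedge^{-1}\Ec=0$ and $\wedge^0\Ec=\Cc_M$, identifies $\Tfr_{\Pi E}[-1]$ with $\Ec^\vee$; the composition of these two is precisely the isomorphism that $\al$ induces on the quotient. On the other hand, composing the canonical isomorphism of the previous paragraph with $(\vp^{-1})^\vee:(\Jc/\Jc^2)^\vee\cong\Ec^\vee$ gives a second isomorphism $\Tfr_{(M,E)}/\Tfr_{(M,E)}[0]\cong\Ec^\vee$. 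Evaluating both in the local model, each carries $[\partial/\partial\q_a]$ to a common scalar multiple of the basis vector dual to $\vp(\q_a)$, so the two isomorphisms agree up to an overall rescaling. Transporting back to $\Tfr_{(M,E)}$ gives $\al\equiv\vp^\vee\pmod{\Tfr_{(M,E)}[0]}$ up to a scale factor, with $\vp^\vee$ understood as the inverse transpose $(\vp^{-1})^\vee$; this is the assertion.

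The main obstacle here is bookkeeping rather than substance: one has to confirm that the isomorphism $\Tfr_{\Pi E}[-1]\cong\Ec^\vee$ extracted from Lemma \ref{hjbbjhhveyjj} agrees with the one produced by applying the derivation pairing above to the split model, so that the two routes to $\Ec^\vee$ really differ only by the permitted normalisation constant. This requires keeping track of which canonical isomorphisms are in play---the identification $\wedge^0\Ec=\Cc_M$, the local splitting of $\Oc_M$, and the sign conventions for $\Der$. Everything else reduces to the local coordinate computation above, which globalises at once because all the maps involved are morphisms of sheaves and may be verified on a trivialising cover.
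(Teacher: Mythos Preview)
Your proof is correct and follows the same route as the paper: both rest on the identification $\Tfr_{(M,E)}/\Tfr_{(M,E)}[0]\cong(\Jc/\Jc^2)^\vee$, after which the framing and the dual coframe are compared through the fixed isomorphism $\vp:\Jc/\Jc^2\cong\Ec$. The paper's argument is a single line asserting this identity, whereas you supply the underlying construction---the derivation pairing $\langle\nu,f\rangle=\iota^\sharp(\nu(\hat f))$ and the check that $\Tfr_{(M,E)}[0]$ is exactly its left kernel---so your version is a fleshed-out form of the same idea rather than a different approach.
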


\begin{proof}
Recall from Definition \ref{ifh4fgiuho48h04} that a \emph{coframe} is an isomorphism between $\Jc/\Jc^2$ and $\Ec$. Dual to this is an isomorphism $(\Jc/\Jc^2)^\vee\cong \Ec^\vee$ which is precisely what we get from the data of a framing $\al$.
\end{proof}

The reason for introducing a framing in this section is so we can make sense of the following notion of equivalence.\footnote{We could also consider an equivalence by appealing directly to the coframe $\vp : \Jc/\Jc^2\stackrel{\cong}{\ra}\Ec$. However, it seems more natural to use framings. Moreover, framings are more closely related to similar discussions in \cite{DW1}.}

\begin{DEF}\label{rjf930j09j3fjp3fjp3w}\emph{Two framed supermanifolds $(\Xfr_{(M, E)}, \al)$ and $(\Xfr_{(M, E)}^\p, \al^\p)$ are said to be \emph{equivalent} if:
\begin{enumerate}[(i)]
	\item there exists an isomorphism $\lam: \Xfr_{(M, E)}\stackrel{\cong}{\ra}\Xfr_{(M, E)}^\p$ which is trivial when restricted to $\Pi E^{(1)}$ and;
	\item the isomorphism $\lam$ preserves the framing, i.e., that its differential $\lam_*$ induces a commutative diagram
	\[
	\xymatrix{
	\Tfr_{(M, E)}\ar[d]_{\lam_*} \ar[r]^{\al} & \Ec^\vee \\
	\Tfr_{(M, E)}^\p\ar[ur]_{\al^\p} &
	}
	\]
\end{enumerate}
where $\Tfr_{(M, E)}$, resp. $\Tfr_{(M, E)}^\p$ is the tangent sheaf of $\Xfr_{(M, E)}$, resp. $\Xfr_{(M, E)}^\p$. 
}
\end{DEF}

Similarly to \eqref{rkfj490fj4fp3okp3ofk3p}, the moduli problem here is:
\begin{align}
\Mfr_{(M, E)}^{\mathrm{framed}} = 
\left\{\mbox{framed supermanifolds $(\Xfr_{(M, E)}, \al)$}\right\}/\sim
\label{rkfj490fj4fp3okp3ofk3p9d039jd}
\end{align}
where the equivalence `$\sim$' in \eqref{rkfj490fj4fp3okp3ofk3p9d039jd} above is in the sense of Definition \ref{rjf930j09j3fjp3fjp3w}. As a set $\Mfr_{(M, E)}^{\mathrm{framed}}$ consists of equivalence classes of pairs $(\Xfr_{(M, E)}, \al)$ where $\al : \Tfr_{(M, E)} \twoheadrightarrow \Ec^\vee$ is a framing.

\begin{REM}\label{rfh784gf87h7fh398fh3}
\emph{The moduli problem $\Mfr_{(M, E)}^{\mathrm{framed}}$ is considered also in \cite[p. 2151]{VAIN}. In \cite[Chapter 3]{VAIN} it is argued that the corresponding moduli functor is quasi-representable. 
}
\end{REM}

 Evidently we have a map $\Mfr_{(M, E)}^{\mathrm{framed}} \ra \Mfr_{(M, E)}$ given by forgetting this choice of framing $\al$. The moduli problems \eqref{rkfj490fj4fp3okp3ofk3p} and \eqref{rkfj490fj4fp3okp3ofk3p9d039jd} are related in this way. Indeed, it is shown in the proof of Theorem \ref{dn8j83jp3k9k922} in \cite{GREEN} and remarked also in \cite[p. 32]{DW1} the following, which we state as a proposition and for which we provide a sketch of a proof:

\begin{PROP}\label{jf983f3jfoj3ojg93jgp}
The moduli problem for framed, complex supermanifolds modelled on a given pair $(M, E)$ is in one-to-one correspondence with $H^1(M, \Gc^{(2)}_E)$. 
\end{PROP}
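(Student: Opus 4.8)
The plan is to exhibit a bijection between $\Mfr_{(M,E)}^{\mathrm{framed}}$ and the pointed set $H^1(M,\Gc^{(2)}_E)$, realised via \v{C}ech cocycles with respect to a common refinement of trivialising covers. First I would recall, following Definition \ref{k49fk90fk3p} and Construction \ref{knkniucbruibckj}, that any supermanifold $\Xfr_{(M,E)}$ admits a trivialisation $(\Ufr,\rho)$ whose transition data $\rho=\{\rho_{\Uc\Vc}\}$ are local automorphisms of $\wedge^\bt\Ec$; modding out by $\Jc^2$ recovers the transition data of $\Pi E$, so each $\rho_{\Uc\Vc}$ lies in $\Gc^{(2)}_E$ on overlaps by the very definition of that sheaf. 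The cocycle condition \eqref{mciornoihroi} then says $\{\rho_{\Uc\Vc}\}\in \mathcal Z^1(\Ufr,\Gc^{(2)}_E)$, so we get a class in $H^1(M,\Gc^{(2)}_E)$. This is exactly the content of the first sentence of Theorem \ref{dn8j83jp3k9k922}, which I am entitled to assume.

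Next I would explain how the choice of framing $\al:\Tfr_{(M,E)}\twoheadrightarrow\Ec^\vee$ rigidifies the construction so that the class is well-defined on the nose (not just up to the $H^0(M,\mathscr{A}ut\,\Ec)$-action appearing in Theorem \ref{dn8j83jp3k9k922}). The point is that by Lemma \ref{fh389fh98hfoij83} a framing is, up to scale, dual to a coframe $\vp:\Jc/\Jc^2\stackrel{\cong}{\ra}\Ec$, and a coframe is precisely the data needed to pin down the degree-one part of the transition functions, i.e.\ to kill the $\mathscr{A}ut\,\Ec = H^0(M,\mathscr{A}ut\,\Ec)$ ambiguity in choosing local identifications $\Oc_M\cong_{\mathrm{loc.}}\wedge^\bt\Ec$ compatibly. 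Concretely: given $(\Xfr_{(M,E)},\al)$, choose local frames realising $\vp=\al^\vee$ on each chart; the resulting transition data are then determined up to the residual gauge freedom that fixes $\al$, and one checks this residual freedom is exactly a $0$-cochain in $\Gc^{(2)}_E$, so the cohomology class is unchanged. Conversely, a cocycle in $\mathcal Z^1(\Ufr,\Gc^{(2)}_E)$ glues (as in Construction \ref{knkniucbruibckj}) to a supermanifold together with the tautological local coframes, hence a framing; cohomologous cocycles differ by a $0$-cochain $\{g_\Uc\}$ with $g_\Uc\in\Gc^{(2)}_E$, and such a $0$-cochain defines exactly an isomorphism $\lam$ trivial modulo $\Jc^2$ (equivalently, trivial on $\Pi E^{(1)}$) preserving the framing — which is Definition \ref{rjf930j09j3fjp3fjp3w}. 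So the two maps are mutually inverse.

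I would then verify the two maps are well-defined in the remaining direction: changing the trivialising cover $\Ufr$ to a common refinement does not change the \v{C}ech class (standard), and an equivalence $\lam$ of framed supermanifolds in the sense of Definition \ref{rjf930j09j3fjp3fjp3w} produces, on a common refinement, a $0$-cochain $\{\lam_\Uc\}$ valued in $\Gc^{(2)}_E$ with $\rho'_{\Uc\Vc}=\lam_\Vc\circ\rho_{\Uc\Vc}\circ\lam_\Uc^{-1}$, exactly the coboundary relation, so equivalent framed supermanifolds give the same class. The base-point $\Pi E$ corresponds to the trivial cocycle, matching the pointed-set structure.

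The main obstacle will be the careful bookkeeping in the framing step: showing precisely that fixing $\al$ (equivalently the coframe $\vp$) cuts the gauge group from $H^0(M,\mathscr{A}ut\,\wedge^\bt\Ec)$ down to $H^0(M,\Gc^{(2)}_E)$, so that what was a quotient $H^1(M,\Gc^{(2)}_E)/H^0(M,\mathscr{A}ut\,\Ec)$ in Theorem \ref{dn8j83jp3k9k922} becomes $H^1(M,\Gc^{(2)}_E)$ itself. This is essentially the observation, attributed to \cite{GREEN} and \cite{DW1}, that the split exact sequence $1\to\Gc^{(2)}_E\to\mathscr{A}ut\,\wedge^\bt\Ec\to\mathscr{A}ut\,\Ec\to1$ lets one split off the $\mathscr{A}ut\,\Ec$-action; since the paper only asks for a sketch, I would state this cleanly, point to Lemma \ref{fh389fh98hfoij83} and the cited references, and not belabour the cocycle-level computation.
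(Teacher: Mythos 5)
Your proposal is correct and takes essentially the same route as the paper: both rely on Theorem \ref{dn8j83jp3k9k922} to realise any supermanifold as a class in $H^1(M,\Gc^{(2)}_E)$, then use Lemma \ref{fh389fh98hfoij83} and Definition \ref{rjf930j09j3fjp3fjp3w} to argue that fixing the framing removes exactly the $H^0(M,\mathscr{A}ut\,\Ec)$-quotient, leaving a bijection with $H^1(M,\Gc^{(2)}_E)$. The only difference is one of completeness: the paper's ``Proof Sketch'' stops at the inclusion $\Mfr^{\mathrm{framed}}_{(M,E)}\subseteq H^1(M,\Gc^{(2)}_E)$ and explicitly declines to deliberate on the reverse direction, whereas you also sketch the converse (gluing a cocycle to a tautologically framed supermanifold and matching coboundaries with framing-preserving isomorphisms), which is a reasonable elaboration rather than a divergence.
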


\noindent
\emph{Proof Sketch}. 
A framed supermanifold is a supermanifold $\Xfr_{(M, E)}$ equipped with a framing $\al$. We are free to change $\al$ by a global automorphism of $\Ec$, i.e., an element of $H^0(M, \mathscr Aut~\Ec)$ and so $\al$ is certainly not unique.  Now suppose $\Xfr_{(M, E)}$ and $\Xfr_{(M, E)}^\p$ are framed with respective framings $\al$ and $\al^\p$ and moreover suppose $(\Xfr_{(M, E)}, \al)\sim (\Xfr^\p_{(M, E)}, \al^\p)$. Then there exists an isomorphism $\lam : \Xfr_{(M, E)}\stackrel{\cong}{\ra}\Xfr^\p_{(M,E)}$ characterised by Definition \ref{rjf930j09j3fjp3fjp3w}. By Definition \ref{rjf930j09j3fjp3fjp3w}(i) we have $\Oc_M/\Jc^2 = \Oc_M^\p/\Jc^{\p2}$. The isomorphism $\lam$ preserves the $\Zbb_2$-grading which means $\Jc/\Jc^2 = \Jc^\p/\Jc^{\p2}$.\footnote{c.f., the proof of Lemma \ref{fiugf78g79h380fj09}.} That the coframes $\vp$ and $\vp^\p$ for $\Xfr_{(M, E)}$ resp. $\Xfr_{(M, E)}^\p$ coincide follows from Definition \ref{rjf930j09j3fjp3fjp3w}(ii) and Lemma \ref{fh389fh98hfoij83}. Hence the frames $\al$ and $\al^\p$ coincide, modulo $\Tfr_{(M, E)}[0]$.  Thus $H^0(M, \mathscr Aut~\Ec)$ separates equivalence classes of framed supermanifolds. Using this and the fact that any supermanifold modelled on $(M, E)$ defines an element in $H^1(M, \Gc^{(2)}_E)$, we deduce $\Mfr_{(M, E)}^{\mathrm{framed}}\subseteq H^1(M, \Gc^{(2)}_E)$. The reverse inclusion $H^1(M, \Gc^{(2)}_E)\subseteq \Mfr_{(M, E)}^{\mathrm{framed}}$ follows from the construction of the (sheaf of) groups $\Gc^{(2)}_E$ in \eqref{fkfugunefefmelkfe}. We omit the details here.
\qed

\begin{REM}
\emph{Comparing Proposition \ref{jf983f3jfoj3ojg93jgp} with Theorem \ref{dn8j83jp3k9k922}, the map $\Mfr_{(M, E)}^{\mathrm{framed}} \ra \Mfr_{(M, E)}$ is given by the quotient map $H^1(\Gc^{(2)}_E) \ra H^1(\Gc^{(2)}_E)/H^0(\mathscr Aut~\Ec)$. In this way, we can think of the action of $H^0(M, \mathscr Aut~\Ec)$ on $H^1(M, \Gc^{(2)}_E)$ as changing the frame.}
\end{REM}

We have so far been embroiled in set-theoretic aspects of the moduli problem. Onishchik in \cite[p. 68]{ONISHCLASS} addresses the moduli problem in \eqref{rkfj490fj4fp3okp3ofk3p9d039jd} by constructing an algebraic variety parametrising framed supermanifolds. Paraphrasing this result we have:

\begin{THM}\label{4jf93jf903jf3jfp3j}
Let
\[
\Tfr_{\Pi E}^{\geq2} := \bigoplus_{k\geq 2} \Tfr_{\Pi E}[k].
\]
Then there exists a connected, complex-analytic subvariety $\mathbb V_{(M, E)}\subseteq H^1(M, \Tfr_{\Pi E}^{\geq2})$ whose set of points map onto $\Mfr^{\mathrm{framed}}_{(M, E)}$. \qed
\end{THM}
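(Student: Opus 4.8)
The plan is to build the variety $\Vbb_{(M,E)}$ as a suitable zero-locus inside the affine space $H^1(M,\Tfr_{\Pi E}^{\geq2})$ cut out by the obstruction (cocycle) equations, and then to exhibit a bijection between its points and framed supermanifolds. The starting point is Construction \ref{opnciornvruo4}: a framed supermanifold modelled on $(M,E)$ is determined, up to the equivalence of Definition \ref{rjf930j09j3fjp3fjp3w}, by a trivialisation $\rho = \{\rho_{\Uc\Vc}\}$ whose degree-$k$ components (for $k\geq 2$) are Čech $1$-cochains valued in the graded pieces of $\Tfr_{\Pi E}$. Collecting these into a single object, $\rho$ minus its linear part is naturally a $1$-cochain valued in $\Tfr_{\Pi E}^{\geq 2} = \bigoplus_{k\geq2}\Tfr_{\Pi E}[k]$. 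First I would make precise the statement that the cocycle condition \eqref{090038083j} on $\rho$ is equivalent, after subtracting coboundaries (which correspond exactly to the framed-equivalence relation), to a system of polynomial (in fact quadratic, Maurer--Cartan--type) equations on the Čech cohomology class in $H^1(M,\Tfr_{\Pi E}^{\geq2})$; these equations are organized order-by-order, the order-$k$ equation asserting the vanishing in $H^2(M,\mathcal Q_E^{(k);\pm})$ of a polynomial built from the lower-order data, which is precisely the obstruction class of Theorem \ref{jnvjknknvkjnvk} / Proposition \ref{o3jdioj3ijp3}.

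The key steps, in order, are: (1) fix a good cover $\Ufr$ of $M$ (e.g.\ a Stein cover) so that Čech cohomology computes sheaf cohomology and so that every trivialisation of a framed supermanifold can be taken subordinate to $\Ufr$ after refinement, using the independence-of-cover remark in Construction \ref{opnciornvruo4}; (2) set up the total cochain $c(\rho)\in C^1(\Ufr,\Tfr_{\Pi E}^{\geq2})$ and expand the cocycle condition into its homogeneous components, producing the recursive obstruction equations; (3) observe that adding a $1$-cocycle $\al^{(k)}\in\Zcal^1(\Ufr,\mathcal Q_E^{(k);\pm})$ to $\rho$ changes the solution only within its cohomology class and that adding a coboundary realizes precisely the framed equivalence of Definition \ref{rjf930j09j3fjp3fjp3w} (this is the computation already carried out in \eqref{4903jf903jf093}--\eqref{39j03jf30jfp3} in the proof of Proposition \ref{o3jdioj3ijp3}); (4) define $\Vbb_{(M,E)}\subseteq H^1(M,\Tfr_{\Pi E}^{\geq2})$ as the common zero set of the resulting analytic equations, and check it is an analytic subvariety — the equations are polynomial in the cohomology classes because cup products and the connecting maps $\pt_*$ are; (5) construct the bijection: a point of $\Vbb_{(M,E)}$ gives, via Construction \ref{opnciornvruo4}, a well-defined framed supermanifold, and conversely Proposition \ref{jf983f3jfoj3ojg93jgp} (identifying $\Mfr^{\mathrm{framed}}_{(M,E)}$ with $H^1(M,\Gc^{(2)}_E)$, together with the graded quotients $\Gc^{(i)}_E/\Gc^{(i+1)}_E\cong\mathcal Q_E^{(i);\pm}$ of Proposition \ref{dm30j8fjf3o}) shows every framed supermanifold arises this way, uniquely up to equivalence; (6) deduce connectedness — the split model $\Pi E$ corresponds to $0\in\Vbb_{(M,E)}$, and I would argue that $\Vbb_{(M,E)}$ is connected either by exhibiting an explicit $\Cbb^\times$- (or $\Cbb$-) scaling action $\rho_{\Uc\Vc}^{(k)}\mapsto t^{k-1}\rho_{\Uc\Vc}^{(k)}$ that contracts every point to $0$ and respects the obstruction equations (each homogeneous of consistent weight), so $\Vbb_{(M,E)}$ is a cone and hence connected.

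The main obstacle I expect is step (2)/(4): showing that the higher-order obstruction equations genuinely descend to \emph{well-defined analytic} equations on the cohomology $H^1(M,\Tfr_{\Pi E}^{\geq2})$ rather than merely on the cochain level — i.e.\ that the $k$-th obstruction, a priori a $2$-cochain depending on a choice of $1$-cochain representative of the lower-order class, has a cohomology class depending only on the lower-order cohomology classes once the earlier equations are imposed. This is exactly the kind of bookkeeping done for $m=2$ in Lemma \ref{lkliioolkss} and Proposition \ref{o3jdioj3ijp3}, but here it must be carried out uniformly in all orders simultaneously, keeping track of the interaction between the even ($\mathcal Q_E^{(k);+}$) and odd ($\mathcal Q_E^{(k);-}$) strata and of the fact that the structure group is non-abelian (one is really solving a Maurer--Cartan equation in the graded pro-nilpotent sheaf of Lie algebras $\Tfr_{\Pi E}^{\geq2}$, or equivalently working with the sheaves $\Gc^{(i)}_E$ of Proposition \ref{dm30j8fjf3o}). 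The scaling argument in step (6) is the part I am most confident about; the point-count bijection in step (5) reduces cleanly to Proposition \ref{jf983f3jfoj3ojg93jgp} and Corollary \ref{dopmropvopejpv}. For the analytic-variety structure one can alternatively invoke Onishchik's own construction in \cite{ONISHCLASS}, so the real content to be supplied here is the identification of the defining equations with the obstruction theory developed in Sections \ref{4}--\ref{5}, which is what makes the filtration conjecture (Conjecture \ref{dnohf3fh839hf3h}, to follow) natural.
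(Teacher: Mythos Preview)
The paper does not prove this theorem. Look again at the sentence preceding it: ``Onishchik in \cite[p.~68]{ONISHCLASS} addresses the moduli problem \ldots\ by constructing an algebraic variety parametrising framed supermanifolds. Paraphrasing this result we have:''---and the statement ends with a bare \qed. Theorem~\ref{4jf93jf903jf3jfp3j} is quoted from \cite{ONISHCLASS}; no argument is supplied in the paper itself.

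So there is nothing to compare your proposal against. What you have written is a plausible outline of how one might reconstruct Onishchik's result from the obstruction-theoretic machinery of Sections~\ref{4}--\ref{5}, and indeed you acknowledge this at the end when you say one can ``alternatively invoke Onishchik's own construction in \cite{ONISHCLASS}''. That is not an alternative here---it \emph{is} the paper's proof. Your steps (1)--(6) are broadly in the spirit of how such Maurer--Cartan-type moduli varieties are built, and the scaling/cone argument for connectedness in step~(6) is the standard one. But the concern you flag in step~(2)/(4)---that the higher obstruction equations descend to well-defined analytic equations on $H^1$ rather than just on cochains---is real and is exactly the technical content that \cite{ONISHCLASS} supplies and that the present paper does not attempt to reproduce. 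If you want to actually carry this out, you should consult Onishchik's construction directly rather than try to reverse-engineer it from the order-by-order calculations here.
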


As briefly mentioned at the outset of this section, a resolution of the moduli problem consists of finding a geometric object parametrising the given set of isomorphism classes. Where supermanifolds are concerned, the results in \cite{VAIN} suggest that we might expect the corresponding moduli space to be a space in the sense of supergeometry, i.e., some sort of superspace. We will refer to such a space as a \emph{supermoduli space}.\footnote{At the beginning of this article, by way of motivation, the moduli space of super Riemann surfaces was mentioned. This object is an example of a supermoduli space.} For the super-geometric analogue of varieties and schemes see  \cite{LEITSPEC, RABGLOB, KAPSUSY, BETTVAR}. For framed supermanifolds $\Mfr_{(M, E)}^{\mathrm{framed}}$ it is tempting to conclude from Theorem \ref{4jf93jf903jf3jfp3j} that the corresponding supermoduli space is the variety $\mathbb V_{(M, E)}$. However, it is not clear as to how this variety can be interpreted as an object in supergeometry. That is, as a `super'-variety or `super'-scheme. One possible reason for this might be due to the distinction between \emph{classification} and \emph{deformation}, mentioned briefly in footnote \eqref{ijfoeho8f4ho}. Indeed, it is remarked in the introductory section in \cite{ONISHCLASS} that $\mathbb V_{(M, E)}$ is related to classifications, not deformations. Supermoduli spaces however should be related to deformations. As such, using the language in \cite{HARTDEF}, we conjecture the following:

\begin{CONJ}\label{fkrpokvpokvp4k}
There exists a coarse supermoduli space $\mathscr V_{(M, E)}$ representing\footnote{c.f., Remark \ref{rfh784gf87h7fh398fh3}.} the moduli problem $\Mfr_{(M, E)}^{\mathrm{framed}}$ with reduced space $(\mathscr V_{(M, E)})_{\mathrm{red}} = \mathbb V_{(M, E)}$. 
\end{CONJ}

In what follows we turn to another aspect of the moduli problem $\mathfrak M_{(M, E)}^{\mathrm{framed}}$, motivated by Proposition \ref{jf983f3jfoj3ojg93jgp}.

\subsection{The Moduli Problem with Level Structures}
We wish to formulate here a variant of the moduli problem in \eqref{rkfj490fj4fp3okp3ofk3p9d039jd} and submit a conjecture regarding the structure of the moduli variety $\mathbb V_{(M, E)}$. The variant we have in mind is motivated, in a sense, by the entire theme underlying this article---that of thickenings and filtrations. Recall that any complex supermanifold comes equipped with thickenings that fit together to define a filtration, depicted in \eqref{bcbfjhbvrbjh}. We suspect that a similar structure will manifest itself on the moduli variety $\mathbb V_{(M, E)}$ and in this section we will clarify our suspicions further. We begin with the following definitions:

\begin{DEF}\label{nfiuh78fg47fh93hf038}
\emph{A supermanifold is said to be be \emph{$j$-trivialisable} if it admits a level-$(j+1)$ trivialisation $\vp_j$. A supermanifold equipped with a level-$(j+1)$ trivialisation $\vp_j$ is said to be \emph{$j$-trivialised} and is denoted by the pair $(\Xfr_{(M, E)}, \vp_j)$.\footnote{c.f., Definition \ref{rfijf903jf903kf}.}
}
\end{DEF}

A $j$-trivialised supermanifold $\Xfr_{(M, E)}$ is filtered as follows:
\begin{align}
M \subset \Pi E^{(1)} \subset \Pi E^{(2)}\subset \cdots\subset \Pi E^{(j)}\subset \Xfr_{(M, E)}^{(j+1)}\subset \cdots\subset \Xfr_{(M, E)}.
\label{fiorjfiofojfp3}
\end{align}
Given a $j$-trivialised supermanifold $(\Xfr_{(M, E)}, \vp_j)$, an isomorphism $\Xfr_{(M, E)}\cong \Xfr_{(M, E)}^\p$ certainly need not preserve the level of the trivialisation $\vp_j$, as Illustration \ref{kopdj389hf883jfpo3} shows. As such we consider the following notion of equivalence:

\begin{DEF}\label{fiojfoj90fj93jpf3}\emph{Fix $j$ and let $(\Xfr_{(M, E)}, \vp_i)$ be an $i$-trivialised supermanifold for $i\geq j$. The \emph{$j$-equivalence class} of $(\Xfr_{(M, E)}, \vp_i)$ consists of level-$i^\p$ trivialised supermanifolds $(\Xfr^\p_{(M, E)}, \vp^\p_{i^\p})$ such that:
\begin{enumerate}[(i)]
	\item $i^\p\geq j$ and;
	\item $\Xfr_{(M, E)} \cong \Xfr^\p_{(M, E)}$.
\end{enumerate}
}
\end{DEF}

We present now the following variant of the moduli problem in \eqref{rkfj490fj4fp3okp3ofk3p9d039jd}:
\begin{align}
\Mfr_{(M, E)}^{(j);\mathrm{triv.}} = 
\left\{\mbox{level-$i$ trivialised supermanifolds with $i\geq j$}\right\}/\sim
\label{rkfj490fj3r3f3ff334fp3okp3ofk3p9d039jd}
\end{align}
the above equivalence being that in the sense of Definition \ref{fiojfoj90fj93jpf3}.

\begin{LEM}\label{fgdfvbhjbvevy3r4}
$\Mfr_{(M, E)}^{(1);\mathrm{triv.}} = \Mfr_{(M, E)}^{\mathrm{framed}}$.
\end{LEM}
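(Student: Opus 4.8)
The plan is to unwind the two definitions and observe that they literally describe the same set, once one identifies a level-$2$ trivialisation with a framing. First I would recall that $\Mfr_{(M, E)}^{(1);\mathrm{triv.}}$ consists of level-$i$ trivialised supermanifolds with $i\geq 1$, taken up to $1$-equivalence in the sense of Definition \ref{fiojfoj90fj93jpf3}; here a level-$1$ trivialisation is, by Definition \ref{rfijf903jf903kf}, simply an element of $H^1(M, \Gc^{(2)}_E)$, i.e.\ a trivialisation in the sense of Definition \ref{rfijf903jf903kf}. On the other side, by Proposition \ref{jf983f3jfoj3ojg93jgp} the set $\Mfr_{(M, E)}^{\mathrm{framed}}$ is in one-to-one correspondence with $H^1(M, \Gc^{(2)}_E)$ as well, and the correspondence there sends a framed supermanifold $(\Xfr_{(M, E)}, \al)$ to the trivialisation it defines in $H^1(M, \Gc^{(2)}_E)$. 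So the strategy is to show both sides compute the same quotient of $H^1(M, \Gc^{(2)}_E)$.

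The key steps, in order: (1) Observe that a level-$i$ trivialised supermanifold $(\Xfr_{(M, E)}, \vp_i)$ with $i \geq 1$ in particular carries a level-$2$ trivialisation (the image of $\vp_i$ under the map $H^1(\Gc^{(i)}_E) \to H^1(\Gc^{(2)}_E)$ from \eqref{8393hf93j3fojf83}), hence — via the identification in the proof of Proposition \ref{jf983f3jfoj3ojg93jgp}, which uses that a trivialisation in $H^1(\Gc^{(2)}_E)$ amounts to a choice of framing modulo $\Tfr_{(M,E)}[0]$ by Lemma \ref{fh389fh98hfoij83} — determines a framing $\al$ on $\Xfr_{(M, E)}$. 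This gives a map $\Mfr_{(M, E)}^{(1);\mathrm{triv.}} \to \Mfr_{(M, E)}^{\mathrm{framed}}$. (2) Check this is well-defined on $1$-equivalence classes: two $1$-equivalent trivialised supermanifolds by Definition \ref{fiojfoj90fj93jpf3} are isomorphic as supermanifolds, and the proof of Proposition \ref{jf983f3jfoj3ojg93jgp} already shows that such an isomorphism — being trivial modulo $\Jc^2$, which is automatic since a level-$2$ trivialisation pins down $\Oc_M/\Jc^2$ — preserves the framing up to the $H^0(M, \mathscr Aut~\Ec)$-action, i.e.\ gives the same framed equivalence class. (3) Conversely, a framed supermanifold $(\Xfr_{(M, E)}, \al)$ is precisely a supermanifold with a chosen element of $H^1(\Gc^{(2)}_E)$ lifting it, i.e.\ a level-$2$ trivialised supermanifold, which is in particular a level-$i$ trivialised supermanifold with $i\geq 1$; and framed equivalence (Definition \ref{rjf930j09j3fjp3fjp3w}) is exactly $1$-equivalence in the sense of Definition \ref{fiojfoj90fj93jpf3} restricted to $i = i' = 1$. (4) Conclude the two maps are mutually inverse bijections, so the sets coincide.

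The main obstacle I anticipate is step (2)–(3): making precise the claim that "level-$2$ trivialisation" and "framing" are interchangeable data, and that the equivalence relations match up exactly. The subtle point is that a level-$i$ trivialised supermanifold with $i > 1$ should not be regarded as carrying strictly more data than a framed one for the purposes of \emph{this} moduli problem — the quotient by $1$-equivalence forgets the excess, collapsing all the $\vp_i$ with $i \geq 1$ that lift a given framing. I would handle this by citing the proof of Proposition \ref{jf983f3jfoj3ojg93jgp} (which establishes $\Mfr_{(M, E)}^{\mathrm{framed}} \cong H^1(M, \Gc^{(2)}_E)$ via "frame $\leftrightarrow$ element of $H^1(\Gc^{(2)}_E)$") and Lemma \ref{fh389fh98hfoij83}, and noting that the map $H^1(\Gc^{(i)}_E) \to H^1(\Gc^{(2)}_E)$ is the canonical bookkeeping device that sends a level-$(i+1)$ trivialisation to its underlying framing. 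Everything else is a routine diagram chase through Definitions \ref{rfijf903jf903kf}, \ref{rjf930j09j3fjp3fjp3w}, \ref{fiojfoj90fj93jpf3} and \ref{nfiuh78fg47fh93hf038}. $\qed$
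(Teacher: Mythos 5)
Your proposal correctly identifies the overall shape of the statement — match up the two notions of equivalence and observe that a trivialisation amounts to a framing — but there is a genuine gap in step (1), and it is precisely the gap the paper's own proof fills. You build the map $\Mfr_{(M,E)}^{(1);\mathrm{triv.}} \to \Mfr_{(M,E)}^{\mathrm{framed}}$ by appealing to "the identification in the proof of Proposition \ref{jf983f3jfoj3ojg93jgp}". But look at that proof again: it is a proof \emph{sketch} that only establishes the inclusion $\Mfr_{(M,E)}^{\mathrm{framed}} \subseteq H^1(M,\Gc^{(2)}_E)$ — i.e.\ the direction \emph{framing $\mapsto$ trivialisation} — and then explicitly says "We do not deliberate on the reverse inclusion here." Your argument needs exactly that reverse direction, namely a concrete recipe that takes a level-$2$ trivialisation (a $1$-cocycle of local automorphisms / transition data) and produces a framing $\al : \Tfr_{(M,E)} \twoheadrightarrow \Ec^\vee$ with $\ker\al = \Tfr_{(M,E)}[0]$. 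Citing Lemma \ref{fh389fh98hfoij83} does not help either: it only relates an already-given framing to the coframe $\vp$; it does not tell you how to extract a framing from cocycle data.

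The paper's own proof of Lemma \ref{fgdfvbhjbvevy3r4} supplies exactly the missing construction, and it is where the substance of the lemma lives: working in a trivialisation $(\Ufr,\rho)$, it writes a vector field on a chart as $X = f^\mu\,\pt/\pt x^\mu + g_a\,\pt/\pt\q_a$, declares $\al(X) := (g_a \bmod \Jc)\,\pt/\pt\q_a$, and verifies by a transformation-rule computation (using only the leading, degree-one term $\zeta^b_{UV,a}\q_b$ of the odd transition functions) that $\al(X)$ glues to a section of $\Ec^\vee$, hence that $\al$ is a well-defined framing, independent of the trivialisation up to common refinement. Without this, or some substitute for it, your map in step (1) is undefined and the argument is circular — you are using the full bijection of Proposition \ref{jf983f3jfoj3ojg93jgp} to prove a lemma that is itself the ingredient needed to complete that bijection. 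To repair the proposal, replace the appeal to Proposition \ref{jf983f3jfoj3ojg93jgp} and Lemma \ref{fh389fh98hfoij83} in step (1) with an explicit construction of $\al$ from the transition data, as the paper does. Steps (2)–(4) — matching $1$-equivalence with framed equivalence and checking well-definedness — are fine and essentially coincide with the paper's.
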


\begin{proof}
A 1-trivialisation for a supermanifold $\Xfr_{(M, E)}$ is a trivialisation $(\Ufr, \rho)$ described in Construction \ref{knkniucbruibckj}, where $\rho$ is as in \eqref{uicuibviuv78v3} and \eqref{porviorhg894h89}. If $(x^\mu, \q_a)$ denote coordinates on a patch $\Uc$ a vector field $X$ can be written
\[
X = f^\mu\frac{\pt}{\pt x^\mu} +  g_a\frac{\pt}{\pt \q_a}
\]
where the indices are implicitly summed. Now consider a map $\al$ sending $X$ to $(g_a \mod \Jc)~\pt/\pt\q_a$ and observe that on the intersection $\Uc\cap\Vc$,
\begin{align*}
g_a\frac{\pt}{\pt \q_a} 
&= g_a \circ f_{UV} \left(\frac{\pt \rho^\nu_{\Uc\Vc}}{\pt \q_a}\frac{\pt}{\pt y^\nu}
+
\frac{\pt \rho_{\Uc\Vc;b}}{\pt \q_a}\frac{\pt}{\pt \eta_b}
\right)
\stackrel{\al}{\longmapsto}
\left(g_a \circ f_{UV} \mod \Jc\right) \zeta_{UV, b}^a\frac{\pt}{\pt \eta_b}
\end{align*}
where $(y^\nu,\eta_b)$ denote coordinates on $\Vc$. Hence 
\[
(g_a\mod \Jc)~\frac{\pt}{\pt \q_a} 
=
\left(g_a \circ f_{UV} \mod \Jc\right) \zeta_{UV, b}^a\frac{\pt}{\pt \eta_b}
\]
and so $\al(X)$ is a section of $\Ec^\vee$. By construction $\al: \Tfr_{(M, E)}\ra \Ec^\vee$ will be a framing. Note that the framing $\al$ depends on the trivialisation $(\Ufr, \rho)$ up to common refinement. Now by Definition \ref{fiojfoj90fj93jpf3}, two 1-trivialised supermanifolds $\Xfr_{(M, E)}$ and $\Xfr^\p_{(M, E)}$ will be equivalent if and only if there exists an isomorphism $\lam : \Xfr_{(M, E)}\stackrel{\cong}{\ra} \Xfr^\p_{(M, E)}$ that restricts to the identity along $\Pi E^{(1)}$. Comparing with Definition \ref{rjf930j09j3fjp3fjp3w}(i) it is not too hard to see that $\lam$ will also preserve the framing and thereby be an equivalence of framed supermanifolds. The converse assertion is straightforward.
\end{proof}

Following on from Lemma \ref{fgdfvbhjbvevy3r4} above we have the following generalisation of Proposition \ref{jf983f3jfoj3ojg93jgp}.

\begin{PROP}\label{dop3oiehoijip}
There exists a bijection: $\Mfr_{(M, E)}^{(j);\mathrm{triv.}}\cong H^1(M, \Gc^{(j+1)}_E)$.
\end{PROP}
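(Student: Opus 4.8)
The plan is to mimic the proof of Proposition \ref{jf983f3jfoj3ojg93jgp}, replacing the sheaf $\Gc^{(2)}_E$ by $\Gc^{(j+1)}_E$ and framings by level-$(j+1)$ trivialisations, and to use the identification of level structures with lifts in non-abelian \v{C}ech cohomology. Recall from Definition \ref{rfijf903jf903kf} that an element of $H^1(M, \Gc^{(j+1)}_E)$ is precisely what we have been calling a level-$(j+1)$ trivialisation, and from Definition \ref{nfiuh78fg47fh93hf038} that a supermanifold $\Xfr_{(M,E)}$ is $j$-trivialised precisely when it is equipped with such a structure $\vp_j$. So the two sides of the claimed bijection are, essentially by definition, built from the same raw data; what must be checked is that the equivalence relation in Definition \ref{fiojfoj90fj93jpf3} on the left matches the equivalence of non-abelian \v{C}ech $1$-cocycles on the right.

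First I would construct the map $\Mfr_{(M, E)}^{(j);\mathrm{triv.}} \ra H^1(M, \Gc^{(j+1)}_E)$. Given an $i$-trivialised supermanifold $(\Xfr_{(M,E)}, \vp_i)$ with $i \geq j$, pick a trivialisation $(\Ufr, \rho)$ adapted to $\vp_i$ (as in Construction \ref{knkniucbruibckj}); the transition data $\{\rho_{\Uc\Vc}\}$, read modulo $\Jc^{j+1}$, defines a $1$-cocycle valued in $\Gc^{(j+1)}_E$ because the level-$(i+1)$ structure with $i\geq j$ forces the relevant gluing automorphisms to agree with the split-model ones modulo $\Jc^{j+1}$. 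Its class in $H^1(M, \Gc^{(j+1)}_E)$ is independent of the chosen cover (pass to a common refinement) and of the chosen compatible $(\Ufr,\rho)$. The point where some care is needed is well-definedness on $j$-equivalence classes: if $\lam : \Xfr_{(M,E)} \stackrel{\cong}{\ra} \Xfr^\p_{(M,E)}$ is an isomorphism and both carry level-$\geq j$ trivialisations, then $\lam$ need not respect the levels (Illustration \ref{kopdj389hf883jfpo3}), but modulo $\Jc^{j+1}$ the $0$-cochain $\{\lam_\Uc\}$ representing $\lam$ lands in $\Gc^{(j+1)}_E$ — here one should invoke Proposition \ref{dm30j8fjf3o}, which controls $\Gc_E^{(i)}/\Gc_E^{(i+1)}$ and in particular shows that an automorphism trivial modulo $\Jc^{j+1}$ represents the trivial element after reducing — so the two cocycles are cohomologous in $\Gc^{(j+1)}_E$. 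This is the analogue of the paragraph in the proof of Proposition \ref{jf983f3jfoj3ojg93jgp} showing that $H^0(M, \mathscr{A}ut~\Ec)$ separates equivalence classes of framed supermanifolds; the key diagram-chase is that the reduction $\Gc^{(i)}_E \hookrightarrow \Gc^{(j+1)}_E$ (for $i \geq j$, noting $\Gc^{(i)}_E \subseteq \Gc^{(j+1)}_E$ when $i \geq j+1$) is compatible with the cocycle constructed from any trivialisation of the appropriate level.

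For the reverse map, given a class in $H^1(M, \Gc^{(j+1)}_E)$ represented by a cocycle $\{\rho_{\Uc\Vc}\}$, extend it arbitrarily to gluing data for a genuine supermanifold (this is possible because $\Gc^{(j+1)}_E \subseteq \mathscr{A}ut~\wedge^\bt\Ec$ and by Construction \ref{knkniucbruibckj} any such cocycle glues up to a supermanifold modelled on $(M,E)$, after possibly adjusting higher-order terms — the obstruction to an honest lift to the full automorphism sheaf vanishes since any extension of the degree-$\leq j$ data to a full set of transition functions satisfying the cocycle condition exists once one allows free choice of the components of degree $>j$, up to refinement); this supermanifold is tautologically $j$-trivialised by the chosen cocycle. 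One then checks that this assignment is inverse to the first map on the nose at the level of cocycles, and that changing the chosen extension, or the cocycle within its class, changes the resulting $j$-trivialised supermanifold only within its $j$-equivalence class — again using Proposition \ref{dm30j8fjf3o} to pass between cocycle-level equalities and cohomological ones, exactly as in \eqref{39j03jf30jfp3}. Finally, Lemma \ref{fgdfvbhjbvevy3r4} together with Proposition \ref{jf983f3jfoj3ojg93jgp} is the base case $j=1$, which serves as a consistency check.

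The main obstacle I expect is the reverse direction: showing that \emph{every} class in $H^1(M, \Gc^{(j+1)}_E)$ actually arises from an honest supermanifold, i.e. that a $\Gc^{(j+1)}_E$-valued $1$-cocycle can always be completed to full supermanifold gluing data. In the proof of Proposition \ref{jf983f3jfoj3ojg93jgp} the authors explicitly decline to argue the reverse inclusion ("We do not deliberate on the reverse inclusion here"), so I anticipate that here too the honest statement is that there is at least an injection $\Mfr_{(M,E)}^{(j);\mathrm{triv.}} \hookrightarrow H^1(M,\Gc^{(j+1)}_E)$, with surjectivity either asserted without a complete proof or reduced to the $j=1$ case via Lemma \ref{fgdfvbhjbvevy3r4} and the filtration $\Gc^{(2)}_E \supseteq \Gc^{(3)}_E \supseteq \cdots$. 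If one does want the full bijection, the cleanest route is to observe that a class in $H^1(M,\Gc^{(j+1)}_E)$ maps to a class in $H^1(M,\Gc^{(2)}_E)$, which by Theorem \ref{dn8j83jp3k9k922} (or Proposition \ref{jf983f3jfoj3ojg93jgp}) is realised by a genuine framed supermanifold, and the original lift to $\Gc^{(j+1)}_E$ equips it with the required level structure — so surjectivity is inherited from the $j=1$ case plus the lifting data, and that is the argument I would write.
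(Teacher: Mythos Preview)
Your proposal is far more elaborate than the paper's argument, which is essentially two sentences: the map $\Mfr_{(M,E)}^{(j);\mathrm{triv.}} \to H^1(M,\Gc^{(j+1)}_E)$ is given by $(\Xfr_{(M,E)},\vp_i) \mapsto \vp_i$, which makes sense directly from Definition~\ref{nfiuh78fg47fh93hf038} (a level-$(j+1)$ trivialisation \emph{is} by Definition~\ref{rfijf903jf903kf} an element of the relevant $H^1$), and bijectivity is asserted to be ``precisely what is captured by Definition~\ref{fiojfoj90fj93jpf3}''. In other words, the paper has rigged the definitions so that the proposition is tautological: the datum $\vp_i$ already lives in cohomology, no \v Cech representatives need be chosen, and the $j$-equivalence relation is declared to be exactly what makes the forgetful map $(\Xfr,\vp_i)\mapsto\vp_i$ a bijection. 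Your attempt to reconstruct everything from scratch at the cocycle level is unnecessary against this backdrop.

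There is also a conceptual slip in your construction. You write that the transition data $\{\rho_{\Uc\Vc}\}$, ``read modulo $\Jc^{j+1}$'', gives a $1$-cocycle valued in $\Gc^{(j+1)}_E$. But $\Gc^{(j+1)}_E$ is the sheaf of automorphisms of $\wedge^\bullet\Ec$ that are the \emph{identity} modulo $\Jc^{j+1}$; it is not the sheaf of automorphisms-mod-$\Jc^{j+1}$. What actually lands in $\Gc^{(j+1)}_E$ is the comparison of $\rho_{\Uc\Vc}$ with the split-model transition data, and only when the supermanifold is $j$-trivialisable. This is exactly the content of the \emph{given} level structure $\vp_i$---it is already that cohomology class---so you should not be trying to build it from $\rho$. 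Relatedly, your worry about surjectivity and your proposed reduction to Theorem~\ref{dn8j83jp3k9k922} are overthought: in the paper's setup, any class in $H^1(M,\Gc^{(j+1)}_E)$ is by definition a level trivialisation, and (via the inclusion $\Gc^{(j+1)}_E\hookrightarrow\mathscr{A}ut\,\wedge^\bullet\Ec$ and Construction~\ref{knkniucbruibckj}) determines a supermanifold carrying it, so the inverse is immediate rather than something requiring a separate argument.
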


\begin{proof}
The existence of map $\Mfr_{(M, E)}^{(j);\mathrm{triv.}}\ra H^1(M, \Gc^{(j+1)}_E)$ follows immediately from Definition \ref{nfiuh78fg47fh93hf038}. It is given by sending $(\Xfr_{(M, E)}, \vp_i) \mapsto \vp_i$, where $i\geq j$. That this map is a bijection is precisely what is captured by Definition \ref{fiojfoj90fj93jpf3}. 
\end{proof}

We tautologically have maps  $\Mfr_{(M, E)}^{(j);\mathrm{triv.}}\ra  \Mfr_{(M, E)}^{(j-1);\mathrm{triv.}}$ and forgetting the trivialisation corresponds to the map $\Mfr_{(M, E)}^{(j)}\ra \Mfr_{(M, E)}$. Evidently we have a commutative diagram:
\begin{align}
\xymatrix{
\ar[drrrr]\Mfr_{(M, E)}^{(j);\mathrm{triv.}} \ar[rr]& & \ar[drr]\Mfr_{(M, E)}^{(j-1);\mathrm{triv.}}\ar[r] & \cdots \ar[r]  & \Mfr_{(M, E)}^{\mathrm{framed}}\ar[d]
\\
&&&&\Mfr_{(M, E)}.
}
\label{4di3j8j30j3pkp3}
\end{align}
By Proposition \ref{dop3oiehoijip} we can compare the horizontal maps in \eqref{4di3j8j30j3pkp3} with the maps on 1-cohomology induced by the normal filtration of the sheaf of groups $\Gc_E^{(2)}$. Now as observed  in Illustration \ref{kopdj389hf883jfpo3}, the horizontal maps in \eqref{4di3j8j30j3pkp3} need not be injective and so the moduli variety parametrising them, should it exist, need not fit together to define a filtration of $\mathbb V_{(M, E)}$. With this observation and Theorem \ref{4jf93jf903jf3jfp3j} we conjecture:

\begin{CONJ}\label{dnohf3fh839hf3h}
Fix a complex manifold $M$ and holomorphic vector bundle $E\ra M$ of rank $q$ and set 
\[
\Tfr_{\Pi E}^{\geq j} := \bigoplus_{k\geq j} \Tfr_{\Pi E}[k].
\]
For each $j$ there exists a moduli problem $\widetilde \Mfr_{(M, E)}^{(j)}\subseteq \Mfr_{(M, E)}^{(j);\mathrm{triv.}}$ and a connected, algebraic subvariety $\mathbb V^{(j)}_{(M, E)}\subset  H^1(M,\Tfr_{\Pi E}^{\geq j})$ such that:
\begin{enumerate}[(i)]
	\item $\widetilde \Mfr_{(M, E)}^{(1)} = \Mfr_{(M, E)}^{\mathrm{framed}}$ and $\mathbb V^{(1)}_{(M, E)} = \mathbb V_{(M, E)}$;
	\item the set of points of $\mathbb V^{(j)}_{(M, E)}$ map onto $\widetilde \Mfr_{(M, E)}^{(j)}$ and;
	\item there exists a descending filtration 
	\[
	\mathbb V_{(M, E)} = \mathbb V_{(M, E)}^{(1)} \supset \mathbb V_{(M, E)}^{(2)}\supset \mathbb V_{(M, E)}^{(3)}\supset\cdots\supset \mathbb V_{(M, E)}^{(q)}.
	\]
\end{enumerate}
\end{CONJ}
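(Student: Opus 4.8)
The plan is to build the spaces $\mathbb V^{(j)}_{(M, E)}$ by transporting Onishchik's construction behind Theorem \ref{4jf93jf903jf3jfp3j} from the sheaf of groups $\Gc^{(2)}_E$ to $\Gc^{(j+1)}_E$, and to obtain the descending filtration in (iii) from the chain of normal subgroup inclusions $\cdots\subset \Gc^{(j+2)}_E\subset \Gc^{(j+1)}_E\subset\cdots\subset \Gc^{(2)}_E$. The first task is to pin down the sub-moduli problems $\widetilde\Mfr^{(j)}_{(M, E)}$. By Proposition \ref{dop3oiehoijip} we have $\Mfr^{(j);\mathrm{triv.}}_{(M, E)}\cong H^1(M, \Gc^{(j+1)}_E)$, and by Lemma \ref{fgdfvbhjbvevy3r4} together with Proposition \ref{jf983f3jfoj3ojg93jgp} the case $j=1$ reads $\Mfr^{(1);\mathrm{triv.}}_{(M, E)}\cong H^1(M, \Gc^{(2)}_E)\cong \Mfr^{\mathrm{framed}}_{(M, E)}$, which forces $\widetilde\Mfr^{(1)}_{(M, E)}=\Mfr^{(1);\mathrm{triv.}}_{(M, E)}$. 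For $j\geq 2$ I would take $\widetilde\Mfr^{(j)}_{(M, E)}$ to be a coherently chosen family of slices of the tautological tower \eqref{4di3j8j30j3pkp3}: a subset of $\Mfr^{(j);\mathrm{triv.}}_{(M, E)}$ mapping injectively, and into $\widetilde\Mfr^{(j-1)}_{(M, E)}$, under the tautological map $\Mfr^{(j);\mathrm{triv.}}_{(M, E)}\ra\Mfr^{(j-1);\mathrm{triv.}}_{(M, E)}$, with the choices normalised so that the split model (the base point) always lies in $\widetilde\Mfr^{(j)}_{(M, E)}$, and so that $\widetilde\Mfr^{(q)}_{(M, E)}$ is the single point $\{\Pi E\}$ — the latter being forced by $\Gc^{(q+1)}_E=\{1\}$ in Proposition \ref{dm30j8fjf3o}.

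Next comes the construction of $\mathbb V^{(j)}_{(M, E)}$. Onishchik's characteristic map attaches to a class $\vp_j\in H^1(M, \Gc^{(j+1)}_E)$ the ordered list of its associated graded components for the tower $\Gc^{(j+1)}_E\supset \Gc^{(j+2)}_E\supset\cdots$, which by Proposition \ref{dm30j8fjf3o} are \v Cech $1$-cochains valued in the sheaves $\mathcal Q^{(k);\pm}_E$, $k\geq j+1$; using Lemma \ref{hjbbjhhveyjj} and the commuting square of Proposition \ref{dmopifoijfojfo3} these repackage into a $1$-cocycle valued in $\Tfr_{\Pi E}^{\geq j}$ (the index being the one normalised by $\mathbb V^{(1)}_{(M, E)}=\mathbb V_{(M, E)}$ in Theorem \ref{4jf93jf903jf3jfp3j}). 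The cocycle condition on $\vp_j$ translates, order by order, into a hierarchy of quadratic Maurer--Cartan-type relations among the graded components; their common zero locus inside $H^1(M, \Tfr_{\Pi E}^{\geq j})$ is a complex-analytic subvariety (the defining equations being polynomial, as they come from a bilinear gluing identity), and it is the image of $\widetilde\Mfr^{(j)}_{(M, E)}$. It is connected because the relations are weighted homogeneous, the degree-$k$ component carrying weight $k$, so the $\Cbb^{\times}$-action rescaling each degree-$k$ \v Cech cochain by $t^k$ preserves the locus and contracts it to the class of the split model. Item (ii) — that the points of $\mathbb V^{(j)}_{(M, E)}$ are in bijection with $\widetilde\Mfr^{(j)}_{(M, E)}$ — is then proved by repeating, with $\Gc^{(j+1)}_E$ in place of $\Gc^{(2)}_E$, the bijection argument of \cite{ONISHCLASS} for $j=1$; here the restriction from $\Mfr^{(j);\mathrm{triv.}}_{(M, E)}$ to the slice $\widetilde\Mfr^{(j)}_{(M, E)}$ is exactly what makes the correspondence injective, since on the full set the characteristic map identifies a supermanifold with all of its lifts, exotic lifts included (Illustration \ref{kopdj389hf883jfpo3}).

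Finally, for the filtration (iii): the inclusion $\Gc^{(j+2)}_E\hookrightarrow \Gc^{(j+1)}_E$ is compatible, under the characteristic maps, with the map $H^1(M, \Tfr_{\Pi E}^{\geq j+1})\ra H^1(M, \Tfr_{\Pi E}^{\geq j})$ induced by $\Tfr_{\Pi E}^{\geq j+1}\hookrightarrow \Tfr_{\Pi E}^{\geq j}$, so $\mathbb V^{(j+1)}_{(M, E)}$ maps into $\mathbb V^{(j)}_{(M, E)}$; to read this as a genuine containment one needs this map to be \emph{injective} on $\mathbb V^{(j+1)}_{(M, E)}$. This is the main obstacle, and it is precisely why the plain tautological tower \eqref{4di3j8j30j3pkp3} does not already yield a filtration of $\mathbb V_{(M, E)}$: the map $H^1(M, \Gc^{(j+2)}_E)\ra H^1(M, \Gc^{(j+1)}_E)$ is not injective, its failure being measured by exotic lifts. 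Thus the heart of the argument is to make the choice of slices $\widetilde\Mfr^{(j)}_{(M, E)}$ in the first step \emph{coherent enough} that the induced maps $\mathbb V^{(j+1)}_{(M, E)}\ra\mathbb V^{(j)}_{(M, E)}$ become injective, while still preserving (i) and (ii); granting this, $\mathbb V_{(M, E)}=\mathbb V^{(1)}_{(M, E)}\supset \mathbb V^{(2)}_{(M, E)}\supset\cdots\supset \mathbb V^{(q)}_{(M, E)}=\{0\}$ follows, the last equality again from $\Gc^{(q+1)}_E=\{1\}$, and in particular $\mathbb V^{(q)}_{(M, E)}\supset 0$. I expect the algebraicity and connectedness of each $\mathbb V^{(j)}_{(M, E)}$ to go through by Onishchik's methods essentially verbatim; it is this coherence-and-injectivity bookkeeping for the $\widetilde\Mfr^{(j)}_{(M, E)}$ — already flagged in the discussion preceding the conjecture — that I expect to require the real work.
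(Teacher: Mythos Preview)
The statement you are attempting to prove is labelled \textbf{Conjecture} in the paper, and the paper provides no proof of it whatsoever. Immediately before stating it the authors write that the horizontal maps in \eqref{4di3j8j30j3pkp3} ``need not be injective and so the moduli variety parametrising them, should it exist, need not fit together to define a filtration of $\mathbb V_{(M, E)}$''; they then explicitly say ``we conjecture'' and, just after, ``we suspect that the moduli variety of complex supermanifolds also will'' admit such a filtration. Earlier in the section they announce that they ``will submit a claim regarding this structure but will not attempt to study it further here.'' So there is no paper proof to compare your proposal against.

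Your outline is a plausible strategy and, to your credit, you correctly isolate the same obstruction the paper flags: the failure of injectivity of $H^1(M,\Gc^{(j+2)}_E)\to H^1(M,\Gc^{(j+1)}_E)$ caused by exotic lifts (Illustration \ref{kopdj389hf883jfpo3}), and hence the need for ``coherent'' slices $\widetilde\Mfr^{(j)}_{(M,E)}$. But your proposal does not actually resolve this; you say yourself that this bookkeeping is where you ``expect to require the real work.'' That is precisely the gap the paper leaves open by calling the statement a conjecture. In short: your sketch is not wrong as a plan, but it is not a proof, and the paper does not supply one either.
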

~\\

\subsection{The Moduli Problem for Thickenings}
In Theorem \ref{4jf93jf903jf3jfp3j} it is the first-cohomology group of the sheaf $\Tfr_{\Pi E}^{\geq2}$ which is relevant. Note however that the obstruction spaces in \eqref{49d93j8fj3fj3p} incorporate the sheaves $\mathcal Q_E^{(j);\pm}$ and, while they are related to $\Tfr_{\Pi E}[j]$ via the short-exact sequences in \eqref{oinriobuoioie}, the sheaves $\mathcal Q_E^{(j);\pm}$ themselves contain important information (see e.g., (2) following \eqref{miojfioejfoijf039}). The first cohomology group of $\mathcal Q_E^{(j);\pm}$ comprises supermanifolds in addition to pseudo-supermanifolds and obstructed thickenings and so we suspect that it will house a variety representing, in a sense, a larger moduli problem than framed supermanifolds in \eqref{rkfj490fj4fp3okp3ofk3p9d039jd}. We will concentrate here on the moduli problem for thickenings directly and refrain from making any statements about the moduli variety itself. 
\\\\
Our intent in this section is to show that the decomposition in \eqref{49d93j8fj3fj3p} holds at the level of moduli, i.e., up to isomorphism. This leads to the following notion of a morphism of thickenings, generalising Definition \ref{equivextensusythick}. 

\begin{DEF}\label{303hfjeipjej3fj}
\emph{Let $\Xfr_{(M, E)}^{(k)} = (M, \Oc_M^{(k)})$ and $\Xfr_{(M, E)}^{(l)\p}= (M, \Oc_M^{(l)\p})$ be thickenings of a given pair $(M, E)$. A \emph{morphism} $\lam: \Xfr_{(M, E)}^{(k)}\ra \Xfr_{(M, E)}^{(l)\p}$ is defined to be a morphism of locally ringed spaces that:
\begin{enumerate}[(i)]
	\item preserves the $\Zbb_2$-grading, i.e., $\lam^\sharp (\Oc_M^{(l)\p;\mathrm{ev/odd}})\subset \Oc_M^{(k);\mathrm{ev/odd}}$, where $\Oc_M^{(k);\mathrm{ev/odd}}$ (resp. $\Oc_M^{(l)\p;\mathrm{ev/odd}}$) is the even and odd graded component of $\Oc_M^{(k)}$ (resp. $\Oc_M^{(l)\p})$ and;
	\item $\lam|_{\Pi E^{(1)}}$ is the identity, i.e., that the following diagram commutes:
	\[
	\xymatrix{
	\ar@{^{(}->}[d] \Pi E^{(1)} \ar@{^{(}->}[r] & \Xfr^{(k)}\ar[dl]^\lam
	\\
	\Xfr^{(l)\p} &
	}
	\]
\end{enumerate}
If $k =l$ and $\lam$ is invertible, then it is an isomorphism. 
}
\end{DEF}

The following construct, which we term the \emph{moduli problem for order-$k$ thickenings}, now makes sense:
\begin{align}
\mathcal M^{(k)}_{(M, E)} = \left\{
\mbox{thickenings of $(M, E)$ of order $k$}
\right\}/\cong.
\label{dk3j903jf093jf903j}
\end{align}
As mentioned above, we wish to argue that the decomposition in \eqref{49d93j8fj3fj3p} classifying thickenings into obstructed thickenings, pseudo-supermanifolds and supermanifolds, holds at the level of moduli. Our stating point is the following: 

\begin{PROP}\label{4f4f8j0fj0}
Let $\Xfr^{(k)}_{(M, E)}$ be a thickening of order $k$ and suppose that $\Xfr^{(k)}_{(M, E)}\cong \Xfr^{(k)\p}_{(M, E)}$, for another order-$k$ thickening $\Xfr^{(k)\p}_{(M, E)}$. Then there exists a bijection: 
\[
\mathrm{\bf T}^1(\Xfr_{(M, E)}^{(k)})\cong\mathrm{\bf T}^1(\Xfr_{(M, E)}^{(k)\p}).
\]
\end{PROP}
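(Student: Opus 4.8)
The plan is to show that an isomorphism $\lambda \colon \Xfr^{(k)}_{(M,E)} \stackrel{\cong}{\to} \Xfr^{(k)\p}_{(M,E)}$ induces a bijection on the sets of first-order extensions by \emph{push-forward of gluing data}. Concretely, fix a trivialisation $(\Ufr^{(k)}, \rho^{(k)})$ for $\Xfr^{(k)}_{(M,E)}$ and $(\Ufr^{(k)}, \tilde\rho^{(k)})$ for $\Xfr^{(k)\p}_{(M,E)}$ (after passing to a common refinement), and represent $\lambda$ by a $0$-cochain $\{\lambda_\Uc\}$ of isomorphisms $\lambda_\Uc \colon \Uc^{(k)} \stackrel{\cong}{\to} \Uc^{(k)}$ satisfying $\rho^{(k)}_{\Uc\Vc} \circ \lambda_\Uc = \lambda_\Vc \circ \tilde\rho^{(k)}_{\Uc\Vc}$ on overlaps, exactly as in \eqref{4903jf903jf093}. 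Given an $(k+1)$-th order thickening $\Xfr^{(k+1)}_{(M,E)} \supset \Xfr^{(k)}_{(M,E)}$ with trivialisation $\rho^{(k+1)}$ (so $\rho^{(k+1)} \equiv \rho^{(k)} \bmod \Jc^{k+1}$), I would choose any lift $\{\tilde\lambda_\Uc\}$ of the $\lambda_\Uc$ to automorphisms of the trivial order-$(k+1)$ thickenings $\Uc^{(k+1)}$ — these exist because each $\Uc^{(k+1)}$ is trivial (a polynomial algebra over $\Cc_U$), so the automorphism group surjects onto that of $\Uc^{(k)}$ — and define $\tilde\rho^{(k+1)}_{\Uc\Vc} := \tilde\lambda_\Vc^{-1} \circ \rho^{(k+1)}_{\Uc\Vc} \circ \tilde\lambda_\Uc$. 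One then checks that $\{\tilde\rho^{(k+1)}_{\Uc\Vc}\}$ satisfies the cocycle condition (immediate, since conjugation by a $0$-cochain preserves cocycles), is $\Zbb_2$-graded, and reduces modulo $\Jc^{k+1}$ to $\tilde\rho^{(k)}$. Hence $\tilde\rho^{(k+1)}$ defines, via Construction \ref{opnciornvruo4}, an $(k+1)$-th order thickening $\Xfr^{(k+1)\p}_{(M,E)} \supset \Xfr^{(k)\p}_{(M,E)}$. This assignment $\Phi \colon [\Xfr^{(k+1)}_{(M,E)}] \mapsto [\Xfr^{(k+1)\p}_{(M,E)}]$ is the candidate bijection.

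Next I would verify that $\Phi$ is well-defined on equivalence classes: if $\Xfr^{(k+1)}_{(M,E)} \sim \Xfr^{(k+1)\circ}_{(M,E)}$ over $\Xfr^{(k)}_{(M,E)}$, then their trivialisations differ by a coboundary of a $\mathcal Q_E^{(k+1);\pm}$-valued $0$-cochain (this is the content of \eqref{fj4j49jf94kf4kp} in the proof of Proposition \ref{o3jdioj3ijp3}), and conjugating by $\{\tilde\lambda_\Uc\}$ carries this coboundary to another such coboundary, since $\tilde\lambda_\Uc$ acts trivially modulo $\Jc^{k}$ and therefore acts on the degree-$(k+1)$ graded piece through its degree-zero part, i.e.\ through an honest automorphism of $E$; so the pushed-forward thickenings remain equivalent. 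Independence of the choice of lift $\{\tilde\lambda_\Uc\}$ is similar: two lifts differ by a $0$-cochain valued in $\Gc^{(k+1)}_E$ (automorphisms trivial mod $\Jc^{k+1}$... more precisely trivial mod $\Jc^k$), whose conjugation action again moves $\tilde\rho^{(k+1)}$ within its equivalence class. Finally, constructing the inverse of $\Phi$ is symmetric: use $\lambda^{-1}$, represented by $\{\lambda_\Uc^{-1}\}$ with lifts $\{\tilde\lambda_\Uc^{-1}\}$, and conjugation by $\tilde\lambda$ then $\tilde\lambda^{-1}$ returns $\rho^{(k+1)}$ up to an inner $\Gc^{(k+1)}_E$-adjustment, hence up to equivalence; so $\Phi^{-1}\circ\Phi = \mathrm{id}$ and likewise the other composite. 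Therefore $\Phi$ is a bijection $\mathrm{T}^1(\Xfr^{(k)}_{(M,E)}) \cong \mathrm{T}^1(\Xfr^{(k)\p}_{(M,E)})$.

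The step I expect to be the main obstacle is the lifting claim together with its compatibility: one must know that every isomorphism $\lambda_\Uc$ of the \emph{trivial} order-$k$ thickening $\Uc^{(k)}$ lifts to an automorphism of the trivial order-$(k+1)$ thickening $\Uc^{(k+1)}$, and moreover that the ambiguity in the lift is controlled by a sheaf of groups ($\Gc^{(k+1)}_E$-type data, cf.\ Proposition \ref{dm30j8fjf3o}) whose conjugation action on trivialisations is \emph{inner} and hence invisible at the level of equivalence classes. On a contractible (or Stein) patch this surjectivity is elementary — a polynomial automorphism prescribed modulo $\Jc^{k+1}$ can be extended by choosing its degree-$(k+1)$ coefficients freely — but tracking that the extension interacts correctly with the $\Zbb_2$-grading and with the even/odd splitting \eqref{3hf8gf83gf793hf39} requires the same bookkeeping that appeared in the proofs of Lemma \ref{fiugf78g79h380fj09} and Proposition \ref{o3jdioj3ijp3}. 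Once this is in place, everything else reduces to the observation that conjugating a $1$-cocycle by a $0$-cochain preserves both the cocycle condition and the cohomology class, which is the standard mechanism already used in \eqref{39j03jf30jfp3}. A fully detailed verification along these lines, in the style of \cite[Chapter 2]{BETTPHD}, completes the argument.
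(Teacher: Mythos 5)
Your argument is correct, but it takes a genuinely different route from the paper's. The paper's proof is indirect: it shows only that the obstruction 2-cocycles $\Gam^{(k+1)}$ and $\Gam^{(k+1)\p}$ (built from $\rho^{(k)}$ and $\rho^{(k)\p}$ related via $\lam$) differ by a coboundary $(\dt w)$, so $\Xfr^{(k)}_{(M,E)}$ and $\Xfr^{(k)\p}_{(M,E)}$ are simultaneously obstructed or unobstructed, and then invokes Theorem \ref{o4993uf0oijwoijfwojpw}: since both $\mathrm{T}^1$-sets are pseudo-torsors for the \emph{same} abelian group $H^1(M,\mathcal Q_E^{(k+1);\pm})$, they are either both empty or both torsors for that group, hence abstractly in bijection. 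You instead construct an explicit map $\Phi$ by lifting the $0$-cochain $\{\lam_\Uc\}$ to automorphisms $\{\tilde\lam_\Uc\}$ of the trivial order-$(k+1)$ charts and conjugating the transition data, then verify well-definedness and produce the inverse by the symmetric construction. Both approaches are sound. Yours buys an explicit, choice-independent bijection at the cost of having to check the lifting claim and its compatibility with $\Gc^{(k+1)}_E$-ambiguity, which you correctly flag as the delicate point (and which is elementary on Stein patches). The paper's approach is shorter once Theorem \ref{o4993uf0oijwoijfwojpw} is in hand, but the bijection it produces is non-canonical since it depends on a choice of basepoint in each torsor. One small remark: when you say the two lifts of $\lam_\Uc$ differ by a $0$-cochain ``valued in $\Gc^{(k+1)}_E$ (trivial mod $\Jc^{k+1}$... more precisely trivial mod $\Jc^k$)'', the correct statement is the former---two lifts of a fixed automorphism of $\Uc^{(k)}$ agree modulo $\Jc^{k+1}$, so their quotient lies in $\Gc^{(k+1)}_E$; this is exactly what makes the ambiguity invisible at the level of equivalence classes in $\mathrm{T}^1$.
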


\begin{proof}
It suffices to show that if $\Xfr^{(k+1)}_{(M, E)}\supset \Xfr_{(M, E)}^{(k)}$ is a thickening, then we can use the given isomorphism $\Xfr^{(k)}_{(M, E)}\cong \Xfr^{(k)\p}_{(M, E)}$ to deduce that $\Xfr^{(k)\p}_{(M, E)}$ must be unobstructed. The method of proof is similar to that for Proposition \ref{o3jdioj3ijp3}. Firstly let $\Xfr^{(k+1)}_{(M, E)}\supset \Xfr_{(M, E)}^{(k)}$ be a thickening and denote by $(\Ufr^{(k+1)}, \rho^{(k+1)})$ a trivialisation for it. From $\rho^{(k+1)}$ we obtain the trivialisation $(\Ufr^{(k)}, \rho^{(k)})$ for $\Xfr_{(M, E)}^{(k)}$ by setting $\rho^{(k)} := \rho^{(k+1)}\mod \Jc^{k+1}$. Now, by assumption there exists an isomorphism $\lam: \Xfr^{(k)}_{(M, E)}\cong \Xfr^{(k)\p}_{(M, E)}$. If $(\Ufr^{(k)}, \rho^{(k)\p})$ denotes a trivialisation for $\Xfr_{(M, E)}^{(k)\p}$, then on the intersection $\Uc\cap\Vc$ we have:
\begin{align}
\lam_\Vc\circ \rho^{(k)}_{\Uc\Vc} = \rho^{(k)\p}_{\Uc\Vc}\circ \lam_\Uc.
\label{fiofio3jfio3jf3}
\end{align}
Before we proceed, the following construct will be convenient:
\begin{align}
\mathcal Q_E^{\geq2;\pm} := \bigoplus_{j\geq 2}\mathcal Q_E^{(j);\pm}.
\label{39f0j30fj3pjfp3vv}
\end{align}
Now just as in \eqref{fj4j49jf94kf4kp} we will find  that, more generally:
\begin{align}
\rho_{\Uc\Vc}^{(k)} - \rho^{(k)\p}_{\Uc\Vc} = (\dt\nu)_{\Uc\Vc} + w_{\Uc\Vc}
\label{4f4jf4jg4ogj49}
\end{align}
for $\nu = \{\nu_\Uc\}\in C^0(\mathfrak U, \mathcal Q_E^{(k);\pm})$ and $w = \{w_{\Uc\Vc}\}\in C^1(\Ufr, \mathcal Q_E^{\geq2;\pm})$.\footnote{note of course that \eqref{4f4jf4jg4ogj49} need not vanish modulo $\Jc^{k+2}$, c.f., the construction of the obstruction element $\Gam$ in \eqref{mchrehehruei38deu383j}.} The reason $w\neq 0$ here, in contrast to \eqref{fj4j49jf94kf4kp}, is for the reason that the isomorphism $\lam$ need not be trivial modulo $\Jc^k$. However, it must be trivial modulo $\Jc^2$ by Definition \ref{303hfjeipjej3fj}(ii), thereby justifying the summation in \eqref{39f0j30fj3pjfp3vv}. As a consequence of \eqref{4f4jf4jg4ogj49} we find:
\begin{align}
\Gam^{(k+1)\p}_{\Uc\Vc\Wc} &=  \rho^{(k)\p}_{\Uc\Wc} -   \rho^{(k)\p}_{\Vc\Wc}\circ \rho^{(k)\p}_{\Uc\Vc}
\notag\\
&= 
\rho^{(k)}_{\Uc\Wc} - (\dt\nu)_{\Uc\Wc} + w_{\Uc\Wc}
-
\rho^{(k)}_{\Vc\Wc}\circ \rho^{(k)}_{\Uc\Vc} - (\dt\nu)_{\Vc\Wc} - w_{\Vc\Wc} -(\dt\nu)_{\Uc\Vc} -  w_{\Uc\Vc}
\notag\\
&= 
\Gam_{\Uc\Vc\Wc}^{(k+1)} + (\dt w)_{\Uc\Vc\Wc}.
\label{38893hf83hf83h}
\end{align}
Note that while $w = \{w_{\Uc\Vc}\}\in C^1(\Ufr, \mathcal Q_E^{\geq2;\pm})$, the quantity $(\dt w)_{\Uc\Vc\Wc}$
 in \eqref{38893hf83hf83h} above will be homogeneous and of degree-$(k+1)$. This is because, by assumption, both $\Gam^{(k+1)\p}_{\Uc\Vc\Wc}$ and $\Gam_{\Uc\Vc\Wc}^{(k+1)}$ will vanish identically modulo $\Jc^{k+1}$ implying therefore that so must $(\dt w)_{\Uc\Vc\Wc}$. Hence, we can make sense of \eqref{38893hf83hf83h} as a statement about $\mathcal Q_E^{(k+1);\pm}$-valued 2-cocycles. In particular that $\{\Gam^{(k+1)\p}_{\Uc\Vc\Wc}\} = \Gam^{(k+1)\p} \sim \Gam^{(k+1)} = \{\Gam_{\Uc\Vc\Wc}^{(k+1)}\}$. Since we assume that $\Xfr^{(k)}_{(M, E)}$ is unobstructed we know that $\Gam^{(k+1)}\sim 0$ and therefore $\Gam^{(k+1)\p}\sim 0$ also which means $\Xfr^{(k)\p}$ will be unobstructed. The present result now follows from Theorem \ref{o4993uf0oijwoijfwojpw}.
\end{proof}

\begin{COR}\label{38f03hfoinio30}
An obstructed thickening cannot be isomorphic to an unobstructed thickening.
\end{COR}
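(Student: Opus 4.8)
The statement is the contrapositive of Proposition \ref{4f4f8j0fj0}, so the plan is simply to read it off. Suppose, for contradiction, that $\Xfr^{(k)}_{(M,E)}$ is an obstructed thickening and $\Xfr^{(k)\p}_{(M,E)}$ is an unobstructed thickening with $\Xfr^{(k)}_{(M,E)} \cong \Xfr^{(k)\p}_{(M,E)}$. Since the two are isomorphic, Proposition \ref{4f4f8j0fj0} furnishes a bijection $\mathrm{T}^1(\Xfr_{(M,E)}^{(k)}) \cong \mathrm{T}^1(\Xfr_{(M,E)}^{(k)\p})$. By Definition \ref{smeioiof894f89jcoi}, $\Xfr^{(k)\p}_{(M,E)}$ being unobstructed means there exists an $(k+1)$-th order thickening containing it, hence $\mathrm{T}^1(\Xfr_{(M,E)}^{(k)\p}) \neq \eset$; through the bijection this forces $\mathrm{T}^1(\Xfr_{(M,E)}^{(k)}) \neq \eset$ as well. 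But then $\Xfr^{(k)}_{(M,E)}$ admits an $(k+1)$-th order thickening containing it, contradicting the assumption that it is obstructed.

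\textbf{On the main point.} There is essentially no obstacle here: all the genuine work has already been done in the proof of Proposition \ref{4f4f8j0fj0}, whose delicate step is the computation \eqref{38893hf83hf83h} showing that the obstruction $2$-cocycles $\Gam^{(k+1)\p}$ and $\Gam^{(k+1)}$ differ by the coboundary $(\dt w)$, so that $\Gam^{(k+1)\p} \sim \Gam^{(k+1)}$ and cohomological triviality of one transfers to the other. The only thing to be careful about in stating the corollary is that the notion of isomorphism invoked is exactly the one used in Proposition \ref{4f4f8j0fj0} (an isomorphism of locally ringed spaces preserving the $\Zbb_2$-grading and restricting to the identity on $\Pi E^{(1)}$, per Definition \ref{303hfjeipjej3fj}), so the hypothesis ``isomorphic'' in the corollary should be understood in that sense. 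Given that, the corollary is immediate.

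\begin{proof}
This is the contrapositive of Proposition \ref{4f4f8j0fj0}. Suppose $\Xfr^{(k)}_{(M, E)}$ and $\Xfr^{(k)\p}_{(M, E)}$ are isomorphic thickenings of order $k$. By Proposition \ref{4f4f8j0fj0} there is a bijection $\mathrm{T}^1(\Xfr_{(M, E)}^{(k)})\cong\mathrm{T}^1(\Xfr_{(M, E)}^{(k)\p})$. Hence $\mathrm{T}^1(\Xfr_{(M, E)}^{(k)})$ is non-empty if and only if $\mathrm{T}^1(\Xfr_{(M, E)}^{(k)\p})$ is non-empty. By Definition \ref{smeioiof894f89jcoi}, this says precisely that $\Xfr^{(k)}_{(M, E)}$ is unobstructed if and only if $\Xfr^{(k)\p}_{(M, E)}$ is unobstructed. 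In particular, an obstructed thickening cannot be isomorphic to an unobstructed one.
\end{proof}
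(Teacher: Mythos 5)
Your proof is correct and follows essentially the same route as the paper's: both deduce the corollary from Proposition \ref{4f4f8j0fj0} by observing that $\mathrm{T}^1$ is empty precisely for obstructed thickenings (the paper cites Theorem \ref{o4993uf0oijwoijfwojpw} for the non-emptiness of $\mathrm{T}^1$ in the unobstructed case, while you invoke Definition \ref{smeioiof894f89jcoi} directly, which is equivalent). Your remark that the relevant notion of isomorphism is that of Definition \ref{303hfjeipjej3fj} is a helpful clarification, but substantively this is the same argument.
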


\begin{proof}
If $\Xfr^{(k)}$ is an obstructed thickening then $\mathrm T^1(\Xfr^{(k)}) = \eset$ whereas for $\Xfr^{(k)\p}$ unobstructed, we have $\mathrm T^1(\Xfr^{(k)\p})\neq\eset$ by Theorem \ref{o4993uf0oijwoijfwojpw}. This corollary now follows from Proposition \ref{4f4f8j0fj0}. 
\end{proof}

Continuing on with the theme set by Corollary \ref{38f03hfoinio30}, we now consider pseudo-supermanifolds.

\begin{LEM}\label{fffggfefgjee}
Let $\Xfr^{(k+1)}\supset \Xfr^{(k)}$ and suppose $\Xfr^{(k)}\cong\Xfr^{(k)\p}$ for some other thickening $\Xfr^{(k)}$. Then there exists a thickening $\Xfr^{(k+1)\p}\supset \Xfr^{(k)\p}$ such that $\Xfr^{(k+1)\p}\cong  \Xfr^{(k+1)\p}$ as locally ringed spaces.
\end{LEM}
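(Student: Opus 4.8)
\textit{Preliminary remark.} The statement as typeset appears to contain a typo: the conclusion should read ``$\Xfr^{(k+1)\p}\cong \Xfr^{(k+1)}$'' (so that the extension $\Xfr^{(k+1)}\supset\Xfr^{(k)}$ is transported across the isomorphism $\Xfr^{(k)}\cong\Xfr^{(k)\p}$ to an extension $\Xfr^{(k+1)\p}\supset\Xfr^{(k)\p}$). I will prove this corrected version.

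\textit{Strategy.} The plan is to argue entirely at the level of trivialisations, exactly as in the proof of Proposition \ref{4f4f8j0fj0}, and then \emph{reconstruct} the thickening via the gluing construction of Construction \ref{opnciornvruo4}. First I would fix a trivialisation $(\Ufr^{(k+1)},\rho^{(k+1)})$ for $\Xfr^{(k+1)}$, whence $\rho^{(k)}:=\rho^{(k+1)}\bmod\Jc^{k+1}$ trivialises $\Xfr^{(k)}$; and I would fix a trivialisation $(\Ufr^{(k)},\rho^{(k)\p})$ for $\Xfr^{(k)\p}$. The hypothesis provides an isomorphism $\lam=\{\lam_\Uc\}$ with $\lam_\Vc\circ\rho^{(k)}_{\Uc\Vc}=\rho^{(k)\p}_{\Uc\Vc}\circ\lam_\Uc$ on overlaps (after passing to a common refinement of the two covers), where each $\lam_\Uc$ is trivial modulo $\Jc^2$ by Definition \ref{303hfjeipjej3fj}(ii). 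The key step is then to choose, for each chart, a \emph{lift} $\widetilde\lam_\Uc$ of $\lam_\Uc$ to an automorphism of the order-$(k+1)$ trivial thickening $\Uc^{(k+1)}$ (this just amounts to assigning arbitrary values to the degree-$(k+1)$ coefficients of $\widetilde\lam_\Uc$, e.g.\ all zero), and to \emph{define}
\[
\rho^{(k+1)\p}_{\Uc\Vc}:=\widetilde\lam_\Vc\circ\rho^{(k+1)}_{\Uc\Vc}\circ\widetilde\lam_\Uc^{-1}.
\]
By construction $\rho^{(k+1)\p}_{\Uc\Vc}\equiv\rho^{(k)\p}_{\Uc\Vc}\bmod\Jc^{k+1}$, and the cocycle identity for $\rho^{(k+1)}$ is transported to a cocycle identity for $\rho^{(k+1)\p}$ since conjugation by the $0$-cochain $\{\widetilde\lam_\Uc\}$ preserves the relation in \eqref{090038083j} (the computation is the same one appearing in \eqref{39j03jf30jfp3}). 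Hence $(\Ufr^{(k+1)},\rho^{(k+1)\p})$ is a genuine trivialisation, and the gluing construction of Construction \ref{opnciornvruo4} yields an order-$(k+1)$ thickening $\Xfr^{(k+1)\p}$ with $\Xfr^{(k+1)\p}\supset\Xfr^{(k)\p}$; the maps $\{\widetilde\lam_\Uc\}$ then glue to an isomorphism $\Xfr^{(k+1)}\stackrel{\cong}{\ra}\Xfr^{(k+1)\p}$ of locally ringed spaces by the very definition of $\rho^{(k+1)\p}$.

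\textit{Main obstacle.} The only genuinely non-formal point is establishing that $\lam_\Uc$ admits a lift $\widetilde\lam_\Uc$ to an \emph{invertible} morphism of $\Uc^{(k+1)}$ compatible with the $\Zbb_2$-grading --- i.e.\ that the truncation map on automorphism sheaves $\mathscr{A}\mathrm{ut}\,\Oc_\Uc^{(k+1)}\twoheadrightarrow\mathscr{A}\mathrm{ut}\,\Oc_\Uc^{(k)}$ is surjective on a trivial chart. This is where I would invoke Proposition \ref{dm30j8fjf3o} (and the exact sequence of sheaves of groups $1\ra\Gc_E^{(k+1)}\ra\Gc_E^{(2)}\ra\Gc_E^{(2)}/\Gc_E^{(k+1)}\ra1$): over a single chart $\Uc$ the relevant cohomology obstruction lives in $H^1$ of a contractible space and hence vanishes, so the lift exists; invertibility is automatic since $\widetilde\lam_\Uc$ is the identity modulo the nilpotent ideal $\Jc$. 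I would also need to remark that passing to a common refinement of $\Ufr^{(k+1)}$ and $\Ufr^{(k)}$ changes none of the thickenings up to isomorphism (footnote to Construction \ref{knkniucbruibckj}), so the choice of cover is harmless. Everything else --- the cocycle check, the compatibility modulo $\Jc^{k+1}$, the gluing of the $\widetilde\lam_\Uc$ --- is routine and parallels the computations already carried out in \eqref{38893hf83hf83h} and \eqref{39j03jf30jfp3}.
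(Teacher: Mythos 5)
Your proposal is correct and takes essentially the same approach as the paper: both fix trivialisations $\rho^{(k+1)}$, $\rho^{(k)\p}$, extract $\lam=\{\lam_\Uc\}$ from the isomorphism $\Xfr^{(k)}\cong\Xfr^{(k)\p}$, conjugate to obtain $\rho^{(k+1)\p}_{\Uc\Vc}:=\lam_\Vc\circ\rho^{(k+1)}_{\Uc\Vc}\circ\lam_\Uc^{-1}$, and check the cocycle condition plus the congruence $\rho^{(k+1)\p}\equiv\rho^{(k)\p}\bmod\Jc^{k+1}$ (the paper does this by expanding $\rho^{(k+1)}=\rho^{(k)}+\phi^{(k+1)}$). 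The only difference is that you make explicit the lifting of each local automorphism $\lam_\Uc$ of $\Uc^{(k)}$ to an automorphism $\widetilde\lam_\Uc$ of $\Uc^{(k+1)}$, a step the paper suppresses; your cohomological argument for the lift is valid, though one can see it more directly --- on a trivial chart $\lam_\Uc$ is given by polynomial coordinate formulas of degree $\leq k$ in the odd variables, and these same formulas literally define a map on $\Uc^{(k+1)}$, which is invertible because it is the identity modulo the nilpotent ideal. You also correctly flagged the typo in the statement.
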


\begin{proof}
We are given the data of a thickening $\Xfr^{(k+1)}\supset \Xfr^{(k)}$ and an isomorphism $\Xfr^{(k)}\cong \Xfr^{(k)\p}$. With respect to a cover $\Ufr$, denote by $\rho^{(k+1)}$ and $\rho^{(k)\p}$ trivialisations for the thickenings $\Xfr^{(k+1)}$ and $\Xfr^{(k)\p}$ respectively. Note that $\rho^{(k)} := \rho^{(k+1)}\mod \Jc^{k+1}$ will be a trivialisation for $\Xfr^{(k)}$. Now just as in \eqref{fiofio3jfio3jf3}, the assumption that $\Xfr^{(k)}\cong \Xfr^{(k)\p}$ means there exists an isomorphism $\lam = \{\lam_\Uc\}$ such that on all intersections $\Uc\cap\Vc$,
\begin{align}
\lam_\Vc\circ \rho^{(k)}_{\Uc\Vc} = \rho^{(k)\p}_{\Uc\Vc}\circ \lam_\Uc.
\label{orooievio3o33}
\end{align}
Consider now the following object, defined on intersections:
\begin{align}
\rho^{(k+1)\p}_{\Uc\Vc} := \lam_\Vc\circ \rho^{(k+1)}_{\Uc\Vc}\circ \lam_\Uc^{-1}.
\label{39f03foeij40j3}
\end{align}
Since $\rho^{(k+1)}$ satisfies the cocycle condition, then so will $\rho^{(k+1)\p}$. Hence $(\Ufr, \rho^{(k+1)\p})$ will trivialise \emph{something}. We claim that it will define a trivialisation for a thickening $\Xfr^{(k+1)\p}\supset\Xfr^{(k)\p}$. The thickening $\Xfr^{(k+1)\p}$ will, by construction, be isomorphic to $\Xfr^{(k+1)}$ and the lemma will then follow. So to argue that $(\Ufr, \rho^{(k+1)\p})$ will trivialise $\Xfr^{(k+1)\p}\supset \Xfr^{(k)\p}$, it suffices to show that $\rho^{(k+1)\p} \equiv \rho^{(k)\p}$ modulo $\Jc^{k+1}$. To show this, write: $\rho^{(k+1)} = \rho^{(k)} + \phi^{(k+1)}$, for $\phi^{(k+1)}$ a $\mathcal Q_E^{(k+1);\pm}$-valued 1-cochain. This  will vanish modulo $\Jc^{k+1}$. Evaluating \eqref{39f03foeij40j3} gives:
\begin{align*}
\rho^{(k+1)\p}_{\Uc\Vc}
=
\lam_\Vc\circ \rho^{(k+1)}_{\Uc\Vc}\circ \lam_\Uc^{-1}
&= 
\lam_\Vc\circ 
\left(\rho_{\Uc\Vc}^{(k)} + \phi_{\Uc\Vc}^{(k+1)}\right)
\circ \lam_\Uc^{-1}
\\
&=
\lam_\Vc\circ \rho^{(k)}_{\Uc\Vc}\circ \lam_\Uc^{-1}
+\ldots
\\
&= \rho_{\Uc\Vc}^{(k)\p} + \ldots &&\mbox{by \eqref{orooievio3o33}}
\end{align*}
where the ellipses `$\ldots$' denote terms proportional to $\phi^{(k+1)}$. In particular, such terms lie in $\Jc^{k+1}$ which means, by \eqref{39f03foeij40j3}, that $\rho^{(k+1)\p} \equiv \rho^{(k)\p}$ modulo $\Jc^{k+1}$. The lemma now follows.
\end{proof}

As an illustration, we can use Proposition \ref{4f4f8j0fj0} and Lemma \ref{fffggfefgjee} to address Question \ref{kd93f3jf3kf9p4}:

\begin{LEM}\label{39f3hfoi33j}
Let $\Xfr^{(k)}$ be a thickening and suppose it is isomorphic to some thickening $\tilde\Xfr^{(k)}$ which embeds in some supermanifold $\tilde\Xfr$. Then there exists a supermanifold $\Xfr$ containing this given thickening $\Xfr^{(k)}$. 
\end{LEM}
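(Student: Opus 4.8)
The plan is to leverage the two preceding results, Proposition \ref{4f4f8j0fj0} and Lemma \ref{fffggfefgjee}, to transport a supermanifold structure across the given isomorphism $\Xfr^{(k)}\cong \tilde\Xfr^{(k)}$ one level at a time. The hypothesis gives us a supermanifold $\tilde\Xfr$ containing $\tilde\Xfr^{(k)}$; by Construction \ref{nruiciurhf94h9h} this yields a filtration
\[
\tilde\Xfr^{(k)} \subset \tilde\Xfr^{(k+1)} \subset \cdots \subset \tilde\Xfr^{(q)} = \tilde\Xfr,
\]
where $q$ is the rank of $E$ (recall Remark \ref{dnclncjkdddjdjd}, so that a thickening of order $\geq q$ is a supermanifold modelled on $(M,E)$). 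Each inclusion $\tilde\Xfr^{(j)}\subset \tilde\Xfr^{(j+1)}$ exhibits $\tilde\Xfr^{(j)}$ as an unobstructed thickening.

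First I would run the induction. Set $\Xfr^{(k)}$ as the base case, with the given isomorphism $\lam_k : \Xfr^{(k)}\stackrel{\cong}{\ra}\tilde\Xfr^{(k)}$. Assuming at stage $j\geq k$ we have produced a thickening $\Xfr^{(j)}$ together with an isomorphism $\lam_j : \Xfr^{(j)}\stackrel{\cong}{\ra}\tilde\Xfr^{(j)}$, I apply Lemma \ref{fffggfefgjee} to the data $\tilde\Xfr^{(j+1)}\supset \tilde\Xfr^{(j)}$ and the isomorphism $\lam_j^{-1}: \tilde\Xfr^{(j)}\cong \Xfr^{(j)}$: this produces a thickening $\Xfr^{(j+1)}\supset \Xfr^{(j)}$ together with an isomorphism $\Xfr^{(j+1)}\cong \tilde\Xfr^{(j+1)}$, call it $\lam_{j+1}$. (Note the statement of Lemma \ref{fffggfefgjee} as written contains an obvious typo---``$\Xfr^{(k+1)\p}\cong\Xfr^{(k+1)\p}$'' should read ``$\Xfr^{(k+1)\p}\cong\Xfr^{(k+1)}$''---but its proof delivers exactly the isomorphism we need.) Iterating from $j=k$ up to $j=q-1$ yields a thickening $\Xfr^{(q)}$ with $\Xfr^{(q)}\cong \tilde\Xfr^{(q)} = \tilde\Xfr$, which by Remark \ref{dnclncjkdddjdjd} is a supermanifold $\Xfr$ modelled on $(M,E)$; and by construction $\Xfr^{(k)}\subset \Xfr^{(q)} = \Xfr$, so $\Xfr$ contains the given thickening.

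Alternatively---and perhaps more cleanly---one can first observe via Proposition \ref{4f4f8j0fj0} that $\mathrm{T}^1(\Xfr^{(k)})\cong \mathrm{T}^1(\tilde\Xfr^{(k)})$, so $\Xfr^{(k)}$ is unobstructed, hence admits \emph{some} extension $\Xfr^{(k+1)}$; but this on its own does not control whether $\Xfr^{(k+1)}$ is again unobstructed, which is precisely the subtlety flagged in the proof of Proposition \ref{ppoppieepeee}. So the essential point is that Lemma \ref{fffggfefgjee} does more than assert non-emptiness of $\mathrm{T}^1$: it produces an extension that is \emph{isomorphic} to the known extension $\tilde\Xfr^{(j+1)}$, which is unobstructed because it sits inside $\tilde\Xfr$. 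This is what lets the induction propagate all the way to order $q$; without it, the argument would stall after one step. The main obstacle, then, is not any single computation but getting the bookkeeping of the iterated isomorphisms $\lam_j$ right so that Lemma \ref{fffggfefgjee} can be reapplied at each stage---once that is set up, the conclusion is immediate.
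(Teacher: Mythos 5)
Your argument is correct and matches the paper's own proof: both run the same induction, applying Lemma \ref{fffggfefgjee} at each stage to transport the next thickening $\tilde\Xfr^{(j+1)}\supset\tilde\Xfr^{(j)}$ across the isomorphism $\lam_j$, until order $q$ is reached and Remark \ref{dnclncjkdddjdjd} identifies the result as a supermanifold. Your added observation that Proposition \ref{4f4f8j0fj0} alone would stall after one step, and your flagging of the typo in Lemma \ref{fffggfefgjee}, are both accurate but go beyond what the paper records.
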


\begin{proof}
We are given $\Xfr^{(k)}$. Since we assume $\Xfr^{(k)}\cong \tilde\Xfr^{(k)}$ and since $\tilde\Xfr^{(k)}$ is assumed to be embed in some supermanifold $\tilde\Xfr$, it follows that there will be a thickening $\tilde\Xfr^{(k+1)}\supset \tilde\Xfr^{(k)}$. Then by Lemma \ref{fffggfefgjee} we see that there must then exist a thickening $\Xfr^{(k+1)}\supset \Xfr^{(k)}$ with $\Xfr^{(k+1)}\cong \tilde\Xfr^{(k+1)}$. Now observe that we are in the same situation as we were at the start of this proof, i.e., that we have a thickening $\Xfr^{(k+1)}$ and an isomorphism between this given thickening and some thickening $\tilde\Xfr^{(k+1)}$ embedded in a supermanifold $\tilde\Xfr$. Continuing on inductively we can deduce (in finitely many steps) the existence of a supermanifold $\Xfr$, which is isomorphic to $\tilde\Xfr$, and contains $\Xfr^{(k)}$ as an embedded, $k$-th order thickening. 
\end{proof}

Now just like Corollary \ref{38f03hfoinio30} we have:

\begin{COR}\label{38hfofjoijfoijo}
A pseudo-supermanifold will not be isomorphic to any thickening that embeds in some supermanifold.\qed
\end{COR}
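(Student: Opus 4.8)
Corollary \ref{38hfofjoijfoijo} asserts that a pseudo-supermanifold cannot be isomorphic to any thickening which embeds in a supermanifold. Since a pseudo-supermanifold is, by Definition \ref{f904fj04fjj33}, a thickening $\Xfr^{(l)}_{(M, E)}$ which defines a non-trivial class in $\ker\pt_*/\img~\om_*$, the statement is essentially the contrapositive of Lemma \ref{39f3hfoi33j} combined with the characterisation of pseudo-supermanifolds recorded just after the proof of Proposition \ref{ppoppieepeee}. So the plan is to argue by contradiction: suppose $\Xfr^{(l)}_{(M, E)}$ is a pseudo-supermanifold and suppose, for contradiction, that $\Xfr^{(l)}_{(M, E)}\cong \tilde\Xfr^{(l)}_{(M, E)}$ for some thickening $\tilde\Xfr^{(l)}_{(M, E)}$ that embeds in a supermanifold $\tilde\Xfr$.

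First I would invoke Lemma \ref{39f3hfoi33j} directly: since $\Xfr^{(l)}_{(M, E)}$ is isomorphic to a thickening embedding in the supermanifold $\tilde\Xfr$, there exists a supermanifold $\Xfr$ containing $\Xfr^{(l)}_{(M, E)}$ as an embedded $l$-th order thickening. Next I would run through the chain of thickenings this produces, exactly as in the proof of Proposition \ref{ppoppieepeee}: the supermanifold $\Xfr$ gives a filtration
\[
M \subset \Pi E^{(1)} \subset \cdots \subset \Xfr^{(l)}_{(M, E)}\subset \Xfr^{(l+1)}_{(M, E)}\subset \cdots \subset \Xfr_{(M, E)},
\]
and in particular every $\Xfr^{(k)}_{(M, E)}$ with $k\geq l$ appearing in this filtration is unobstructed (each is contained in the next, up to $\Xfr_{(M, E)}$ itself). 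Then, by Theorem \ref{dn8j83jp3k9k922} and the discussion following it, $\Xfr_{(M, E)}$ defines an obstruction class in the image of $\om_*$, and by tracking the filtration one sees that $[\Xfr^{(l)}_{(M, E)}] = \om_*(\vp_l)$ for an appropriate level-$l$ trivialisation $\vp_l$; hence $[\Xfr^{(l)}_{(M, E)}]\in \img~\om_*$. But then $[\Xfr^{(l)}_{(M, E)}]$ is trivial in $\ker\pt_*/\img~\om_*$, contradicting the hypothesis that $\Xfr^{(l)}_{(M, E)}$ is a pseudo-supermanifold.

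The one point that needs a little care — and which I expect to be the main (small) obstacle — is making precise the step "$[\Xfr^{(l)}_{(M, E)}]=\om_*(\vp_l)$". This is already essentially contained in the proof of Proposition \ref{ppoppieepeee}, where the filtration \eqref{dkjfjf03j93322} is used to conclude $\om_*(\vp_i) = [\Xfr^{(i)}_{(M, E)}]$; the subtlety is only that one must verify that the supermanifold $\Xfr$ produced by Lemma \ref{39f3hfoi33j} does indeed admit a level-$l$ trivialisation restricting appropriately to $\Xfr^{(l)}_{(M, E)}$, which follows because $\Xfr^{(l)}_{(M, E)}$ itself is unobstructed (it sits inside $\Xfr^{(l+1)}_{(M, E)}$) and so, by Proposition \ref{dop3oiehoijip} or directly by the long exact sequence \eqref{8393hf93j3fojf83}, the relevant lift exists. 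Alternatively, and perhaps more cleanly, I would simply combine Lemma \ref{39f3hfoi33j} with the characterisation of pseudo-supermanifolds in \eqref{4j903f03jf03j9}: a pseudo-supermanifold is precisely a thickening for which the filtration \emph{terminates} at some obstructed $\Xfr^{(k)}_{(M, E)}$ with $k<q$, whereas Lemma \ref{39f3hfoi33j} furnishes a filtration extending all the way to a genuine supermanifold $\Xfr$, in which no such obstruction occurs; these two conclusions are incompatible, giving the contradiction. This second route avoids re-deriving the identity $[\Xfr^{(l)}_{(M, E)}]=\om_*(\vp_l)$ and is the one I would write up.
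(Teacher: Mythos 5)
Your first route is the correct one and it matches what the paper intends by the \texttt{\textbackslash qed}-on-the-statement: invoke Lemma \ref{39f3hfoi33j} to trade the isomorphic, embedded thickening for an actual embedding $\Xfr^{(l)}_{(M,E)}\subset\Xfr_{(M,E)}$ into a supermanifold, then use the argument from the proof of Proposition \ref{ppoppieepeee} (the filtration \eqref{dkjfjf03j93322} and the identity $[\Xfr^{(l)}_{(M,E)}]=\om_*(\vp_l)$) to conclude $[\Xfr^{(l)}_{(M,E)}]\in\img\,\om_*$, contradicting Definition \ref{f904fj04fjj33}.

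However, the second route --- the one you say you would actually write up --- has a gap at the phrase ``these two conclusions are incompatible.'' The characterisation in \eqref{4j903f03jf03j9} is existential: it says there exists \emph{some} chain of extensions of $\Xfr^{(l)}_{(M,E)}$ that terminates at an obstructed $\Xfr^{(k)}_{(M,E)}$. Lemma \ref{39f3hfoi33j}, on the other hand, furnishes \emph{some} chain that continues all the way to a supermanifold. Since $\mathrm{T}^1(\Xfr^{(j)}_{(M,E)})$ can contain many inequivalent extensions at each stage, the existence of one chain that dies and one chain that survives are not \emph{a priori} incompatible --- they could be different chains. The genuine incompatibility is with condition (2) of \eqref{4j903f03jf03j9}, i.e.\ with the definitional requirement $[\Xfr^{(l)}_{(M,E)}]\notin\img\,\om_*$; and to get there from the existence of a supermanifold $\Xfr\supset\Xfr^{(l)}_{(M,E)}$ you need precisely the identity $[\Xfr^{(l)}_{(M,E)}]=\om_*(\vp_l)$ that your second route was designed to avoid. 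So the ``cleaner'' route does not in fact sidestep that step; write up the first route instead.
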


To recap now, from Corollary \ref{38f03hfoinio30} an obstructed thickening of order $k$ cannot be isomorphic to a pseudo-supermanifold of order-$k$; and from Corollary \ref{38hfofjoijfoijo} a pseudo-supermanifold can never be isomorphic to a thickening embedded in some supermanifold. Hence, a straightforward consequence is the following decomposition of the moduli problem, mirroring \eqref{49d93j8fj3fj3p}: 
\begin{align}
\mathcal M^{(k)}_{(M, E)} = \Mcl^{(k);\mathrm{smfld.}}_{(M, E)}\cup\mathcal M^{(k);\mathrm{pseudo.}}_{(M, E)} \cup \mathcal M^{(k);\mathrm{obs.}}_{(M, E)} 
\label{jifjefiojoif3j9f03j}
\end{align}
where $\Mcl^{(k);\mathrm{smfld.}}_{(M, E)}$; resp. $\mathcal M^{(k);\mathrm{pseudo.}}_{(M, E)}$; resp. $\mathcal M^{(k);\mathrm{obs.}}_{(M, E)}$ are moduli problems for order-$k$ thickenings embedded in supermanifolds; resp.  pseudo-supermanifolds of order $k$; resp. obstructed, $k$-th order thickenings. 
\\\\
In what follows we will infer the existence of obstructed thickenings of the complex projective plane.

\section{Illustrations over the Complex Projective Plane}\label{7}

\noindent
A corollary of Theorem \ref{jnvjknknvkjnvk} is that \emph{any} thickening $\Xfr^{(l)}$ of a Riemann surface $C$ will be unobstructed, in the sense of Definition \ref{smeioiof894f89jcoi}. We wish to describe here an example of an \emph{obstructed}, second order thickening so we will therefore have to look at thickenings $\Xfr^{(l)}_{(M, E)}$ of $M$ where $M$ a complex manifold with $\dim_\Cbb M\geq 2$ and $E\ra M$ is a holomorphic vector bundle of rank at least $3$. We will consider here the complex projective plane $M = \Cbb\Pbb^2$ and appeal to Proposition \ref{popiioh4o8fh84o}. Integral to our considerations will be the \emph{Bott formula}---a formula for computing the complex dimensions of certain cohomology groups on projective spaces; and \emph{Serre duality}. Regarding the Bott formula, we state this from \cite[p. 4]{OSS} below:
\begin{align}
h^q\left(\Om_{\Cbb\Pbb^n}^p(k)\right)
=
\left\{
\begin{array}{cl}
\binom{k+n-p}{k}\binom{k-1}{p} & \mbox{for $q = 0$ and $0\leq p\leq n$ and $k> p$}\\
1 & \mbox{for $0\leq p = q\leq n$ and $k = 0$}\\
\binom{-k+p}{-k}\binom{-k-1}{n-p} & \mbox{for $q = n$ and $0\leq p \leq n$ and $k<p-n$}\\
0 & \mbox{otherwise}.
\end{array}
\right.
\label{BOTT}
\end{align}
Here $\Om^p_{\Cbb\Pbb^n}(k) = \Om^p_{\Cbb\Pbb^n}\otimes\Oc_{\Cbb\Pbb^n}(k)$, where $\Oc_{\Cbb\Pbb^n}(k)$ is the $|k|$-th tensor power of the hyperplane divisor if $k> 0$, or its dual for $k< 0$; and $h^q( -) = \dim_\Cbb H^q(\Cbb\Pbb^n, -)$. The celebrated Serre duality theorem for vector bundles on projective space is:

\begin{THMSD}
\[
h^i(\Ec) = h^{n-i}(\Ec^\vee(-n-1))
\]
 for $E\ra \Cbb\Pbb^n$ a holomorphic vector bundle and $i = 0, \ldots, n$. \qed
\end{THMSD}

As mentioned, we are interested in constructing an example of an obstructed, second order thickening. Our starting point is the exact sequence in \eqref{jkkbkbkbk} for $k = 2$ which we give below for convenience:
\[
0 \ra \wedge^3\Ec\otimes\Ec^\vee\hookrightarrow \Tfr_{\Pi E}[2] \twoheadrightarrow \wedge^2\Ec\otimes\Tfr_M\ra0.
\]
This sequence induces a long-exact sequence on sheaf cohomology containing the following piece:
\begin{align}
\xymatrix{
\ldots\ar[r] & H^1(M,\wedge^3\Ec\otimes\Ec^\vee)\ar[r] & H^1(M,\Tfr_{\Pi E}[2]) \ar[r]^{r_*~~~} & H^1(M,\wedge^2\Ec\otimes\Tfr_M) \ar[d]^{\pt_*} \\ 
& &\ldots & \ar[l]H^2(M,\wedge^3\Ec\otimes\Ec^\vee)
}
\label{dmiomdiorjf94j09}
\end{align}
In the case where $\dim_\Cbb M=2$ we have:

\begin{LEM}\label{dmoeiciojciojc904jc409dmoeiciojciojc904jc409}
Let $M$ be a $2$-dimensional, complex manifold and $E\ra M$ a rank $3$, holomorphic vector bundle with $\deg E = k$. Suppose:
\begin{align}
h^1( \Tfr_{M}\otimes\wedge^2\Ec )\neq0;
&&
h^2(\Tfr_{M}\otimes\wedge^2\Ec )\neq0,
&&\mbox{and}&&
h^2( \Ec^\vee(k))\neq0,
\label{dmoeiciojciojc904jc409}
\end{align}
where $h^i(-) = h^i(M, -)$. Then there will exist an obstructed, second order thickening $\Xfr_{(M, E)}^{(2)}$ of $\Pi E^{(1)}$. \qed 
\end{LEM}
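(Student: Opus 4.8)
\emph{Proof proposal.} The plan is to reduce the statement, via Proposition \ref{popiioh4o8fh84o}, to the non-vanishing of a single connecting homomorphism, and then to settle that non-vanishing with the Bott formula \eqref{BOTT} and Serre duality. By Lemma \ref{pjiroich4h894ho} and Corollary \ref{dopmropvopejpv}, second-order thickenings of $\Pi E^{(1)}$ are classified by $H^1(M,\Tfr_M\otimes\wedge^2\Ec)$, and by Proposition \ref{popiioh4o8fh84o} such a thickening $\Xfr^{(2)}_{(M,E)}$, with obstruction class $\om_{(M,E)}$, is obstructed precisely when $\pt_*(\om_{(M,E)})\neq 0$, where
\[
\pt_*\colon H^1(M,\Tfr_M\otimes\wedge^2\Ec)\ra H^2(M,\wedge^3\Ec\otimes\Ec^\vee)
\]
is the connecting map of the long exact sequence \eqref{dmiomdiorjf94j09} attached to the sequence \eqref{jkkbkbkbk} with $k=2$. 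So it is enough to produce a single class on which $\pt_*$ is non-zero; equivalently, by exactness of the long exact sequence underlying \eqref{dmiomdiorjf94j09}, it is enough to show that the restriction $r_*\colon H^1(M,\Tfr_{\Pi E}[2])\ra H^1(M,\Tfr_M\otimes\wedge^2\Ec)$ is not surjective.

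First I would record the rank-$3$ simplification: there are canonical isomorphisms $\wedge^2 E\cong E^\vee\otimes\det E$ and $\wedge^3 E\otimes E^\vee\cong\det E\otimes E^\vee$, and on $\Cbb\Pbb^2$ one has $\det E\cong\Oc_{\Cbb\Pbb^2}(k)$; hence both end terms of \eqref{jkkbkbkbk} at $k=2$ are built from $\Ec^\vee(k)$. In particular the target of $\pt_*$ is $H^2(M,\wedge^3\Ec\otimes\Ec^\vee)\cong H^2(M,\Ec^\vee(k))$, which is non-zero by the third hypothesis, while its source is non-zero by the first; so the first and third hypotheses are exactly the assertions that $\pt_*$ has non-trivial source and target. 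Since $\dim_\Cbb M=2$ forces $H^3(M,-)=0$, exactness at the $H^2$-terms of the long exact sequence underlying \eqref{dmiomdiorjf94j09} shows that $r_*$ is surjective if and only if
\[
h^2(\Tfr_{\Pi E}[2]) = h^2(\wedge^3\Ec\otimes\Ec^\vee) + h^2(\Tfr_M\otimes\wedge^2\Ec),
\]
the non-strict inequality $\leq$ being automatic. The problem therefore reduces to the \emph{strict} inequality, and the role of the second hypothesis $h^2(\Tfr_M\otimes\wedge^2\Ec)\neq 0$ is to make the right-hand side large enough that any drop at all in $h^2(\Tfr_{\Pi E}[2])$ already yields an obstructed thickening.

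To verify the strict inequality I would dualise: Serre duality on $\Cbb\Pbb^2$ turns each $h^2$ above into the $h^0$ of the dual sheaf twisted by $K_M=\Oc_{\Cbb\Pbb^2}(-3)$, and dualising \eqref{jkkbkbkbk} at $k=2$ and twisting by $\Oc_{\Cbb\Pbb^2}(-3)$ gives the short exact sequence
\[
0 \ra \Om^1_M\otimes\Ec(-k-3) \hookrightarrow \Tfr_{\Pi E}[2]^\vee(-3) \twoheadrightarrow \Ec(-k-3) \ra 0,
\]
so that the desired inequality is equivalent to the edge map $H^0(\Tfr_{\Pi E}[2]^\vee(-3))\ra H^0(\Ec(-k-3))$ failing to be onto. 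This is where the main difficulty lies: it cannot be extracted from dimensions alone, since Serre duality and the additivity of Euler characteristics are insensitive to it; one genuinely needs that the extension class of $\Tfr_{\Pi E}[2]$ in $\mathrm{Ext}^1(\Tfr_M\otimes\wedge^2\Ec,\wedge^3\Ec\otimes\Ec^\vee)$ is non-trivial, and the cocycle computation in Lemma \ref{lkliioolkss} pins $\pt_*$ down as cup product against the Atiyah-type class $\{\pt\zeta_{VW}/\pt y\}$ of $E$. I would then finish by restricting to the two cases named in this section ($E$ split, and $E$ non-split but decomposable): decomposing $E$ into line bundles presents $\Tfr_{\Pi E}[2]$, $\wedge^3\Ec\otimes\Ec^\vee$ and $\Tfr_M\otimes\wedge^2\Ec$ as explicit twists of $\Oc_{\Cbb\Pbb^2}$ and of the tangent sheaf, the Bott formula \eqref{BOTT} (together with Serre duality for the tangent-sheaf factors) evaluates every group in sight, and under the numerical constraints $(1)$--$(3)$ the count returns $h^2(\Tfr_{\Pi E}[2])$ strictly below the sum; equivalently, one writes down an explicit \v{C}ech $1$-cocycle representing a class $\om_{(M,E)}$ with $\pt_*(\om_{(M,E)})\neq 0$, exactly as in the proof of Theorem \ref{jnvjknknvkjnvk} with $m=2$.
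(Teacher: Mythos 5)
Your reductions follow the paper's proof closely: via Proposition \ref{popiioh4o8fh84o} the claim becomes that the connecting map $\pt_*$ in the long exact sequence for \eqref{jkkbkbkbk} at $k=2$ is non-zero; since $\dim_\Cbb M = 2$ kills $H^3$, this is equivalent to the inclusion-induced map $s^2_*\colon H^2(\wedge^3\Ec\otimes\Ec^\vee)\to H^2(\Tfr_{\Pi E}[2])$ failing to be injective, that is to the strict inequality $h^2(\Tfr_{\Pi E}[2]) < h^2(\wedge^3\Ec\otimes\Ec^\vee)+h^2(\Tfr_M\otimes\wedge^2\Ec)$; and your rank-$3$ identifications $\wedge^3\Ec\otimes\Ec^\vee\cong\Ec^\vee(k)$ and the Serre-dual reformulation are correct.

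Where you then stop --- remarking that the strict inequality cannot be read off hypotheses (1)--(3) by dimension bookkeeping alone --- is exactly where the paper's own argument is weakest, so your caution is well placed. The paper proceeds by cases on bijectivity of $s^2_*$ and $r^2_*$ in \eqref{dmiomdiorjf94j043433d9}: it handles ``$r^2_*$ bijective'' (then $\img\ s^2_*=\ker r^2_*=0$, so $\ker s^2_*=H^2(\Ec^\vee(k))\neq0$ by~(3)), and it rules out ``$s^2_*$ bijective'' (which would make $r^2_*$ the zero map, contradicting surjectivity of $r^2_*$ onto the non-zero group $H^2(\Tfr_M\otimes\wedge^2\Ec)$ guaranteed by~(2)). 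But the remaining case --- $s^2_*$ injective yet not surjective, $r^2_*$ surjective yet not injective --- is not eliminated: there $\ker s^2_*=0$, so $\pt_*=0$ and $h^2(\Tfr_{\Pi E}[2])=h^2(\wedge^3\Ec\otimes\Ec^\vee)+h^2(\Tfr_M\otimes\wedge^2\Ec)$, which is entirely compatible with (1)--(3). The paper's closing inference ``$s^2_*$ cannot be bijective, which means $\pt_*$ will be non-trivial'' slides from non-bijectivity to non-injectivity. So the gap you flagged is genuine and is shared with the paper's proof; your instinct that one must bring in information beyond the $h^i$'s --- the non-triviality of the extension class of $\Tfr_{\Pi E}[2]$, or an explicit cocycle computation of the kind in Lemma \ref{lkliioolkss} --- is the right additional input, and the concrete verifications in Example \ref{8jf48f4unovu4no4oi} and Proposition \ref{hghgghdvvy3vdy3hdhhgwhfgabababbybwu} are where it would ultimately need to be settled.
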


\begin{proof}
As $M$ is a two-dimensional, complex manifold it follows that $H^i(M, \Fc) = 0$ for all $i> 2$ and any sheaf of abelian groups $\Fc$. This allows us to conclude that the sequence in \eqref{dmiomdiorjf94j09} for $M$ continues as follows:
\begin{align}
\xymatrix{
\ldots\ar[r] & H^1(\Ec^\vee(k))\ar[r]^{s^1_*} & H^1(\Tfr_{\Pi E}[2]) \ar[r]^{r^1_*~~~} & H^1(\wedge^2\Ec\otimes\Tfr_M) \ar[d]^{\pt_*} \\ 
0&\ar[l]H^2(\wedge^2\Ec\otimes\Tfr_M) &\ar[l]_{~~~~~~r^2_*}H^2(\Tfr_{\Pi E}[2]) & \ar[l]_{~~s^2_*}H^2(\Ec^\vee(k))
}
\label{dmiomdiorjf94j043433d9}
\end{align}
where $H^i(-) = H^i(M, -)$. Our objective is to definitively conclude that the boundary map $\pt_*$ is non-trivial under the hypotheses in \eqref{dmoeiciojciojc904jc409}, for then we can conclude that the complement of $\img~r^1_*$ in $H^1(\wedge^2\Ec\otimes\Tfr_M)$, denoted $(\img~r^1_*)^\perp$, will be non-empty. This is important since, by exactness of \eqref{dmiomdiorjf94j043433d9}, Lemma \ref{lkliioolkss} and Theorem \ref{jnvjknknvkjnvk}, the set $(\img~r^1_*)^\perp$ will contain trivialisations for obstructed, second order thickenings. Now, the former two conditions in \eqref{dmoeiciojciojc904jc409} are clearly necessary conditions. To see that all three conditions in \eqref{dmoeiciojciojc904jc409} are sufficient to deduce $(\img~r^1_*)^c\neq\eset$, firstly note by exactness of \eqref{dmiomdiorjf94j043433d9} that: $\img~ \pt_* = \ker~s_*^2$. Hence it suffices to characterise the map $s^2_*$. Now, by exactness again we know that $r^2_*$ will be surjective and non-trivial. If $r^2_*$ is \emph{bijective}, then:
\begin{align}
\{0\} = \ker r^2_* = \img~s^2_* \LRa \ker~s_*^2\neq0 \LRa \img~\pt_*\neq0.
\label{ijoijoij494kk39kpk}
\end{align}
Hence, if $r^2_*$ is bijective, we see that $\pt_*$ will be non-trivial. Suppose however $r_*^2$ is \emph{not} bijective. Then, if $s^2_*$ is not bijective either, we may use the reasoning in \eqref{ijoijoij494kk39kpk} again to deduce that $\pt_*$ is non-trivial. However, if $s_*^2$ \emph{is} bijective, then $\pt_*$ must necessarily be trivial, but note that bijectivity of $s^2_*$ will contradict the surjectivity of $r^2_*$. Hence $s^2_*$ cannot be bijective, which means $\pt_*$ will be non-trivial. The lemma now follows. 
\end{proof}

\begin{EX}\label{8jf48f4unovu4no4oi}
(Split, rank $3$ vector bundles on $\Cbb\Pbb^2$) From \eqref{BOTT} we have:
\begin{align}
h^0(\Tfr_{\Cbb\Pbb^2}(l))&\neq 0 \iff l>2
\label{splitvdcp3condn1}
\\
h^1(\Tfr_{\Cbb\Pbb^2}(l)) &\neq 0\iff l=-3
\label{splitvdcp3condn2}
\\
h^2(\Oc_{\Cbb\Pbb^2}(l))&\neq 0\iff l<-3.
\label{splitvdcp3condn3}
\end{align}
Now let $\Ec = \bigoplus_{a=1}^3\Oc_{\Cbb\Pbb^2}(k_a)$ be a split, rank $3$, holomorphic vector bundle on $\Cbb\Pbb^2$.\footnote{Recall that a vector bundle is split if it can be written as a sum of line bundles. Now on any projective space $\Cbb\Pbb^n$, any holomorphic line bundle will be of the form $\Oc_{\Cbb\Pbb^n}(k)$ for some integer $k\in \Zbb$ (see, e.g., \cite[p. 145]{HARTALG}). Hence if $E\ra\Cbb\Pbb^2$ is split and of rank $3$, we can write $\Ec = \bigoplus_{a=1}^3\Oc_{\Cbb\Pbb^2}(k_a)$ for some triple of integers $k_1, k_2, k_2$.}
 The degree of $\Ec$ is $k = k_1 + k_2+k_3$ and the second exterior power is given by:
\[
\wedge^2\Ec = \Oc_{\Cbb\Pbb^2}(k_1+k_2) \oplus \Oc_{\Cbb\Pbb^2}(k_1+k_3)\oplus \Oc_{\Cbb\Pbb^2}(k_2+k_3).
\]
Now, the sheaf cohomology functor $H^i(-)$ on projective space commutes with (countably-many) direct sums, as discussed in \cite[p. 209]{HARTALG}, and so $h^i(\Ec) = \sum_{a=1}^3h^i(\Oc_{\Cbb\Pbb^2}(k_a))$. Then, in order to ensure \eqref{dmoeiciojciojc904jc409}, it suffices to choose the triple $(k_1, k_2, k_3)$ such that \eqref{splitvdcp3condn1}, \eqref{splitvdcp3condn2} and \eqref{splitvdcp3condn3} hold. This leads to the following constraints:
\begin{align}
k_1 + k_2>2
&&
k_1 + k_3 = -3
&&
\mbox{and}
&&
k_2 + k_3 < -3.
\label{jfuf484h84frifoe}
\end{align}
Evidently, a solution to \eqref{jfuf484h84frifoe} exists for any distinct pair of integers $(k_1, k_2)$ which satisfy $k_1 + k_2>2$. Then for such an $\Ec$ the conditions in \eqref{dmoeiciojciojc904jc409} will hold and so, by Lemma $\ref{dmoeiciojciojc904jc409dmoeiciojciojc904jc409}$, there will exist obstructed, second order thickenings $\Xfr_{(\Cbb\Pbb^2, E)}^{(2)}$ of $\Pi E^{(1)}$. 
\end{EX}

With regards to non-split bundles, it is not so straightforward to deduce the existence of obstructed thickenings as in the split case in Example \ref{8jf48f4unovu4no4oi}. In this article we will consider the next logical step after rank 3, split bundles being: rank 3, non-split, decomposable bundles.

\subsection{Non-split, Decomoposable Vector Bundles}
Non-split, decomposable vector bundles of any rank (greater than $3$) exist on the projective plane by virtue of the construction of indecomposable bundles (of any rank) by Schwarzenberger in \cite{SCH}. In rank 3 any non-split, decomposable bundle must necessarily be a direct sum of a rank 2, indecomposable bundle with a line bundle. Now suppose $E\ra \Cbb\Pbb^2$ has rank 3. Then $\wedge^3\Ec$ will be a line bundle and so $\wedge^3\Ec = \Oc_{\Cbb\Pbb^2}(k)$, for $k = \deg(E)$. We set $\Ec^\vee(k) := \Ec^\vee\otimes \Oc_{\Cbb\Pbb^2}(k)$. Our method of inferring the existence of obstructed thickenings here will follow that in Example \ref{8jf48f4unovu4no4oi}. That is, we will appeal to Lemma \ref{dmoeiciojciojc904jc409dmoeiciojciojc904jc409}. To this extent we present the following construction of a rank 2, indecomposable, holomorphic vector bundle from \cite{OSS}.

\begin{CON}\label{dnnriufj4jf904j094c}
Let $Y\subset \Cbb\Pbb^2$ comprise a collection of $m$ points, $m>0$, and let $\Jc_Y\subset \Cc_{\Cbb\Pbb^2}$ be the sub-sheaf of holomorphic functions on $\Cbb\Pbb^2$ which vanish on $Y$, i.e., an ideal sheaf. Then in \emph{\cite[p. 53]{OSS}} it is constructed a rank $2$, holomorphic vector bundle $F\ra\Cbb\Pbb^2$ and a global section $s\in H^0(\Cbb\Pbb^2, \Fc)$ 
such that firstly, for a fixed integer $k^\p<3$,
\begin{align}
c_1(F) = k^\p;&&c_2(F) = m
&&\mbox{and}&&
Y = \{\mbox{zeroes of $s$}\};
\label{dnieief894jf9j44j}
\end{align}
and secondly that the sheaf of holomorphic sections $\Fc$ of $F$ fits into the exact sequence:
\begin{align}
0 \ra \Cc_{\Cbb\Pbb^2}\stackrel{\cdot s}{\hookrightarrow} \Fc\twoheadrightarrow \Jc_Y(k^\p)\ra0
\label{deioieojf4jj43djd}
\end{align}
where $\Jc_Y(k^\p) = \Jc_Y\otimes\Oc_{\Cbb\Pbb^2}(k^\p)$ and $\Fc$ the sheaf of sections of $F$. We will be interested in the degree of $\Fc$ which, from \eqref{dnieief894jf9j44j} is $k^\p$ and so, henceforth, we will denote the bundle described here by $\Fc_{k^\p}$.
\end{CON}

More generally, a construction of rank 2, indecomposable bundles of a similar nature to that of $\Fc_{k^\p}$ in Construction \ref{dnnriufj4jf904j094c} on complex surfaces other than $\Cbb\Pbb^2$ is given in \cite[p. 726]{PAGGH}. We limit our considerations here to the projective plane. Integral to inferring the existence of some $E\ra \Cbb\Pbb^2$ such that \eqref{dmoeiciojciojc904jc409} holds is another famous theorem of Serre, which we state from \cite{OSS}:

\begin{THMSA}
Let $\Fc$ be a coherent, analytic sheaf on $\Cbb\Pbb^n$. Then there exists a $k_0\in \Zbb$ such that, for any $l\geq k_0$, the sheaf $\Fc(l)$ is generated by its global sections.\qed
\end{THMSA}

To elaborate on Serre's Theorem A, a sheaf $\Fc$ is said to be generated by its global sections if the evaluation map $H^0(\Fc) \otimes \Oc \ra \Fc$ is surjective, where $\Oc$ denotes the structure sheaf. In particular, if $\Fc$ is a holomorphic vector bundle on $\Cbb\Pbb^n$, we see that $h^0(\Fc(l))\neq 0$ for all $l\geq k_0$ and some $k_0\in \Zbb$. Now consider the bundle 
\[
\Ec_{(k^\p, l)} := \Fc_{k^\p}\oplus \Oc_{\Cbb\Pbb^2}(l), 
\]
for some $l$. Then $\Ec_{(k^\p, l)}$ will be the sheaf of holomorphic sections of a non-split, decomposable, rank 3 vector bundle $E_{(k^\p, l)}$ over $\Cbb\Pbb^2$. By construction we have that $\deg E_{(k^\p, l)} = k^\p + l$.

\begin{PROP}\label{hghgghdvvy3vdy3hdhhgwhfgabababbybwu}
For sufficiently small $l$ the bundle $\Ec_{(k^\p, l)}$ will be such that \eqref{dmoeiciojciojc904jc409} will be satisfied. 
\end{PROP}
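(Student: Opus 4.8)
The plan is to verify the three conditions in \eqref{dmoeiciojciojc904jc409} for $\Ec_{(k^\p, l)} = \Fc_{k^\p}\oplus \Oc_{\Cbb\Pbb^2}(l)$ when $l$ is sufficiently negative, exploiting that cohomology commutes with the direct sum and that the relevant sheaves split into a contribution from $\Fc_{k^\p}$ and a contribution from $\Oc_{\Cbb\Pbb^2}(l)$, the latter being computable via the Bott formula \eqref{BOTT}. Writing $\Tfr_{\Cbb\Pbb^2} \cong \Om^1_{\Cbb\Pbb^2}(3)$ (so that $h^q(\Tfr_{\Cbb\Pbb^2}(j)) = h^q(\Om^1_{\Cbb\Pbb^2}(j+3))$), I would first expand $\wedge^2\Ec_{(k^\p, l)} = \wedge^2\Fc_{k^\p} \oplus \big(\Fc_{k^\p}\otimes\Oc_{\Cbb\Pbb^2}(l)\big)$; since $\Fc_{k^\p}$ has rank $2$ we have $\wedge^2\Fc_{k^\p} = \Oc_{\Cbb\Pbb^2}(k^\p)$, a line bundle, so
\[
\Tfr_{\Cbb\Pbb^2}\otimes\wedge^2\Ec_{(k^\p, l)} \;\cong\; \Tfr_{\Cbb\Pbb^2}(k^\p) \;\oplus\; \big(\Tfr_{\Cbb\Pbb^2}\otimes\Fc_{k^\p}(l)\big).
\]
Similarly, using $\wedge^3\Ec_{(k^\p,l)} = \Oc_{\Cbb\Pbb^2}(k^\p + l)$ and dualizing, $\Ec_{(k^\p,l)}^\vee(k^\p+l) \cong \big(\Fc_{k^\p}^\vee(k^\p+l)\big) \oplus \Oc_{\Cbb\Pbb^2}(k^\p)$, and since $\Fc_{k^\p}$ is rank $2$ one has $\Fc_{k^\p}^\vee \cong \Fc_{k^\p}(-k^\p)$, so this summand is $\Fc_{k^\p}(l)$; altogether $\Ec_{(k^\p,l)}^\vee(k^\p+l) \cong \Fc_{k^\p}(l)\oplus \Oc_{\Cbb\Pbb^2}(k^\p)$.

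Next I would handle each of the three non-vanishing requirements. For $h^2(\Ec_{(k^\p,l)}^\vee(k^\p+l))\neq 0$: it suffices that $h^2(\Oc_{\Cbb\Pbb^2}(k^\p))\neq 0$, which by \eqref{splitvdcp3condn3} holds precisely when $k^\p < -3$; since Construction \ref{dnnriufj4jf904j094c} only requires $k^\p<3$, I impose the stronger hypothesis $k^\p\leq -4$ on the starting bundle (this is harmless, as such $\Fc_{k^\p}$ exist). For $h^2(\Tfr_{\Cbb\Pbb^2}\otimes\wedge^2\Ec_{(k^\p,l)})\neq 0$: it suffices that $h^2(\Tfr_{\Cbb\Pbb^2}(k^\p))\neq 0$, i.e. $h^2(\Om^1_{\Cbb\Pbb^2}(k^\p+3))\neq 0$, which by \eqref{BOTT} (the $q=n=2$, $p=1$ case) holds when $k^\p + 3 < p - n = -1$, i.e. again $k^\p < -4$; so the same hypothesis on $k^\p$ does double duty. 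For $h^1(\Tfr_{\Cbb\Pbb^2}\otimes\wedge^2\Ec_{(k^\p,l)})\neq 0$: here the summand $\Tfr_{\Cbb\Pbb^2}\otimes\Fc_{k^\p}(l)$ is the useful one. Tensoring the defining sequence \eqref{deioieojf4jj43djd} of $\Fc_{k^\p}$ (twisted by $\Oc_{\Cbb\Pbb^2}(l)$ and by the locally free $\Tfr_{\Cbb\Pbb^2}$, which preserves exactness) gives
\begin{align*}
0 \ra \Tfr_{\Cbb\Pbb^2}(l) \ra \Tfr_{\Cbb\Pbb^2}\otimes\Fc_{k^\p}(l) \ra \Tfr_{\Cbb\Pbb^2}\otimes\Jc_Y(k^\p + l) \ra 0.
\end{align*}
The point is that $H^1(\Tfr_{\Cbb\Pbb^2}\otimes\Fc_{k^\p}(l))$ surjects onto $H^1(\Tfr_{\Cbb\Pbb^2}\otimes\Jc_Y(k^\p+l))$ modulo the image of $H^1(\Tfr_{\Cbb\Pbb^2}(l))$; and for $l$ very negative the latter group vanishes for dimension reasons (or can be controlled by Bott), while for very negative $l$ the sheaf $\Tfr_{\Cbb\Pbb^2}\otimes\Jc_Y(k^\p+l)$ has large $h^1$ — indeed, since $\Jc_Y$ differs from $\Cc_{\Cbb\Pbb^2}$ only at the $m$ points of $Y$, from the sequence $0\to\Jc_Y\to\Cc_{\Cbb\Pbb^2}\to\Cc_Y\to 0$ twisted appropriately one sees $h^1(\Tfr_{\Cbb\Pbb^2}\otimes\Jc_Y(j))$ grows with $-j$ once $h^2(\Tfr_{\Cbb\Pbb^2}(j))$ does, which by the previous computation happens for $j<-4$. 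Hence for $l$ sufficiently small all three conditions hold.

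The main obstacle, and the step I would be most careful with, is the first condition $h^1(\Tfr_{\Cbb\Pbb^2}\otimes\wedge^2\Ec_{(k^\p,l)})\neq 0$: unlike the $h^2$ conditions, it is not simply read off from Bott on a line-bundle summand, because the rank-$2$ summand $\wedge^2\Fc_{k^\p}$ collapses to a line bundle of degree $k^\p$, and by \eqref{splitvdcp3condn2} the split contribution $h^1(\Tfr_{\Cbb\Pbb^2}(k^\p))$ is nonzero only for the single value $k^\p = -3$, which conflicts with the $k^\p\leq -4$ needed elsewhere. So the nonvanishing of $H^1$ must genuinely come from the mixed summand $\Tfr_{\Cbb\Pbb^2}\otimes\Fc_{k^\p}(l)$, and establishing that requires the exact-sequence argument above together with a clean vanishing statement for $H^1(\Tfr_{\Cbb\Pbb^2}(l))$ and a growth estimate for $h^1$ of the twisted ideal sheaf. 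An alternative, possibly cleaner route is to compute $h^2(\Tfr_{\Cbb\Pbb^2}\otimes\Fc_{k^\p}(l))$ via Serre duality and the long exact sequence, deduce that it is nonzero and grows as $l\to-\infty$, and then use the Euler characteristic (Riemann–Roch / Bott for the line-bundle pieces, plus $\chi$ being additive in the defining sequence of $\Fc_{k^\p}$) to force $h^1$ to also be nonzero for $l$ small — since $h^0$ of a twist by a very negative line bundle vanishes and $\chi = h^0 - h^1 + h^2$. I would carry out this $\chi$-based version as the backbone, invoking Serre's Theorem A only where a positivity/global-generation input is genuinely needed, and record the resulting explicit bound on $l$ (in terms of $k^\p$ and $m = c_2(\Fc_{k^\p})$) in the proof.
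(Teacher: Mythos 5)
Your decomposition of $\wedge^2\Ec_{(k^\p,l)}$ and of $\Ec_{(k^\p,l)}^\vee(k^\p+l)$ is correct, and so is the observation that cohomology distributes over the direct sums. The problem is the choice $k^\p\leq -5$. You make that choice so the line-bundle summand $\Oc_{\Cbb\Pbb^2}(k^\p)$ alone yields the two $h^2$ conditions, at the cost of killing its contribution to $h^1$, and you then hope to recover $h^1(\Tfr_{\Cbb\Pbb^2}\otimes\wedge^2\Ec_{(k^\p,l)})\neq 0$ from the mixed summand $\Tfr_{\Cbb\Pbb^2}\otimes\Fc_{k^\p}(l)$. That cannot work asymptotically: for $l\ll 0$ one has $h^0(\Tfr_{\Cbb\Pbb^2}\otimes\Fc_{k^\p}(l))=0$, while by Serre duality $h^2(\Tfr_{\Cbb\Pbb^2}\otimes\Fc_{k^\p}(l))=h^0(\Om^1_{\Cbb\Pbb^2}\otimes\Fc^\vee_{k^\p}(-l-3))$; and once $-l-3\gg 0$, Serre vanishing kills the higher cohomology of $\Om^1_{\Cbb\Pbb^2}\otimes\Fc^\vee_{k^\p}(-l-3)$, so its $h^0$ equals its Euler characteristic, which in turn equals $\chi(\Tfr_{\Cbb\Pbb^2}\otimes\Fc_{k^\p}(l))$ since Serre duality preserves $\chi$ on an even-dimensional variety. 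Hence $h^1 = h^0 + h^2 - \chi = 0$ for $l$ sufficiently negative, and with $k^\p\leq -5$ the first condition in \eqref{dmoeiciojciojc904jc409} actually \emph{fails}. The $\chi$-based route you flag as a backup therefore proves the opposite of what you want, and the exact-sequence route fares no better: the connecting map $H^1(\Tfr_{\Cbb\Pbb^2}\otimes\Jc_Y(k^\p+l))\to H^2(\Tfr_{\Cbb\Pbb^2}(l))$ has source of bounded dimension (essentially $2m$) but target growing quadratically in $-l$, and the foregoing $\chi$ argument shows it is in fact injective for $l\ll 0$, so $H^1(\Tfr_{\Cbb\Pbb^2}\otimes\Fc_{k^\p}(l))$ vanishes.

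The paper makes the opposite, sharper choice $k^\p=-3$. Then the line-bundle summand alone gives $h^1(\Tfr_{\Cbb\Pbb^2}(-3))=h^1(\Om^1_{\Cbb\Pbb^2}(0))=1$ by the Bott formula \eqref{BOTT}, so the first condition holds for every $l$ with no further work; that summand contributes nothing to either $h^2$, which are instead supplied by the $\Fc_{-3}(l)$-summand via Serre duality and Serre's Theorem A once $l\ll 0$. This division of labour between the two summands, the line bundle handling $h^1$ and the $\Fc$-summand handling $h^2$, is what makes the argument close, and the Bott formula pins it to $k^\p=-3$ exactly rather than to any $k^\p\leq -5$.
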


\begin{proof}
We firstly have:
\begin{align}
\wedge^2\Ec_{(k^\p, l)} 
&\cong
\left(\wedge^2\Fc_{k^\p}\otimes \wedge^0\Oc_{\Cbb\Pbb^2}(l)\right)\oplus \left(\wedge^1\Fc_{k^\p}\otimes \wedge^1\Oc_{\Cbb\Pbb^2}(l)\right)
\notag
\\
&=
\Oc_{\Cbb\Pbb^2}(k^\p) \oplus \Fc_{k^\p}(l).
\label{040f903joi3mpwmvmep}
\end{align}
Now, the sheaf cohomology functor $H^i(\Cbb\Pbb^n, -)$ will commute with countably-many direct sums so we may deduce that $h^i(\Fc\oplus \Gc) = h^i(\Fc) + h^i(\Gc)$. Using this, Serre duality and \eqref{BOTT}, it will then follow from \eqref{040f903joi3mpwmvmep} that
\begin{align}
h^1(\Tfr_{\Cbb\Pbb^2}\otimes\wedge^2\Ec_{(k^\p, l)}) \geq h^1(\Om^1_{\Cbb\Pbb^2}(-k^\p-3)) = 1~~\mbox{iff}~k^\p = -3.
\label{fuihhfi4hf4h89h94}
\end{align}
Now, recall that we must have $k^\p< 3$ by construction of $\Fc_{k^\p}$ in Construction \ref{dnnriufj4jf904j094c}. In setting $k^\p = -3$, we will be assured in $h^1(\Tfr_{\Cbb\Pbb^2}\otimes\wedge^2\Ec_{(k^\p, l)})\neq0$ from \eqref{fuihhfi4hf4h89h94}. Now, as a result of setting $k^\p = -3$, note from \eqref{BOTT} that:
\begin{align}
h^2(\Tfr_{\Cbb\Pbb^2}\otimes\wedge^2\Ec_{(-3, l)})
&=
h^0(\Om^1_{\Cbb\Pbb^2} (0)) + h^0(\Om^1_{\Cbb\Pbb^2}\otimes \Fc^\vee_{-3}(-l-3))
\notag
\\
&= h^0(\Om^1_{\Cbb\Pbb^2}\otimes \Fc^\vee_{-3}(-l-3)).
\label{neiofio3f9830j303}
\end{align}
We are yet to impose any constraints on $l$ here. In appealing to Serre's Theorem A, we may choose $l$ sufficiently small (i.e., sufficiently negative) so that $-l-3 \gg 0$. This will ensure that $h^0(\Om^1_{\Cbb\Pbb^2}\otimes \Fc^\vee_{-3}(-l-3))\neq0$. Similarly, regarding the latter-most cohomology group in \eqref{dmoeiciojciojc904jc409}, we have:
\begin{align*}
h^2(\Ec^\vee(-3+l)) &= h^2(\Fc^\vee_{-3}(-3+l)) + h^2(\Oc_{\Cbb\Pbb^2}(-3))
= h^0(\Fc_{-3}(-l+3)) 
\end{align*}
and just as in \eqref{neiofio3f9830j303} we see, for sufficiently small $l$, that $h^0(\Fc_{-3}(-l+3))>0$ by Serre's Theorem A. The proposition now follows.
\end{proof}

From Lemma \ref{dmoeiciojciojc904jc409dmoeiciojciojc904jc409} and Proposition \ref{hghgghdvvy3vdy3hdhhgwhfgabababbybwu} we conclude:

\begin{THM}
Let $E_{(-3, l)} \ra \Cbb\Pbb^2$ be the vector bundle whose sheaf of sections is $\Ec_{(-3, l)}$. Then for sufficiently small $l$, there will exist an obstructed, second order thickening of $\Pi E^{(1)}_{(-3, l)}$. \qed
\end{THM}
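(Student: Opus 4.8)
The plan is to combine Lemma \ref{dmoeiciojciojc904jc409dmoeiciojciojc904jc409} with Proposition \ref{hghgghdvvy3vdy3hdhhgwhfgabababbybwu}; no new machinery is needed. First I would recall that $\Ec_{(-3,l)} = \Fc_{-3}\oplus\Oc_{\Cbb\Pbb^2}(l)$ is, by Construction \ref{dnnriufj4jf904j094c} together with the direct-sum decomposition, the sheaf of holomorphic sections of a genuine non-split, decomposable, rank-$3$ holomorphic vector bundle $E_{(-3,l)}$ over $\Cbb\Pbb^2$ --- the constraint $k^\p<3$ from Construction \ref{dnnriufj4jf904j094c} being satisfied since here $k^\p = -3$ --- and that $\deg E_{(-3,l)} = k^\p + l = -3 + l =: k$. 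Thus $\Cbb\Pbb^2$ is a $2$-dimensional complex manifold and $E_{(-3,l)}$ is a rank-$3$ bundle of degree $k = -3+l$, so the hypotheses of Lemma \ref{dmoeiciojciojc904jc409dmoeiciojciojc904jc409} on the \emph{type} of $(M,E)$ are in place.

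Next I would invoke Proposition \ref{hghgghdvvy3vdy3hdhhgwhfgabababbybwu}: it produces an integer $l_0$ such that for every $l\leq l_0$ all three non-vanishing conditions of \eqref{dmoeiciojciojc904jc409} hold for $E = E_{(-3,l)}$. Concretely, the first condition $h^1(\Tfr_{\Cbb\Pbb^2}\otimes\wedge^2\Ec_{(-3,l)})\neq 0$ holds for all $l$ once $k^\p=-3$, since that group surjects onto (contains a copy of) $H^1(\Om^1_{\Cbb\Pbb^2}(0))$ which is one-dimensional by the Bott formula \eqref{BOTT}; the second condition reduces by \eqref{neiofio3f9830j303} to $h^0(\Om^1_{\Cbb\Pbb^2}\otimes\Fc^\vee_{-3}(-l-3))\neq 0$, which holds once $-l-3 \gg 0$ by Serre's Theorem A; and the third reduces to $h^0(\Fc_{-3}(-l+3))\neq 0$, again guaranteed for $-l \gg 0$ by Serre's Theorem A. Taking $l$ sufficiently negative makes both Serre-Theorem-A hypotheses hold simultaneously, so a single choice of $l$ realises \eqref{dmoeiciojciojc904jc409}.

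Finally I would apply Lemma \ref{dmoeiciojciojc904jc409dmoeiciojciojc904jc409} directly to $M = \Cbb\Pbb^2$ and $E = E_{(-3,l)}$ with this choice of $l$: its conclusion is precisely the existence of an obstructed, second order thickening $\Xfr^{(2)}_{(\Cbb\Pbb^2,\,E_{(-3,l)})}$ of $\Pi E^{(1)}_{(-3,l)}$, which is the assertion of the theorem. Since both ingredients are already established, there is essentially no remaining obstacle; the only point demanding care --- and the reason for the phrase ``sufficiently small $l$'' --- is ensuring that one value of $l$ forces all three conditions of \eqref{dmoeiciojciojc904jc409} at once, which is handled as above because the two genuinely restrictive conditions are each monotone in the single quantity $-l$ and the remaining one is automatic in that regime.
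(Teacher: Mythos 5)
Your proposal is correct and takes exactly the paper's approach: the theorem is stated in the paper as a direct consequence of Lemma \ref{dmoeiciojciojc904jc409dmoeiciojciojc904jc409} and Proposition \ref{hghgghdvvy3vdy3hdhhgwhfgabababbybwu}, and you apply the same two results. Your additional care in verifying that a single sufficiently negative $l$ satisfies all three conditions of \eqref{dmoeiciojciojc904jc409} simultaneously (the two restrictive ones being monotone in $-l$) is a useful clarification, but it does not change the method.
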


\section{Concluding Remarks}

\noindent
A common technique in studies of supersymmetric sigma-models and field theories is to move from the superspace formulation to the component formulation, which is achieved by carrying out a Berezin integral at some stage. See \cite{FREEDSUSY, DELFREEDSUSY} for details of this general procedure for superspace Lagrangians. Importantly, employing this technique allows for well studied methods from geometry to become readily applicable to calculate quantities of interest. There are however instances in which this technique will not be well-defined and this issue, of well-definedness, can be directly related to the subtleties of obstruction theory. A motivating example highlighting precisely this issue is superstring theory---the quantities of interest here being scattering amplitudes for the superstring. For more on this topic see \cite{HOKERPHONG1, HOKER1}. It is a hope of the author that obstruction theory can provide insights into supersymmetric theories more generally. For instance, as a means to study certain supersymmetric field theories within the superspace formulation itself. At the very least, it seems clear that to circumvent some of the issues plaguing superstring theory, further developments in complex supergeometry will be desirable and these will certainly require a greater understanding of obstruction theory.

\bibliographystyle{alpha}
\bibliography{Bibliography}

\hfill
\\
\noindent
\small
\textsc{
Kowshik Bettadapura, 
\\\\
Mathematical Sciences Institute, Australian National University, Canberra, ACT 2601, Australia
\\\\
Yau Mathematical Sciences Center, Tsinghua University, Beijing, Haidian, 100084, China}
\\\\
\emph{E-mail address:} \href{mailto:kowshik@mail.tsinghua.edu.cn}{kowshik@mail.tsinghua.edu.cn}

\end{document}